\documentclass[letterpaper, 10 pt, journal, twoside, romanappendices]{IEEEtran}

\IEEEoverridecommandlockouts                              %

\usepackage{siunitx}
\usepackage{cite}
\usepackage{amsmath,amssymb,amsfonts, mathtools, amsthm}
\usepackage[table]{xcolor}
\usepackage[colorlinks,
  linkcolor=blue,
  citecolor=blue, urlcolor=.]{hyperref}
\usepackage{bm}
\usepackage{enumitem}
\usepackage[capitalise]{cleveref}
\usepackage{algorithm,algorithmic}
\newcommand{\crefdefpart}[2]{%
  \namecref{#1}~\hyperref[#2]{\labelcref*{#1}\ \ref*{#2}}%
}

\newtheorem{assumption}{Assumption}
\newtheorem{definition}{Definition}
\newtheorem{lemma}{Lemma}
\newtheorem{proposition}{Proposition}
\newtheorem{remark}{Remark}
\newtheorem{theorem}{Theorem}

\newtheorem{example}{Example}

\DeclareMathOperator*{\argmin}{arg\,min}

\DeclareMathOperator*{\epi}{epi}

\newcommand{\x}{\boldsymbol{x}}

\newcommand{\y}{\boldsymbol{y}}
\newcommand{\dd}{\mathop{}\!\mathrm{d}}
\newcommand{\inp}{\boldsymbol{u}}
\newcommand{\w}{\boldsymbol{w}}

\newcommand{\R}{\mathbb{R}}
\newcommand{\N}{\mathbb{N}}
\newcommand{\Prob}{\mathbb{P}}
\newcommand{\E}{\mathbb{E}}
\newcommand{\Q}{\mathbb{Q}}

\newcommand{\B}{\mathbb{B}}
\newcommand{\A}{\mathbf{A}}
\newcommand{\Emat}{\mathbf{E}}
\newcommand{\Z}{\mathcal{Z}}
\newcommand{\Pset}{\mathcal{P}(\mathcal{Z})}
\newcommand{\Bmat}{\mathbf{B}}
\newcommand{\Symm}{\mathbb{S}}

\makeatletter
\newcounter{relctr}
\everydisplay\expandafter{\the\everydisplay\setcounter{relctr}{0}}
\renewcommand*\therelctr{\alph{relctr}}
\AtBeginDocument{\let\originallabel\label} %

\newcommand\storelabelletter[1]{%
  \expandafter\xdef\csname savedletter@#1\endcsname{\therelctr}%
}

\newcommand\labelrel[2]{%
  \refstepcounter{relctr}%
  \storelabelletter{#2}%
  \stackrel{\textnormal{(\therelctr)}}{\mathstrut{#1}}%
  \originallabel{#2}%
}

\makeatother

\title{\LARGE \bf Sinkhorn Ambiguity Sets for Distributionally Robust Control: Convexity, Weak Compactness, and Tractability
}

\author{Riccardo Cescon, Andrea Martin, and Giancarlo Ferrari-Trecate
\thanks{R. Cescon and G. Ferrari-Trecate
are with the Institute of Mechanical Engineering, EPFL, Switzerland. E-mail addresses: \{riccardo.cescon, giancarlo.ferraritrecate\}@epfl.ch.}
\thanks{A. Martin is with the School of
Electrical Engineering and Computer Science, and Digital Futures, KTH Royal Institute of Technology, Sweden. E-mail address: andrmar@kth.se.}
\thanks{This work was supported as a part of NCCR Automation, a National Centre of Competence in Research, funded by the Swiss National Science Foundation (grant number 51NF40\_225155), and by Digital Futures.}
}

\begin{document}

\maketitle

\thispagestyle{empty}
\pagestyle{empty}
\begin{abstract}
Classical stochastic control assumes perfect knowledge of the uncertainty affecting the plant. In practice, however, such information is often incomplete. To address this limitation, we consider a distributionally robust control (DRC) problem with ambiguity sets defined via the Sinkhorn discrepancy. Compared to other discrepancy measures based on optimal transport, such as the popular Wasserstein distance, the Sinkhorn divergence does not constrain the worst-case distribution to be discrete, and allows combining observed data with prior knowledge in the form of a reference distribution, making this choice particularly suitable when only few noise samples are available for control design. We first study the properties of Sinkhorn ambiguity sets, establishing convexity and weak compactness under standard assumptions. We then leverage these results to prove that, the Sinkhorn DR linear quadratic control problem over linear policies can be solved through convex programming—even in the presence of DR safety constraints. Finally, we validate our theoretical findings and demonstrate the effectiveness of the proposed approach on a trajectory planning example.%

\end{abstract}

\section{Introduction}

Modern engineered systems must operate reliably in safety-critical applications despite uncertainty. 
Prominent examples include autonomous vehicles, electric power grids, and advanced robotic platforms. Traditional stochastic optimal control methods minimize an expected cost under the assumption that the true noise distribution is known \cite{bertsekas2012dynamic}. However, the properties of the uncertainty are often only indirectly observable through a finite number of samples. Therefore, such methods can exhibit unsatisfactory or brittle performance and constraint violations.
In contrast, robust control techniques \cite{zhou1998essentials} bypass a probabilistic modeling and treat uncertainty in an adversarial manner by optimizing for the worst-case scenario. These methods provably guarantee safety when the uncertainty is bounded, yet they tend to be excessively conservative as they safeguard against the least favorable disturbance realization. This conservatism leads to performance degradation in many applications. Motivated by this tradeoff, several approaches have been developed to balance robustness and average-case performance, including mixed $\mathcal{H}_2/\mathcal{H}_\infty$ formulations \cite{bernstein1988lqg, doyle1989optimal} and regret minimization frameworks \cite{martin2025guarantees, goel2023regret, martin2024regret}.
Within this landscape, distributionally robust optimization (DRO) has recently gained momentum in the control community. DRO aims to combine the flexibility of probabilistic modeling with the robustness of worst-case analysis by optimizing the expected performance under the most adverse distribution within a prescribed ambiguity set. Given a nominal probability distribution, ambiguity sets may be defined in various ways, such as through moment constraints \cite{van2015distributionally}, $\phi$-divergence neighborhoods \cite{falconi2025distributionally, petersen2000minimax, fochesato2025distributionally}, or optimal transport (OT) metrics—most notably, the Wasserstein distance \cite{mohajerin2018data}.

Several works have explored DRO-based control with Wasserstein ambiguity sets. 
By considering ambiguity sets centered at a nominal Gaussian distribution, \cite{taskesen2024distributionally, lanzetti2024optimality} prove global optimality of linear policies for DR LQG control. Other works employ data-driven ambiguity sets centered at the empirical distribution to benefit for greater expressiveness and capture multi-modal probabilities. For example, \cite{aolaritei2023wasserstein, coulson2021distributionally, micheli2022data} focus on DR model predictive control (MPC), ensuring recursive feasibility and chance constraint satisfaction. Instead, \cite{brouillon2025distributionally} considers a constrained stochastic control problem with stability guarantees, and assumes that the support of the true distribution lies in a compact polyhedron.

Despite their widespread adoption, Wasserstein-based DRO formulations have notable limitations. From a modeling standpoint, recent theoretical results \cite{gao2023distributionally} show that when the nominal distribution is discrete (e.g., empirical), the worst-case distribution for the Wasserstein problem is also discrete, which may fail to capture the true continuous nature of underlying noises. Moreover, while standard concentration bounds offer a principled way to calibrate the Wasserstein radius and ensure that the ambiguity set contains the true distribution with high probability, an excessively large radius may be required when data are scarce, potentially resulting in overly conservative controllers.

To address these challenges, we propose to model distributional uncertainty using the \emph{Sinkhorn discrepancy} \cite{cuturi2013sinkhorn}, an entropy-regularized variant of the optimal transport distance that has recently garnered attention in machine learning, optimization and control \cite{azizian2023regularization, blanchet2023unifying, dapogny2023entropy, cescon2025data, feng2026sinkhorn}. The Sinkhorn discrepancy offers several advantages over the Wasserstein metric: critically, it allows to shape the support of the worst-case distribution through the choice of a reference probability. This property allows for a richer and potentially more realistic description of uncertainty, as we illustrate in \cref{fig:wc_sinkhorn}. Additionally, the Sinkhorn discrepancy enables practitioners to integrate prior knowledge with empirical data by smoothly penalizing deviations from a reference law.

\textit{Contributions:} Motivated by the advantages of the Sinkhorn discrepancy, this article makes the following contributions. We first establish convexity and weak compactness of Sinkhorn ambiguity sets, a result of independent interest to the optimal transport literature. We then showcase how these properties allow rewriting the infinite-dimensional Sinkhorn DR conditional value-at-risk (CVaR) constraints as a convex finite-dimensional feasible region when the loss is piecewise affine. Our reformulation recovers known results for Wasserstein DRO when the entropic regularization parameter goes to zero. Last, by leveraging these results, we derive a tractable reformulation for the linear quadratic Sinkhorn DR control problem even in the presence of DR safety constraints. The effectiveness of the proposed control scheme is illustrated on a trajectory planning problem involving a Boeing B-747 aircraft operating under realistic wind turbulence models.

A preliminary version of this work has recently appeared in the 64th IEEE Conference on Decision and Control \cite{cescon2025data}. Differently from \cite{cescon2025data}, this paper establishes fundamental geometric and topological properties of Sinkhorn ambiguity sets, showing how these are key to solving Sinkhorn DR constrained control problems through convex programming. Furthermore, this work includes all the technical proofs. Last, new numerical experiments are presented to validate our control design method in the presence of probabilistic safety constraints and realistic noise models.

The remainder of this paper is organized as follows. \Cref{sec:preliminaries} reviews necessary background on DRO and optimal transport. In \cref{sec:distributional_uncertainty}, we characterize the topological and variational properties of the Sinkhorn ambiguity set. In \Cref{sec:optimal_control}, we showcase how the properties established in \Cref{sec:distributional_uncertainty} apply to reformulate the tractable control problem. Numerical experiments illustrating the method and its computational complexity appear in \cref{sec:numerical}.
\newline
\textit{Notation}. Throughout the paper, we denote the set of probability distributions supported on a measurable set $\mathcal{Z}$ by $\mathcal{P}(\mathcal{Z})$. We write $\mu \ll \nu$ to denote that a measure $\mu$ is absolutely continuous with respect to $\nu$. If $\mu, \nu$ are two measures, $\mu\times\nu$ represents the product measure.
For $n \in \mathbb{N}$, we write $[n]$ to denote the set $\{1,\dots, n\}$. Given two vectors $x,y\in\R^d$, we denote their element-wise multiplication with $x\odot y$.The space of all positive semidefinite matrices of size $d$ is denoted by $\mathbb{S}^d$. We denote by $\|\cdot\|$ the Euclidean norm. Given a positive definite matrix $A\in\R^{d\times d}$ and a vector $x\in\R^d$, we let $\|x\|_A = \sqrt{x^\top A x}$. The determinant of a square matrix $A$ is denoted by $|A|$. We denote the $j$-th row of $A$ by $A_j$ and the $j$-th component of the vector $a$ by $a_j$. Finally, the notation $\star^{\top}AB$ is short for $B^{\top}AB$.\looseness=-1

\section{Preliminaries}
\label{sec:preliminaries}
We begin by recalling definitions of discrepancies between probability distributions that will be used throughout the paper. Later in the section, we show in a simple example the expressivity power of the Sinkhorn discrepancy compared to the Wasserstein distance.
\begin{definition}[Transportation Cost Function, {\cite[Definition 2.14]{kuhn2025distributionally}}] A lower semicontinuous function
$c(x, y): \mathcal{Z} \times \mathcal{Z} \rightarrow \R_+$ that satisfies the identity of indiscernibles (i.e. $c(x,y) = 0$ if and only if $x=y$) is a transportation cost function.
\label{def:transport_cost}
\end{definition}
\begin{definition}[OT discrepancy, {\cite[Definition 2.15]{kuhn2025distributionally}}] The optimal transport discrepancy $OT_c: \Pset \times \Pset \rightarrow [0, +\infty]$ associated with any given transportation cost function $c$ is defined through 
\begin{equation}
\label{eq:OT}
    OT_c(\Prob, \Q) = \inf_{\gamma\in\Gamma(\Prob, \Q)}\E_\gamma[c(x, y)]\, ,
\end{equation}
where $\Gamma(\Prob, \Q)$ represents the set of all couplings $\gamma$ between $\Prob$ and $\Q$, that is, all joint probability distributions with marginals $\Prob$ and $\Q$.
\end{definition}
We introduce the Kullback-Leibler (KL) divergence, which will serve as the entropic regularizer in the Sinkhorn discrepancy.
\begin{definition}[KL divergence, {\cite[Definition 2.8]{kuhn2025distributionally}}]
Given $\Prob, \Q\in\mathcal{P}(\mathcal{Z})$ with $\Prob\ll \Q$, the KL divergence or relative entropy between $\Prob$ and $\Q$ is defined as
    \begin{equation*}
    \mathrm{KL}(\Prob\|\Q) = \E_\Prob\left[\log\left(\frac{\dd\Prob(x)}{\dd\Q(x)}\right)\right]\, .
\end{equation*}
\end{definition}
\begin{definition}[Sinkhorn discrepancy, {\cite[Definition 1]{sinkhorn}}]
\label{def:Sinkhorn}
Consider $\Prob, \Q\in\mathcal{P}(\mathcal{Z})$, and let $\mu, \nu$ be reference probability measures over $\mathcal{Z}$ 
such that $\Prob \ll \mu$ and $\Q \ll \nu$. For a given transport cost $c$ and regularization parameter $\epsilon \geq 0$ the Sinkhorn discrepancy 
between $\Prob$ and $\Q$ is defined as\looseness=-1
\begin{equation}
\label{eq:sinkhorn}
   W_c^{\epsilon}(\Prob, \Q) = \inf_{\gamma \in \Gamma(\Prob, \Q)}\left\{\E_\gamma [c(x, y)] + \\ 
   \epsilon \mathrm{KL}(\gamma|\mu\times\nu )\right\}\, .
\end{equation}
\end{definition}
As noted in \cite[Remark 2]{sinkhorn}, any choice of $\Prob\ll\mu$ in \eqref{eq:sinkhorn} is equivalent up to a constant. Since in DRO applications the center is known and fixed, 
without loss of generality, we will choose $\mu = \Prob$ in the subsequent analysis. Note also that the discrepancy $W_c^{0}$ coincides with the OT 
discrepancy in (\ref{eq:OT}). Consequently, if the cost function $c$ is defined as the $p$-th power of some metric in the space $\mathcal{Z}$ we retrieve 
the well-known $p$-th power of the Wasserstein distance between $\Prob$ and $\Q$ denoted with $W_p$, e.g. \cite{villani2009optimal}[Definition 6.1].

Intuitively, as the regularization parameter $\epsilon$ in \eqref{eq:sinkhorn} increases, the transport plan is increasingly penalized for deviating from the structure of the reference measure $\nu$.

Inherently, the quantities $OT_c$, $W_p$ and $W_c^{\epsilon}$ measure how different the probability distributions $\Prob$ and $\Q$ are. 
They also naturally provide a definition of uncertainty in the space of probability distributions. Specifically, we can denote the Sinkhorn ambiguity set, denoted as $\mathcal{S}$-set for brevity, of radius $\rho$ and centered at $\Prob$ by  
\begin{align}
\label{eq:ambiguity_set}
\B_{\rho, \epsilon}(\Prob) &\doteq \{ \Q\in\mathcal{P}(\mathcal{Z}):W_c^{\epsilon}(\Q, \Prob) \leq \rho \} \subset \mathcal{P}(\mathcal{Z})\, .
\end{align}
In words, $\mathbb{B}_{\rho, \epsilon}(\Prob)$ contains all probability distributions that are $\rho$-close to $\Prob$ in the Sinkhorn discrepancy.
Analogously, we can denote the OT and Wasserstein ambiguity sets with $\B_{\rho}(\Prob)$ and $\mathbb{W}_\rho(\Prob)$, respectively.

With the notion of ambiguity set, given a loss function $\ell: \R^d\rightarrow\R$, we can define the worst-case expected cost  as
\begin{equation}
\label{eq:worst-case risk}
\sup_{\Q\in\B_{\rho, \epsilon}(\Prob)}\ \E_{z\sim\Q}[\ell(z)]\, .
\end{equation}
It is also relevant to introduce the following strong dual formulation of \eqref{eq:worst-case risk} that will be used in \cref{sec:optimal_control}:
\begin{equation}
\label{eq:dual}
    \inf_{\lambda\geq0}\ \Biggl\{\! \lambda\rho + \lambda\epsilon\E_{x\sim\Prob}\biggl[\log\E_{z\sim\nu}\Bigl[e^{(\ell(z)-\lambda c(x,z))/(\lambda\epsilon)}\Bigr]\biggr]\!\Biggr\}\,,
\end{equation}
which was derived in \cite{sinkhorn} under the next assumption,
\begin{assumption}
The reference measure $\nu$, the transport cost $c(x,y)$, the function $\ell$, and the joint distribution $\gamma$ satisfy:
\label{assumption}
\begin{enumerate}[label=(\roman*)]
    \item $\nu\{z:0 \leq c(x, z) < \infty\} = 1$ for $\Prob$-almost every $x$;\label{assumption1}
    \item $\E_{z\sim\nu}\left[e^{-c(x, z)/\epsilon}\right] < \infty$ for $\Prob$-almost every $x$;\label{assumption2}
    \item the function $\ell$ is measurable;\label{assumption3}
    \item Every joint distribution $\gamma$ on $\mathcal{Z}\times\mathcal{Z}$ with first marginal distribution $\Prob$ has a regular conditional distribution\footnote{We refer to \cite[Chapter 5]{kallenberg2002foundations} for the concept of regular conditional distribution.} $\gamma_x$ given the value of the first marginal equals $x$.\label{assumption4}
\end{enumerate}
\end{assumption}
\begin{remark}
    We highlight that such assumptions hold in many concrete examples. \crefdefpart{assumption}{assumption1} ensures that almost every point of $\Prob$ can be transported to $\nu$. For example, this is true when considering $\nu$ to be Gaussian and the transport cost is infinite only on a set of points with the same cardinality of the rational numbers. \crefdefpart{assumption}{assumption2} is trivially satisfied if $\nu$ is a \textit{probability} measure as in our case. \crefdefpart{assumption}{assumption3} is fulfilled in several machine learning and control applications, for example when considering quadratic and linear losses. Finally, \crefdefpart{assumption}{assumption4} is always satisfied when $\Prob$ is empirical over a finite set of samples, which is the case considered in this paper. 
\end{remark}
We conclude this section clarifying the role of the reference measure $\nu$ in \eqref{eq:sinkhorn} and the consequent advantage of using the Sinkhorn discrepancy compared to Wasserstein. It has been shown that, when the center is the empirical distribution over $n$ points, the worst-case distribution for Wasserstein DRO is supported on at most $n+1$ points \cite{gao2023distributionally}. On the contrary, the assumption $\Q\ll\nu$ in \cref{def:Sinkhorn} implies that every distribution in the ambiguity set \eqref{eq:ambiguity_set}—and hence also the worst-case one—shares the same support as $\nu$. Therefore, when $\nu$ is any continuous measure in $\R^d$ and $\Prob$ is the empirical distribution, the worst-case distribution is supported in the whole $\R^d$. This is visually demonstrated in the following example.\looseness=-1
\begin{example}[Comparison of Wasserstein and Sinkhorn discrepancies]
Consider the quadratic loss function $\ell(z) = z^\top z$, $z\in\R^2$ and Euclidean transport cost, i.e. $c(x,y) = \|x-y\|^2$. Moreover, assume that the distribution $\Prob$ is discrete and supported on $x_1 = (0.25, 0.75)$ and $x_2 = (0.75, 0.25)$ and reference distribution $\nu\sim\mathcal{N}(0, I)$. From \cite[Remark 4]{sinkhorn}, given the optimal Lagrangian multiplier $\lambda^\star>0$ for the dual problem \eqref{eq:dual}, the Sinkhorn worst-case distribution can be written as 
\begin{equation*}
    \dd\Q_\star(z) = \E_{x\sim\Prob}\left[\alpha_x\exp\left((\ell(z) - \lambda^\star c(x, z))/(\lambda^\star\epsilon)\right)\right]\dd\nu(z)\, ,
\end{equation*}
where $\alpha_x = (\E_{z\sim\nu}[\exp((\ell(z) - \lambda^\star c(x, z))/(\lambda^\star\epsilon))])^{-1}$ is a constant ensuring that $\Q^\star$ is normalized to unity. Furthermore, in the case of quadratic loss function, we can compute the Wasserstein worst-case distribution solving the Quadratically Constrained Quadratic Program (QCQP) in \cite[Theorem 12]{kuhn2019wasserstein}. In \cref{fig:wc_sinkhorn}, we provide a visual comparison between these worst-case distributions. In the Sinkhorn case, the distribution shares the same support as $\nu$, making it continuous. In contrast, the Wasserstein worst-case distribution remains discrete, supported on just two points. If we know that the true underlying distribution is continuous then the Wasserstein case can lead to overly-conservative decisions because it does not exploit the distribution's support information. Additionally, as $\epsilon$ increases, we observe that the Sinkhorn worst-case distribution gradually shifts towards $\nu$, consistent with what we will see in point 3 of \cref{proposition: relationships}.
\end{example}
\begin{figure}
   \centering
   \includegraphics[trim=0 0 0 0, clip, width = \columnwidth]{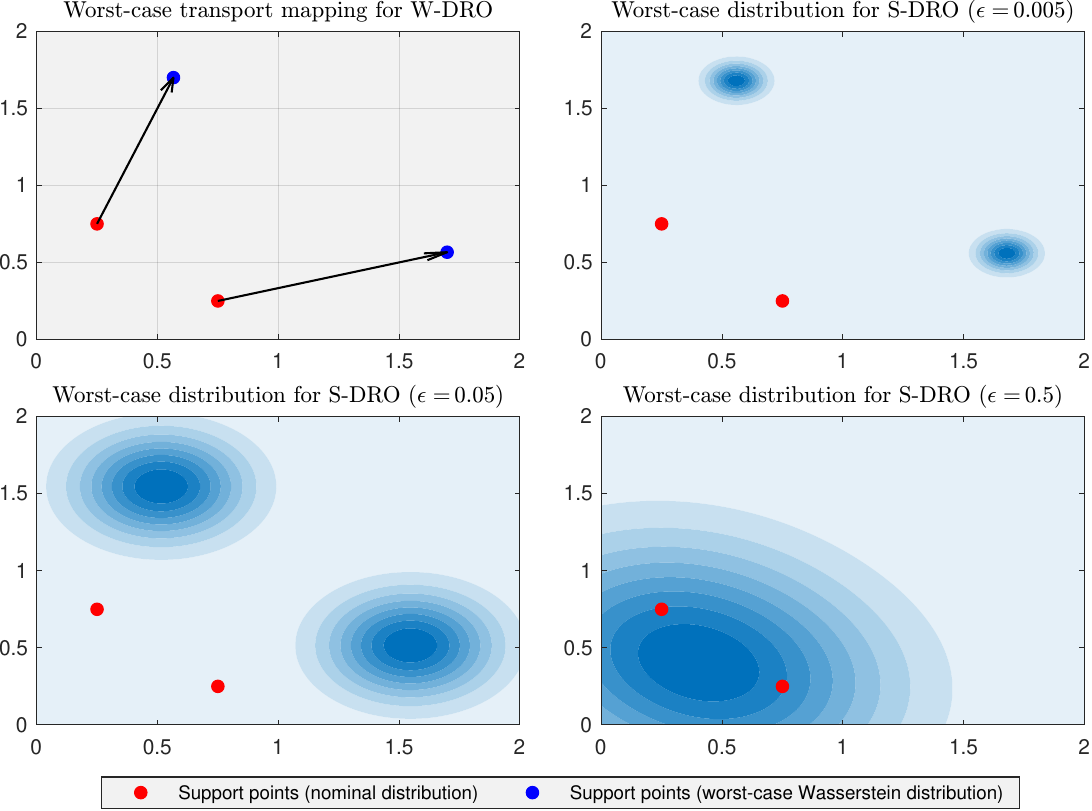}
    \caption{Visualization of worst-case distributions from Wasserstein DRO (top left plot) and Sinkhorn DRO (remaining three plots) with different choices of the regularization parameter $\epsilon$ in \eqref{eq:sinkhorn}. The red dots represent the supporting points of the empirical distribution. The blue dots in the Wasserstein case represent the supporting points of the worst-case distribution, while in the Sinkhorn plots the level sets of the worst-case distribution are depicted in different levels of blue.}
   \label{fig:wc_sinkhorn}
\end{figure}

\section{Properties of Sinkhorn ambiguity sets}
\label{sec:distributional_uncertainty}
The following section is devoted to the study of the $\mathcal{S}$-set. We first present some relationships between OT and $\mathcal{S}$-sets.
\begin{proposition}
\label{proposition: relationships}
Let $\Prob \in\mathcal{P}(\mathcal{Z})$ and fix the radius $\rho \geq 0$. Then, the following relationships hold:
\begin{enumerate}
\item $\B_{\rho, \epsilon}(\Prob) \subseteq \B_{\rho}(\Prob) \quad\forall \epsilon \geq 0$;
\item $\B_{\rho, \epsilon_2}(\Prob) \subseteq \B_{\rho, \epsilon_1}(\Prob) \quad \forall \epsilon_1, \epsilon_2 : 0\leq \epsilon_1 \leq \epsilon_2$;
\item If $\int_{\mathcal{Z}\times\mathcal{Z}} c(x, y)d\Prob(x) d\nu(y) \leq \rho$ then $\B_{\rho, \infty}(\Prob)$ is the singleton $\{\nu\}$ otherwise it is the empty set.
\end{enumerate}
\end{proposition}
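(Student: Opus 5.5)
Our plan is to compare the objective functionals in \eqref{eq:sinkhorn} coupling by coupling, using that the KL term is nonnegative.

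Recall $\mu=\Prob$, so the relevant entropic term is $\mathrm{KL}(\gamma\|\Prob\times\nu)$. This quantity is nonnegative, because both $\gamma$ and $\Prob\times\nu$ are probability measures (Gibbs' inequality). It is $+\infty$ whenever $\gamma\not\ll\Prob\times\nu$.

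\emph{Items 1 and 2.} For every $\gamma\in\Gamma(\Q,\Prob)$ and every $0\le\epsilon_1\le\epsilon_2$ we have
\[
\E_\gamma[c]\le \E_\gamma[c]+\epsilon_1\mathrm{KL}(\gamma\|\Prob\times\nu)\le \E_\gamma[c]+\epsilon_2\mathrm{KL}(\gamma\|\Prob\times\nu).
\]
Taking the infimum over the same set of couplings $\Gamma(\Q,\Prob)$ gives $OT_c(\Q,\Prob)\le W_c^{\epsilon_1}(\Q,\Prob)\le W_c^{\epsilon_2}(\Q,\Prob)$. Hence $W_c^{\epsilon_2}(\Q,\Prob)\le\rho$ implies $W_c^{\epsilon_1}(\Q,\Prob)\le\rho$, which is item 2. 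Item 1 is the special case obtained by comparing with $\epsilon_1=0$, since $W_c^0=OT_c$. The convention $0\cdot\infty=0$ at $\epsilon=0$ is consistent with this. Also note that the symmetric role of the arguments is irrelevant here: we always use the coupling set $\Gamma(\Q,\Prob)$ with $\Prob$ as the fixed marginal.

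\emph{Item 3.} We interpret $W_c^{\infty}(\Q,\Prob)$ as the value of \eqref{eq:sinkhorn} with $\epsilon=\infty$. Under the convention $\infty\cdot 0=0$, this is $\E_\gamma[c]$ if $\gamma=\Prob\times\nu$ and $+\infty$ otherwise; equivalently, it is the limit of $\E_\gamma[c]+\epsilon\,\mathrm{KL}$ as $\epsilon\to\infty$, and we would state which reading we adopt. The key fact is that $\mathrm{KL}(\gamma\|\Prob\times\nu)=0$ if and only if $\gamma=\Prob\times\nu$.

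Now take a coupling $\gamma\in\Gamma(\Q,\Prob)$ with finite objective. Then $\gamma=\Prob\times\nu$, so its $\Q$-marginal is $\nu$. Consequently, if $\Q\neq\nu$, every coupling has infinite cost, $W_c^\infty(\Q,\Prob)=\infty>\rho$, and $\Q\notin\B_{\rho,\infty}(\Prob)$. If instead $\Q=\nu$, then the product coupling is admissible. It is the only coupling with finite objective, so
\[
W_c^\infty(\nu,\Prob)=\int c(x,y)\,\dd\Prob(x)\dd\nu(y).
\]
Therefore $\nu\in\B_{\rho,\infty}(\Prob)$ exactly when this integral is at most $\rho$. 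In that case the ball is $\{\nu\}$; otherwise it is empty.

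\emph{Main obstacle.} Items 1 and 2 are routine. The delicate step is making $\epsilon=\infty$ rigorous in item 3, which we handle in two parts.
\begin{itemize}
\item We fix the convention $\infty\cdot0=0$ explicitly.
\item We show that the limit reading gives the same set, so that $\B_{\rho,\infty}(\Prob)=\bigcap_{\epsilon}\B_{\rho,\epsilon}(\Prob)$.
\end{itemize}
For the second part, suppose $\Q\neq\nu$. Every $\gamma\in\Gamma(\Q,\Prob)$ then has second marginal different from $\nu$. By the data-processing inequality for KL,
\[
\mathrm{KL}(\gamma\|\Prob\times\nu)\ge\mathrm{KL}(\Q\|\nu)>0,
\]
so $W_c^\epsilon(\Q,\Prob)\ge\epsilon\,\mathrm{KL}(\Q\|\nu)\to\infty$. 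This uniform lower bound over couplings settles the $\Q\neq\nu$ case. For $\Q=\nu$, the product coupling is feasible and gives $W_c^\epsilon(\nu,\Prob)\le\int c\,\dd\Prob\,\dd\nu$ for every $\epsilon$. Combined with monotonicity in $\epsilon$ from item 2, the limit equals the product cost, so the two readings agree.
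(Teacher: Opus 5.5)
Items 1 and 2 are correct, and so is item 3 under your direct reading ($\epsilon=\infty$ with the convention $\infty\cdot 0=0$). The paper states this proposition without proof, so there is no argument to compare against. The gap is in your claim that the limit reading gives the same set. That matters, because the limit reading is the one the paper intends: ``in the limit as $\epsilon\to\infty$, these sets converge to the singleton $\{\nu\}$.''

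Your data-processing bound correctly removes every $\Q\neq\nu$ from $\bigcap_\epsilon\B_{\rho,\epsilon}(\Prob)$. The product coupling correctly shows that $\nu$ lies in every $\B_{\rho,\epsilon}(\Prob)$ when $\int c\,\dd\Prob\,\dd\nu\le\rho$. The ``otherwise empty'' half, however, needs the following: if $\int c\,\dd\Prob\,\dd\nu>\rho$, then $W_c^\epsilon(\nu,\Prob)>\rho$ for some finite $\epsilon$. In other words, you need $\lim_{\epsilon\to\infty}W_c^\epsilon(\nu,\Prob)\ge\int c\,\dd\Prob\,\dd\nu$. Monotonicity in $\epsilon$ together with the upper bound $W_c^\epsilon(\nu,\Prob)\le\int c\,\dd\Prob\,\dd\nu$ only shows that the limit exists and is at most the product cost. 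A nondecreasing family bounded above need not reach the bound, so your sentence ``combined with monotonicity \dots\ the limit equals the product cost'' does not follow.

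The missing lower bound is short.
\begin{itemize}
\item Let $L=\sup_\epsilon W_c^\epsilon(\nu,\Prob)$ and assume $L<\infty$; otherwise there is nothing to show.
\item Take $\epsilon_k\to\infty$ and $\gamma_k\in\Gamma(\Prob,\nu)$ with $\E_{\gamma_k}[c]+\epsilon_k\mathrm{KL}(\gamma_k\|\Prob\times\nu)\le L+1/k$.
\item Then $\mathrm{KL}(\gamma_k\|\Prob\times\nu)\le (L+1)/\epsilon_k\to 0$, so $\gamma_k\to\Prob\times\nu$ in total variation by Pinsker's inequality.
\item For each $M>0$ the truncation $c\wedge M$ is bounded and measurable, hence $\E_{\Prob\times\nu}[c\wedge M]=\lim_k\E_{\gamma_k}[c\wedge M]\le\liminf_k\E_{\gamma_k}[c]\le L$.
\item Monotone convergence as $M\to\infty$ gives $\int c\,\dd\Prob\,\dd\nu\le L$.
\end{itemize}
Alternatively, you can use weak lower semicontinuity of $\gamma\mapsto\E_\gamma[c]$ for lower semicontinuous $c\ge 0$, as in the proof of Lemma~\ref{lemma: semicontinuity}. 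Either way, $\lim_{\epsilon\to\infty}W_c^\epsilon(\nu,\Prob)=\int c\,\dd\Prob\,\dd\nu$ in all cases, including when the integral is infinite, and the two readings of item 3 then genuinely coincide.

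A minor point: write couplings as $\Gamma(\Prob,\Q)$ rather than $\Gamma(\Q,\Prob)$, so that their marginals line up with the reference measure $\Prob\times\nu$. Your own phrase ``second marginal'' in the data-processing step already assumes this order.
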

The preceding proposition ensures that the $\mathcal{S}$-set is always contained within the OT ambiguity set. Furthermore, it establishes that the $\mathcal{S}$-sets are monotonically non-increasing (shrinking) with respect to the regularization parameter $\epsilon$. In the limit as $\epsilon \to \infty$, these sets converge to the singleton $\{\nu\}$, provided that the transport cost under the independent coupling $\mathbb{P} \times \nu$ does not exceed the budget $\rho$.

Convexity of the $\mathcal{S}$-set plays a key role in ensuring tractability of Sinkhorn DRO problems. This property is formally stated in the next Proposition whose
proof is deferred to Appendix \ref{appendix: convexity}.
\begin{proposition}[Convexity of the $\mathcal{S}$-set]
\label{proposition:convexity}
The set $\B_{\rho, \epsilon}(\Prob)$ in \eqref{eq:ambiguity_set} is convex.
\end{proposition}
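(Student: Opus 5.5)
The plan is to show that the map $\Q \mapsto W_c^{\epsilon}(\Q,\Prob)$ is convex on $\mathcal{P}(\mathcal{Z})$. Convexity of the sublevel set $\B_{\rho,\epsilon}(\Prob)$ then follows at once. Take $\Q_1,\Q_2\in\B_{\rho,\epsilon}(\Prob)$ and $\theta\in[0,1]$, and set $\Q_\theta=\theta\Q_1+(1-\theta)\Q_2$. The case $\epsilon=0$ is the classical OT case; it is covered by the same argument with the entropy term dropped, so I assume $\epsilon>0$. In what follows the first marginal of a coupling is the one attached to $\Q$, the second is attached to the center $\Prob$, and the reference product is $\nu\times\Prob$.

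\textbf{Step 1: mix near-optimal couplings.} Fix $\delta>0$ and choose $\gamma_i\in\Gamma(\Q_i,\Prob)$ such that
\begin{equation*}
\E_{\gamma_i}[c]+\epsilon\,\mathrm{KL}(\gamma_i\|\nu\times\Prob)\le W_c^\epsilon(\Q_i,\Prob)+\delta\le\rho+\delta .
\end{equation*}
These are finite, so $\gamma_i\ll\nu\times\Prob$. Define $\gamma_\theta=\theta\gamma_1+(1-\theta)\gamma_2$. Its first marginal is $\Q_\theta$ and its second marginal is $\theta\Prob+(1-\theta)\Prob=\Prob$, so $\gamma_\theta\in\Gamma(\Q_\theta,\Prob)$. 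Also $\Q_\theta\ll\nu$, so $\Q_\theta$ is admissible in Definition~\ref{def:Sinkhorn}.

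\textbf{Step 2: convexity of each term.} The transport term is linear in the coupling: $\E_{\gamma_\theta}[c]=\theta\E_{\gamma_1}[c]+(1-\theta)\E_{\gamma_2}[c]$. For the entropy term, I use that $\gamma\mapsto\mathrm{KL}(\gamma\|R)$ is convex for a fixed reference $R=\nu\times\Prob$. To prove it, write the densities $f_i=\dd\gamma_i/\dd R$, so that $\gamma_\theta$ has density $\theta f_1+(1-\theta)f_2$. Then apply pointwise convexity of $t\mapsto t\log t$ and integrate against $R$. Here $t\log t\ge -1/e$ and $R$ is a probability measure, so the integrals are well defined in $(-\infty,\infty]$.

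Combining the two terms gives
\begin{equation*}
W_c^\epsilon(\Q_\theta,\Prob)\le \E_{\gamma_\theta}[c]+\epsilon\,\mathrm{KL}(\gamma_\theta\|\nu\times\Prob)\le\rho+\delta .
\end{equation*}
Letting $\delta\to0$ gives $\Q_\theta\in\B_{\rho,\epsilon}(\Prob)$.

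\textbf{Main obstacle.} The only delicate point is the measure-theoretic bookkeeping. First, the reference measure must stay fixed: the convention $\mu=\Prob$ is what keeps the reference product the same for all couplings with second marginal $\Prob$. Second, the extended-valued integrals must be handled correctly. This is where I would be most careful: I would verify the KL convexity inequality directly through densities, instead of citing joint convexity. I would also note that the infimum may not be attained, which is why I work with $\delta$-optimal couplings.
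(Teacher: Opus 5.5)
Your proof is correct, but it takes a different route from the paper. The paper uses the chain rule for relative entropy to rewrite $W_c^\epsilon$ as $\inf_{\gamma\in\Gamma(\Prob,\Q)}\{\E_\gamma[c]+\epsilon\,\mathrm{KL}(\gamma\|\Prob\times\Q)\}$ plus a term $\mathrm{KL}(\Q\|\nu)$. That term should carry a factor $\epsilon$, which the paper drops without consequence for convexity. The paper then cites Mallasto et al.\ for convexity in $\Q$ of the first piece, adds convexity of $\mathrm{KL}(\cdot\|\nu)$, and concludes exactly as in your last step.

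This decomposition moves the difficulty into a term whose reference measure $\Prob\times\Q$ varies with $\Q$. That term therefore needs either an external result or joint convexity of KL in both arguments. You sidestep this by keeping the reference $\nu\times\Prob$ fixed, which is precisely what the convention $\mu=\Prob$ buys. The objective is then linear plus convex in the coupling, and mixing couplings in $\Gamma(\Q_1,\Prob)$ and $\Gamma(\Q_2,\Prob)$ gives a coupling in $\Gamma(\Q_\theta,\Prob)$, so the partial infimum is convex.

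Your version is self-contained and checks the extended-valued integrals explicitly. Thanks to the $\delta$-optimal couplings, it also does not rely on attainment of the infimum (Lemma~\ref{lemma:existance}). The paper's version is shorter and relates the Sinkhorn discrepancy to the usual mutual-information form of entropic OT, at the price of outsourcing the key step. Replacing $\rho$ by $W_c^\epsilon(\Q_i,\Prob)$ in your Step 1 also gives convexity of $\Q\mapsto W_c^\epsilon(\Q,\Prob)$ itself, which is what the paper proves first.
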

\subsection{Topological properties}
Understanding the compactness of ambiguity sets is crucial in DRO, for example to determine the existence of worst-case distributions or establish that certain worst-case optimization problems admit Nash equilibria. This topological property has been addressed in \cite{kuhn2025distributionally} for various ambiguity sets. In this spirit, we proceed to investigate the compactness of the $\mathcal{S}$-set.\footnote{For clarity, we provide a discussion on relevant technical backgrounds on variational analysis and topology used in this section in Appendix \ref{appendix}.} We make the following assumption:
\begin{assumption}[Transportation cost]
\label{assum: transp_cost}
~ %
    \begin{enumerate}[label=(\roman*)]
        \item There exits a reference point $\hat{z}_0\in\mathcal{Z}$ such that, for some constant $C < \infty$, $\E_{z\sim\Prob}[c(z,\hat{z}_0 )] \leq C$ holds.\label{assum: c2}
        \item There exists a metric $d(z, \hat{z})$ on $\mathcal{Z}$ with compact sublevel sets such that $c(z, \hat{z}) \geq d(z, \hat{z})^p$ for some $p\in\N$.\label{assum: c3}
    \end{enumerate}
\end{assumption}
Such assumptions are not restrictive and commonly adopted in the literature, see e.g. \cite{shafieezadeh2023new}, \cite{yue}. 
Particularly, Assumption \ref{assum: transp_cost}\ref{assum: c3} allows for transportation costs that fail to display common properties such as symmetry, 
convexity, homogeneity, or the triangle inequality.

By leveraging the lower semicontinuity of the KL regularization term, our next result establishes the solvability of \eqref{eq:sinkhorn} by extending the proof given in \cite{kuhn2025distributionally} for the Wasserstein distance. The detailed proof is given in Appendix \ref{appendix: solvability}.
\begin{lemma}[Solvability of the Sinkhorn problem]The infimum in the Sinkhorn problem \eqref{eq:sinkhorn} is attained. 
\label{lemma:existance}
\end{lemma}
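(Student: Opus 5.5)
The plan is the direct method of the calculus of variations on the set of couplings $\Gamma(\Prob,\Q)$, equipped with the weak topology. We may assume the infimum in \eqref{eq:sinkhorn} is finite; otherwise every coupling attains it trivially, provided $\Gamma(\Prob,\Q)$ is nonempty, which holds since $\Prob\times\Q\in\Gamma(\Prob,\Q)$. We then take a minimizing sequence $\{\gamma_k\}\subset\Gamma(\Prob,\Q)$ and aim to extract a weakly convergent subsequence whose limit is still a coupling of $\Prob$ and $\Q$. Finally, we show that the objective $J(\gamma)=\E_\gamma[c(x,y)]+\epsilon\,\mathrm{KL}(\gamma\|\mu\times\nu)$ is weakly lower semicontinuous. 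The limit is then a minimizer, because $J(\gamma^\star)\le\liminf_k J(\gamma_k)=\inf J$.

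\textbf{Step 1: tightness and closedness of $\Gamma(\Prob,\Q)$.} Since $\Prob$ and $\Q$ are single Borel probability measures on the Polish space $\mathcal{Z}$ (we work in $\R^d$), each is tight by Ulam's theorem. Given $\delta>0$, choose compact sets $K_1,K_2$ with $\Prob(K_1^c),\Q(K_2^c)<\delta/2$. Then every $\gamma\in\Gamma(\Prob,\Q)$ satisfies $\gamma((K_1\times K_2)^c)<\delta$, so $\Gamma(\Prob,\Q)$ is uniformly tight. Prokhorov's theorem then gives relative weak compactness. The set is also weakly closed, because the projections are continuous and the marginal constraints $\int f(x)\,\dd\gamma=\int f\,\dd\Prob$ pass to the limit for bounded continuous $f$. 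Hence $\Gamma(\Prob,\Q)$ is weakly compact, and we may pass to a subsequence with $\gamma_k\rightharpoonup\gamma^\star\in\Gamma(\Prob,\Q)$. Alternatively, Assumption~\ref{assum: transp_cost} (compact sublevel sets of $d$) could be used to obtain tightness, but this is not needed.

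\textbf{Step 2: lower semicontinuity of the two terms.} For the transport term, $c$ is lower semicontinuous and nonnegative by \cref{def:transport_cost}. It is therefore the pointwise supremum of an increasing sequence of bounded continuous functions $c_m$. For each $m$, $\int c_m\,\dd\gamma$ is weakly continuous in $\gamma$, and by monotone convergence $\gamma\mapsto\E_\gamma[c]$ is a supremum of continuous functionals, hence weakly lower semicontinuous. For the entropic term, $\gamma\mapsto\mathrm{KL}(\gamma\|\mu\times\nu)$ is weakly lower semicontinuous by the Donsker--Varadhan variational formula,
\begin{equation*}
\mathrm{KL}(\gamma\|\mu\times\nu)=\sup_{f\in C_b}\Bigl\{\E_\gamma[f]-\log\E_{\mu\times\nu}\bigl[e^{f}\bigr]\Bigr\},
\end{equation*}
which again expresses it as a supremum of weakly continuous affine functionals. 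Since $\epsilon\ge0$, the sum $J$ of two lower semicontinuous terms is lower semicontinuous.

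\textbf{Step 3: conclusion.} Combining the two steps gives $J(\gamma^\star)\le\liminf_k J(\gamma_k)=\inf_{\Gamma(\Prob,\Q)}J$, so the infimum is attained. Finiteness of the infimum also ensures that $\mathrm{KL}(\gamma^\star\|\mu\times\nu)<\infty$, so $\gamma^\star\ll\mu\times\nu$ and the minimizer is admissible. The main obstacle is the lower semicontinuity of the KL term. One must invoke the Donsker--Varadhan representation carefully, with test functions restricted to $C_b$ on a Polish space, and justify that this restricted supremum still equals the KL divergence, including the value $+\infty$ when $\gamma\not\ll\mu\times\nu$. The rest of the argument follows the Wasserstein case in \cite{kuhn2025distributionally}.
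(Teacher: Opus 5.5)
Your proposal is correct and follows essentially the same route as the paper. Both arguments combine weak compactness of $\Gamma(\Prob,\Q)$, weak lower semicontinuity of $\gamma\mapsto\E_\gamma[c]$ and of $\gamma\mapsto\mathrm{KL}(\gamma\|\mu\times\nu)$ for the fixed reference $\mu\times\nu$, and a Weierstrass/direct-method conclusion. The paper simply cites these facts from Kuhn--Shafiee--Wiesemann and Polyanskiy--Wu, whereas you prove them (Ulam--Prokhorov tightness with closed marginal constraints, monotone approximation of $c$, Donsker--Varadhan) and also handle the trivial case of an infinite infimum.
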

Building upon the previous lemma, we can generalize the continuity results for Wasserstein showing that, even with the additional entropic regularization term, the Sinkhorn discrepancy constitutes a weakly lower semicontinuous mapping of its arguments. This is stated in the following lemma, whose proof is deferred to Appendix \ref{appendix: semicontinuity}.
\begin{lemma}[Weak Lower Semicontinuity of the Sinkhorn Optimal Transport Discrepancy] The Sinkhorn Optimal Transport Discrepancy $W_c^{\epsilon}(\Prob, \Q)$ in \eqref{eq:sinkhorn} is weakly lower semicontinous jointly in $\Prob$ and $\Q$ for every $\epsilon\geq 0$.
\label{lemma: semicontinuity}
\end{lemma}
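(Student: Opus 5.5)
We take a sequence $(\Prob_n,\Q_n)$ converging weakly to $(\Prob,\Q)$ and show that $\liminf_n W_c^\epsilon(\Prob_n,\Q_n)\geq W_c^\epsilon(\Prob,\Q)$. This is the sequential form of weak lower semicontinuity, which suffices because $\Pset$ with the weak topology is metrizable when $\mathcal{Z}$ is Polish. We pass to a subsequence along which the $\liminf$ is a limit. If that limit is $+\infty$ there is nothing to prove, so we assume it is finite and, after discarding finitely many terms, that $W_c^\epsilon(\Prob_n,\Q_n)\leq M<\infty$ for all $n$. By \cref{lemma:existance}, each problem \eqref{eq:sinkhorn} has a minimizer $\gamma_n\in\Gamma(\Prob_n,\Q_n)$ with
\[
\E_{\gamma_n}[c]+\epsilon\,\mathrm{KL}(\gamma_n\|\mu\times\nu)=W_c^\epsilon(\Prob_n,\Q_n).
\]
Here the reference measures $\mu,\nu$ stay fixed along the sequence; we will note that the convention $\mu=\Prob$ is only used when the first argument is frozen.

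The first step is tightness of $\{\gamma_n\}$. A weakly convergent sequence of probability measures on a Polish space is tight (Prokhorov), so $\{\Prob_n\}$ and $\{\Q_n\}$ are tight. Given compact sets $K_1,K_2$ with $\Prob_n(K_1^c),\Q_n(K_2^c)<\delta$, we have $\gamma_n((K_1\times K_2)^c)<2\delta$. Hence $\{\gamma_n\}$ is tight. By Prokhorov's theorem a further subsequence converges weakly to some $\gamma$. Since the projections are continuous, the marginals of $\gamma$ are the weak limits $\Prob$ and $\Q$, so $\gamma\in\Gamma(\Prob,\Q)$.

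The second step is lower semicontinuity of each term of the objective. The cost $c$ is lower semicontinuous and nonnegative (\cref{def:transport_cost}). By the portmanteau theorem for l.s.c.\ functions bounded below (approximate $c$ from below by an increasing sequence of bounded continuous functions, for instance $\min\{k,\inf_{w}\{c(w)+k\|w-\cdot\|\}\}$), we get $\liminf_n\E_{\gamma_n}[c]\geq\E_\gamma[c]$. For the entropic term, the Donsker--Varadhan variational formula
\[
\mathrm{KL}(\gamma\|\mu\times\nu)=\sup_{f\in C_b}\bigl\{\E_\gamma[f]-\log\E_{\mu\times\nu}[e^f]\bigr\}
\]
expresses $\mathrm{KL}(\cdot\|\mu\times\nu)$ as a supremum of weakly continuous functionals, so it is weakly l.s.c.\ in its first argument. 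Since the reference measure is fixed, this suffices. Adding the two inequalities (with $\epsilon\geq0$) gives
\[
\liminf_n W_c^\epsilon(\Prob_n,\Q_n)\geq\E_\gamma[c]+\epsilon\,\mathrm{KL}(\gamma\|\mu\times\nu)\geq W_c^\epsilon(\Prob,\Q).
\]
The case $\epsilon=0$ reduces to the known OT result.

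The main obstacle is the dependence of the reference measure on the arguments. If one adopts the convention $\mu=\Prob$ literally, the reference $\Prob_n\times\nu$ changes along the sequence. We then need joint lower semicontinuity of $\mathrm{KL}$ in both arguments, which still follows from Donsker--Varadhan: each map $(\gamma,\pi)\mapsto\E_\gamma[f]-\log\E_\pi[e^f]$ is jointly weakly continuous for $f\in C_b$. We also have $\Prob_n\times\nu\to\Prob\times\nu$ weakly, so the argument goes through as before. We would check carefully that the absolute-continuity requirements in \cref{def:Sinkhorn} are preserved in the limit. They are, because finiteness of the limiting KL forces $\gamma\ll\mu\times\nu$, and hence $\Q\ll\nu$.
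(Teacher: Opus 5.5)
Your proof is correct and follows essentially the same route as the paper: tightness of the optimal couplings via Prokhorov, a weak limit $\gamma\in\Gamma(\Prob,\Q)$, lower semicontinuity of $\gamma\mapsto\E_\gamma[c]$ and of the (jointly l.s.c., Donsker--Varadhan) KL term, and then suboptimality of $\gamma$. Your version is actually slightly more careful than the paper's: you first pass to a subsequence realizing the $\liminf$ before extracting a convergent subsequence of couplings, which the paper skips and which is needed because an arbitrary subsequence can raise the $\liminf$; and you keep the entropic term relative to the reference $\mu\times\nu$ of \eqref{eq:sinkhorn} (or $\Prob_n\times\nu$), whereas the paper's chain writes it relative to $\Prob_j\times\Q_j$.
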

The lower semicontinuity of the Sinkhorn discrepancy is a key enabler to show that the $\mathcal{S}$-set is closed and hence compact. This result is stated formally in the Theorem below, whose proof is deferred to Appendix \ref{appendix:theorem_compactness}.
\begin{theorem}
    \label{thm: Sinkhorn_compactness}
    If the transportation cost $c$ satisfies Assumption \ref{assum: transp_cost}, then the ambiguity set $\B_{\rho, \epsilon}(\Prob)$ in \eqref{eq:ambiguity_set} is weakly compact.
\end{theorem}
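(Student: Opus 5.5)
We prove weak compactness by combining closedness with tightness and invoking Prokhorov's theorem. On a space $\mathcal{Z}$ where the metric $d$ of Assumption \ref{assum: transp_cost}\ref{assum: c3} has compact sublevel sets, a tight family of probability measures is weakly relatively compact. It therefore suffices to show two things: that $\B_{\rho,\epsilon}(\Prob)$ is tight, and that it is weakly closed.

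\emph{Closedness.} Take a sequence $\Q_k\in\B_{\rho,\epsilon}(\Prob)$ converging weakly to some $\Q$. By Lemma \ref{lemma: semicontinuity}, applied with the first argument held fixed at $\Prob$, we have
\[
W_c^{\epsilon}(\Q,\Prob)\le\liminf_{k} W_c^{\epsilon}(\Q_k,\Prob)\le\rho.
\]
For this to certify $\Q\in\B_{\rho,\epsilon}(\Prob)$, we also need $\Q\ll\nu$. This follows from finiteness of the objective: the KL term forces every feasible coupling $\gamma$ to satisfy $\gamma\ll\Prob\times\nu$, so its second marginal is absolutely continuous with respect to $\nu$. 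Hence $\Q$ belongs to the set.

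\emph{Tightness.} Fix $\Q\in\B_{\rho,\epsilon}(\Prob)$. By Lemma \ref{lemma:existance} there is an optimal coupling $\gamma$, and since KL is nonnegative,
\[
\E_\gamma[c(x,z)]\le W_c^{\epsilon}(\Q,\Prob)\le\rho .
\]
We then bound a moment of $\Q$ around the reference point $\hat z_0$. By Assumption \ref{assum: transp_cost}\ref{assum: c3} and the triangle inequality,
\[
d(z,\hat z_0)^p\le 2^{p-1}\bigl(d(z,x)^p+d(x,\hat z_0)^p\bigr)\le 2^{p-1}\bigl(c(x,z)+c(x,\hat z_0)\bigr).
\]
Integrating against $\gamma$ and using Assumption \ref{assum: transp_cost}\ref{assum: c2} gives
\[
\E_{\Q}[d(z,\hat z_0)^p]\le 2^{p-1}(\rho+C).
\]
The argument order in $c$ has to be tracked here. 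If the definition of the set and Assumption \ref{assum: transp_cost}\ref{assum: c2} place $\Prob$'s variable in different slots, I would use the bound with the matching orientation, or note that the lower bound by the symmetric metric $d$ makes the orientation irrelevant for the $d$-terms. The remaining term only needs $\E_\Prob[c(\cdot,\hat z_0)]\le C$ in the stated orientation, and if needed $d(x,\hat z_0)^p$ can be bounded directly by that term.

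Markov's inequality now gives, uniformly over the set,
\[
\Q\bigl(d(z,\hat z_0)>r\bigr)\le \frac{2^{p-1}(\rho+C)}{r^p}.
\]
The sublevel sets $\{z: d(z,\hat z_0)\le r\}$ are compact, so this uniform bound is exactly tightness.

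\emph{Conclusion and main obstacle.} Prokhorov's theorem makes the tight set weakly relatively compact, and closedness upgrades this to weak compactness. I expect the main obstacle to be the closedness step, for two reasons. First, Lemma \ref{lemma: semicontinuity} must be used in the right topology, since $\mathcal{Z}$ need not be Polish a priori; I would assume it is Polish, or work in the topology induced by $d$. Second, the limit must retain $\Q\ll\nu$. That second point is handled by noting that $W_c^\epsilon(\Q,\Prob)<\infty$ already encodes absolute continuity through the KL term.
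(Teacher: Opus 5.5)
Your proof is correct and has the same skeleton as the paper's: weak closedness of $\B_{\rho,\epsilon}(\Prob)$ follows from the lower semicontinuity in Lemma~\ref{lemma: semicontinuity}, and the set is contained in a weakly (relatively) compact set. You differ in how you obtain that compact container.

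The paper argues by nesting. By \cref{proposition: relationships}, $\B_{\rho,\epsilon}(\Prob)\subseteq\B_\rho(\Prob)$. Then $OT_c\ge W_p^p$, from Assumption~\ref{assum: transp_cost}\ref{assum: c3}, puts the OT ball inside $\mathbb{W}_{\rho^{1/p}}(\Prob)$. Its weak compactness is imported from Theorem~\ref{thm:weak_compactness_Wasserstein}, and a closed subset of a compact set is compact.

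You instead prove tightness directly. $\mathrm{KL}\ge 0$ gives $\E_\gamma[c]\le\rho$, which is exactly the first item of \cref{proposition: relationships}. The triangle inequality, together with $c\ge d^p$ and Assumption~\ref{assum: transp_cost}\ref{assum: c2}, yields the uniform bound $\E_\Q[d(z,\hat z_0)^p]\le 2^{p-1}(\rho+C)$. Markov's inequality and compact $d$-balls then give tightness, and Prokhorov finishes. In effect you have inlined the tightness half of the cited Wasserstein-ball compactness result.

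The paper's route is shorter and highlights the inclusion Sinkhorn $\subseteq$ OT $\subseteq$ Wasserstein. Yours is self-contained and shows exactly where Assumption~\ref{assum: transp_cost}\ref{assum: c2} is used. The paper needs that assumption only implicitly, to ensure $\Prob$ has the finite $p$-th moment the cited theorem requires. Your route also does not depend on the precise setting in which that theorem is stated.

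Three smaller points:
\begin{itemize}
\item Your explicit check that weak limits retain $\Q\ll\nu$ fills in something the paper leaves implicit.
\item Your Polishness concern is moot, since the paper takes $\mathcal{Z}$ to be a closed subset of $\R^n$.
\item Invoking Lemma~\ref{lemma:existance} is unnecessary: any coupling with objective at most $\rho+\delta$ would serve.
\end{itemize}
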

The compactness of the $\mathcal{S}$-set, which constitutes a new result in the literature, serves as a stepping stone condition for establishing min-max results. In the following, we will use it in \cref{proposition: DR_Sinkhorn_CVaR} to derive distributionally robust CVaR constraints with uncertainty belonging to an $\mathcal{S}$-set.
\section{Finite Horizon Sinkhorn Robust Constrained Optimal Control}
\label{sec:optimal_control}
In this section, we analyze the finite-horizon
Sinkhorn robust optimal control problem with probabilistic constraints. We begin by describing the setup, then reformulate the Sinkhorn robust CVaR constraints as a convex set using the topological results from the previous section, and conclude by deriving a tractable convex optimization program that solves the control problem.
\subsection{Problem setup}
Consider the discrete-time linear time-varying system described by the state-space model
\begin{equation}
\label{eq:system}
x_{t+1} = A_t x_t + B_t u_t + E_t w_t,\quad
y_t = C_t x_t + F_t w_t\, ,
\end{equation}
where the variables are defined as follows: $x_t \in \R^d$ is the system state, $u_t \in \R^m$ the control input, $w_t \in \R^q$ a stochastic disturbance, and $y_t \in \R^p$ the measured output.

The state and input are constrained to lie in polytopic feasible sets $\mathcal{X}\subseteq \R^d$ and $\mathcal{U}\subseteq \R^m$, respectively, given by
\begin{align*}
    \mathcal{X} &= \{x\in\R^d: h(x) = \max_{j\in[L_x]}\ H_j^{\top}x + h_j \leq 0, L_x\in\N\}\, ,\\
    \mathcal{U} &= \{u\in\R^m: g(u) = \max_{j\in[L_u]}\ G_j^{\top}u + g_j \leq 0, L_u\in\N\}\, .
\end{align*}
We consider the problem of designing a feedback policy that retains probabilistic safety while ensuring performance guarantees over a finite horizon of length $T\in\N$.
Specifically, given a cost matrix $D\succeq 0$, we measure the control cost that a policy $u_t = \pi_t(y_{0:t})$ incurs in response to the disturbance 
realization $\w = (x_0^\top, w_0^\top, \dots, w_{T-2}^\top)^\top$ by
\begin{equation*}
    J(\bm{\pi}, \w) = \sum_{t=0}^{T-1}\begin{bmatrix}x_t^{\top} & u_t^{\top}\end{bmatrix}D\begin{bmatrix}x_t\\u_t\end{bmatrix}\, .
\end{equation*}
Since the noise is unknown and unbounded, enforcing strict constraints is infeasible. Instead, we employ a probabilistic constraint formulation, requiring the system to remain within bounds with a predefined confidence. Formally, if $\Prob$ is the true noise distribution,
we require that the random vectors $\x = (x_0^\top, \dots, x_{T-1}^\top)^\top\ \text{and}\ \inp = (u_0^\top, \dots, u_{T-1}^\top)^\top$ are contained in the feasible sets with probability at least $1-\gamma$, i.e. $\Prob(x_t \in \mathcal{X}_t,\, u_t \in \mathcal{U}_t) \ge 1 - \gamma, \quad \forall t \in \{0, \dots, T-1\}$
where $\gamma\in[0, 1]$ is an user-defined safety parameter that controls the level of acceptable constraint violation.

However, enforcing such constraints is difficult because they give rise to a non-convex feasible region in the variable space. To address this issue, an alternative commonly employed, convex, risk measure is the conditional value-at-risk (CVaR) for which a pictorial representation is given in \cref{fig:CVaR}.
\begin{definition}[Conditional value-at-risk, \cite{rockafellar2000optimization}] For any measurable loss function $\ell: \R^d\rightarrow\R$, probability distribution $\Prob \in \Pset$ and 
tolerance $\gamma\in[0, 1]$, the CVaR of the random loss $\ell(\x)$ at level $\gamma$ with respect to $\Prob$ is defined as
\begin{equation}
    \label{eq:CVaR}
        \text{CVaR}_{\gamma}^\Prob(\ell(\x)) = \inf_{\tau\in\R}\ \tau + \frac{1}{\gamma}\E_\Prob[\max\{\ell(\x) - \tau, 0\}]\,.
    \end{equation}
\end{definition}
We highlight that, besides implying the constraint satisfaction in probability, \eqref{eq:CVaR} accounts also for the 
expected amount of constraint violation in the $\gamma$ percent of cases when such violations occur \cite{van2015distributionally}. This is relevant 
for control applications where severe breaches of the safety constraints often have far more adverse consequences than mild violations.
\begin{figure}[t]
    \centering
    \includegraphics[trim=0 0 0 0, clip, width = \columnwidth,height=0.15\textheight,keepaspectratio]{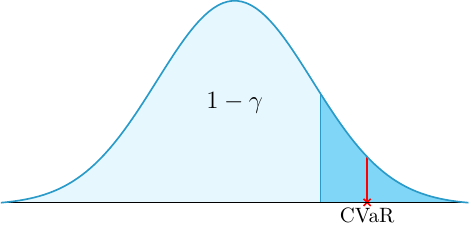}
    \caption{CVaR is the expected value of the $\gamma$-right tail (dark blue).}
    \label{fig:CVaR}
\end{figure}
The constrained control problem can then be expressed as
\begin{subequations}
    \label{eq:trueP_problem}
\begin{align}
    &\boldsymbol{\pi}^{\star} \in \argmin_{\boldsymbol{\pi} = (\pi_0, \dots, \pi_{T-1})}\ \E_{\Prob}[ J(\mathbb{\bm\pi}, \w)]\label{eq:nominal_objective}\\
    & \text{subject to}\ \text{CVaR}_{\gamma}^\Prob(\max\{h(\x), g(\inp)\}) \leq 0\,.
    \label{eq:nominal_constraints}
\end{align}
\end{subequations}
\subsection{Approximations and problem formulation}
Problem \eqref{eq:trueP_problem} is rarely solvable directly. Indeed, most of the time, the true underlying probability $\Prob$ of $\w$ is unknown. Instead, we assume to have access to $N\in\N$
independent noise realizations $\hat{\w}_T^1, \dots, \hat{\w}_T^N$, where each sample 
\begin{equation*}
    \hat{\w}_T^i = \begin{bmatrix}(\hat{w}_0^i)^{\top}\ (\hat{w}_1^i)^{\top}\ \dots\ (\hat{w}_{T-1}^i)^{\top}\end{bmatrix}^{\top}
\end{equation*}
constitutes a trajectory of length $T$ of $w_t$. Therefore, we consider the empirical distribution constructed with the noise trajectories, 
$\hat{\Prob} = \frac{1}{N}\sum_{i=1}^{N}\delta_{\hat{\w}_T^i}$,
where $\delta_{\hat{\w}_T^i}$ is the Dirac delta centered at $\hat{\w}_T^i$.

Naively solving (\ref{eq:trueP_problem}) replacing the true distribution $\Prob$ with $\hat{\Prob}$ may lead to decisions that are unsafe or with poor 
out-of-sample performance. To combat this phenomenon, we consider a robust counterpart of the nominal objective \eqref{eq:nominal_objective} and CVaR constraint \eqref{eq:nominal_constraints} that 
minimizes the worst-case expected cost with worst-case CVaR constraints over the $\mathcal{S}$-set. This allows us to bypass the issue of not knowing the true distribution.

However, even with such assumption, solving \eqref{eq:trueP_problem} for general feedback policies remains challenging. In principle, dynamic programming could be applied, but it is computationally intractable in continuous state spaces. Motivated by the recent work \cite{cescon2025global} on Sinkhorn DR control, in which the authors showed the optimality of linear policies when the nominal distribution is Gaussian, we
restrict ourselves to policies that are affine in the past observations $\y = (y_0^\top, \dots, y_{T-1}^\top)^\top$, i.e. 
\begin{equation}
\label{eq:feedback law}
    u_t = \sum_{k=0}^t K_{t,k}y_k + v_t,\ \forall t\in\{0, \dots, T-1\}\,,
\end{equation} or more compactly
\begin{equation}
\label{eq:compact_feedback_law}
    \inp = \mathbf{K}\y + \mathbf{v}\,,
\end{equation}
where $\mathbf{K}\in\R^{mT\times pT}$ is a real lower block-triangular feedback matrix, while $\mathbf{ v}\in\R^{mT}$ is an affine term.

To ensure the tractability of the resulting optimization, we adopt the System Level Synthesis (SLS) framework, shifting the control design from the affine output-feedback parameters $(\mathbf{K, v})$ to the closed-loop system response maps $(\mathbf{\Phi}, \bm\phi)$. Under this parametrization, the input-output variables are characterized by linear maps of the stacked disturbance $\Tilde{\w}$, and the performance objective $J$ admits a convex representation. The explicit derivation of these closed-loop maps and the equivalence between the controller structures are detailed in Appendix~\ref{appendix:SLS}.

With these definitions in place, the Sinkhorn distributionally robust (DR) constrained optimal control problem is formulated as follows:
\begin{subequations}
\label{eq:Sinkhorn_problem}
\begin{align}
    &(\boldsymbol{\Phi}^{\star}, \bm\phi^\star) \in \argmin_{\boldsymbol{\Phi}, \bm\phi}\ \sup_{\Q\in\B_{\rho, \epsilon}(\hat{\Prob})}\ \E_{\Q}[ J(\mathbf{\Phi}, \bm\phi,\Tilde{\w})]\label{eq:Sinkhorn_cost}\\
    & \text{s.t.}\ \sup_{\Q\in\B_{\rho, \epsilon}(\hat{\Prob})}\ \text{CVaR}_{\gamma}^\Q(M_j(\mathbf{\Phi}\tilde{\w} + \bm\phi) +m_j)\label{eq:Sinkhorn_CVaR} \leq 0\quad \forall j\in[\tilde{L}]\,,
\end{align}
\end{subequations}
where $\tilde{L} = L_x+L_u$, $M_j = \text{diag}(\begin{bmatrix}
    H, G
\end{bmatrix})_j$, and $m_j = \begin{bmatrix}
    h^\top, g^\top
\end{bmatrix}_j$.

It is important to note that the worst-case distributions in \eqref{eq:Sinkhorn_cost} and \eqref{eq:Sinkhorn_CVaR} do not necessarily coincide. In practice, since the true uncertainty distribution is unique, problem \eqref{eq:Sinkhorn_problem} may lead to a conservative solution. Nonetheless, robustness must be ensured for all distributions within the ambiguity set $\B_{\rho, \epsilon}(\hat{\Prob})$, rather than only for the one that minimizes the expected worst-case cost.

In our formulation, we considered a single ambiguity set characterized by the parameters $\rho$ and $\epsilon$. With minor modifications, this framework could be extended to allow distinct values of these parameters for the cost and the constraint, thereby enabling different levels of conservatism between \eqref{eq:Sinkhorn_cost} and \eqref{eq:Sinkhorn_CVaR}.

In the sequel, when defining the $\mathcal{S}$-set $\B_{\rho, \epsilon}(\hat{\Prob})$ we will make the following assumption
\begin{assumption}
The reference measure $\nu$ in \eqref{eq:sinkhorn} is a multidimensional Gaussian distribution\footnote{Our results can in principle be extended to any reference distribution $\nu$ satisfying \crefdefpart{assumption}{assumption1}-\ref{assumption2}. Exact reformulations of \eqref{eq:Sinkhorn_problem}, however, require case-by-case computations.} with mean vector $m\in\R^s$ and covariance matrix $\Sigma \in\mathbb{S}^s$, that is
\begin{equation*}
\dd\nu(\xi) \! = \! C_s^{-1}\exp\left(\!-\frac{1}{2}(\xi-m)\!^{\top}\Sigma^{-1}(\xi-m)\!\right)\dd\lambda^s(\xi)
\end{equation*}
with $\lambda^s$ the $s$-dimensional Lebesgue measure and $C_s = \sqrt{(2\pi)^s|\Sigma|}$.
\end{assumption}
\subsection{Sinkhorn Distributionally Robust CVaR reformulation}
We study Sinkhorn DR CVaR constraints that will enable us to derive a strong dual reformulation for \eqref{eq:Sinkhorn_CVaR}. 

We consider the CVaR constraint in \eqref{eq:CVaR} when the true probability $\Prob$ is unknown. In fact, it is taken from the $\mathcal{S}$-set
as the one returning the worst-case CVaR. Moreover, the loss function $\ell$ considered is the maximum of $J\in\N$ linear pieces because input and state are required to lie within polytopes. 

The following Proposition characterizes the feasible set for the variable satisfying the DR CVaR constraint when the ambiguity set is constructed using the 
Sinkhorn discrepancy. The proof of the Proposition leverages the convexity and weak compactness established in the previous section to interchange the $\sup$ and $\inf$ that appears in the definition of worst-case CVaR. The details are deferred to Appendix \ref{appendix: DR_Sinkhorn_CVaR}.
\begin{proposition}
\label{proposition: DR_Sinkhorn_CVaR}
Assume that $\mathcal{Z} = \R^d$. Consider the following DR CVaR constraint
\begin{equation}
    \label{eq:DR_Sinkhorn_CVaR}
\sup_{\Q\in\B_{\rho,\epsilon}(\hat{\Prob}_n)}\ \text{CVaR}_\gamma^\Q\left( \max_{j\in[J]}\ a_j^{\top}x + b_j\right) \leq 0,
\end{equation}
where $\B_{\rho,\epsilon}(\hat{\Prob}_n)$ is the Sinkhorn ambiguity set centered at the empirical distribution $\hat{\Prob}_n = 1/n\sum_{i=1}^n\delta_{\hat{\xi}_i}$ supported on the datapoints $\{\hat{\xi}_i\}_{i=1}^n$, with transport cost $c(x, y) = \|x -y\|^2$.
If
\begin{align}
\label{eq:feasibility_CVaR}
\rho \geq &\frac{\epsilon}{2} \log\left|\Sigma + \frac{\epsilon}{2}I\right| - \frac{\epsilon d}{2}\log\left(\frac{\epsilon}{2}\right) + \frac{\epsilon}{2}\|m\|^2_{\Sigma^{-1}} + \|\hat{\xi}_i\|^2
\nonumber\\
-& \frac{1}{n} \sum_{i=1}^n \star^{\top}\left(I +\frac{\epsilon}{2}\Sigma^{-1}\right)^{-1}\left(\hat{\xi}_i + \frac{\epsilon}{2}\Sigma^{-1}m\right)\,,
\end{align}
then \eqref{eq:DR_Sinkhorn_CVaR} is equivalent to the following set of constraints,
\begin{align}
\label{eq:CVaR_set}
&\forall i \in [n], \forall j\in[J+1]:\nonumber\\
&\left\{
    \begin{aligned}
    & \tau\in\R,\ \sigma\in\R_+,\ s_i\in\R\\ 
    &\sigma \rho + \frac{\gamma-1}{\gamma}\tau + \frac{1}{n} \sum_{i=1}^n s_i \leq 0\\
    &\setlength\arraycolsep{1pt}\begin{bmatrix}
        4\sigma I + \sigma\epsilon\Sigma^{-1} &\frac{a_j}{\gamma} + 2\sigma \hat{\xi}_i + \sigma\epsilon\Sigma^{-1}m \\
        \star & s_i - \zeta - \frac{b_j}{\gamma} + \sigma\|\hat{\xi}_i\|^2 + \frac{\sigma\epsilon}{2}\|m\|^2_{\Sigma^{-1}}
    \end{bmatrix}\! \succeq\! 0\,,
    \end{aligned}
\right.
\end{align}
with $a_{J+1}\!=\!0$, $b_{J+1}\!=\!\tau$, and $\zeta = \frac{\sigma\epsilon d}{2}\log\epsilon - \frac{\sigma\epsilon}{2}\log|2\Sigma + \epsilon I|$.
\end{proposition}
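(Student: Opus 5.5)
We will first unfold the definition of CVaR in \eqref{eq:CVaR}. This writes the left-hand side of \eqref{eq:DR_Sinkhorn_CVaR} as
\begin{equation*}
\sup_{\Q\in\B_{\rho,\epsilon}(\hat{\Prob}_n)}\ \inf_{\tau\in\R}\ \tau + \frac{1}{\gamma}\E_\Q\Bigl[\max_{j\in[J+1]}\ a_j^\top x + b_j - \tau\Bigr],
\end{equation*}
with $a_{J+1}=0$ and $b_{J+1}=\tau$, so that the positive part is absorbed as one more affine piece.

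The first step is to swap $\sup$ and $\inf$ via Sion's minimax theorem. The objective is affine, hence concave, in $\Q$, and it is convex in $\tau$. The set $\B_{\rho,\epsilon}(\hat{\Prob}_n)$ is convex by \cref{proposition:convexity}. With $c(x,y)=\|x-y\|^2$ on $\R^d$, Assumption \ref{assum: transp_cost} holds ($d$ Euclidean, $p=2$, $\hat{\Prob}_n$ has finite second moment), so the set is weakly compact by \cref{thm: Sinkhorn_compactness}.

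Upper semicontinuity in $\Q$ requires some care, because the integrand is unbounded. I would use that every $\Q$ in the set has uniformly bounded second moments, which follow from $W_c^\epsilon\ge OT_c$ and \cref{proposition: relationships}. These bounds give uniform integrability of the linear integrand, so weak convergence yields convergence of expectations, and the functional is in fact continuous on the compact set.

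After the swap, the constraint reads $\inf_\tau \{\tau + \frac1\gamma \sup_\Q \E_\Q[\ell_\tau]\}\le 0$. Since the infimum is then over a single scalar, this is equivalent to the existence of some $\tau$ satisfying the inner inequality.

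The second step is to apply the strong dual \eqref{eq:dual}, with loss $\ell_\tau(z)=\frac1\gamma\max_j(a_j^\top z+b_j)-\frac{\tau}{\gamma}$ after the appropriate rescaling, and multiplier $\sigma=\lambda$. Assumption \ref{assumption} holds because $\nu$ is Gaussian and $\hat{\Prob}_n$ is empirical. The dual term is $\sigma\epsilon\frac1n\sum_i\log\E_\nu[\exp((\ell(z)-\sigma\|z-\hat\xi_i\|^2)/(\sigma\epsilon))]$.

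Introducing epigraph variables $s_i$ for each summand gives the linear constraint $\sigma\rho+\frac{\gamma-1}{\gamma}\tau+\frac1n\sum_i s_i\le0$. The coefficient $\frac{\gamma-1}{\gamma}\tau$ arises from combining $\tau$ with $-\tau/\gamma$.

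The main obstacle is the log-moment-generating term, because $\ell$ is a maximum of affine pieces and the Gaussian integral of $e^{\max_j}$ has no closed form. My plan is to bound it by the maximum over pieces of the log-integrals, using $e^{\max_j f_j}\le\sum_j e^{f_j}$ for one direction and monotonicity for the other. Alternatively, I would rely on the paper's intended route of replacing $\ell$ by each affine piece separately, which is where the quantifier ``$\forall j\in[J+1]$'' comes from.

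For a single affine piece, the integral of $\exp(\frac{a_j^\top z}{\gamma\sigma\epsilon}-\frac{\|z-\hat\xi_i\|^2}{\epsilon}-\frac12(z-m)^\top\Sigma^{-1}(z-m))$ is a Gaussian integral. Completing the square with precision $\frac{2}{\epsilon}I+\Sigma^{-1}$ produces the $\log$-determinant constant $\zeta$ and a quadratic form in $\frac{a_j}{\gamma}+2\sigma\hat\xi_i+\sigma\epsilon\Sigma^{-1}m$, weighted by the inverse of $4\sigma I+\sigma\epsilon\Sigma^{-1}$ up to the scaling by $\sigma\epsilon$.

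The requirement $s_i\ge$ (this quadratic-over-$\sigma$ expression) is then converted by a Schur complement into the stated LMI in $(\tau,\sigma,s_i)$. This step also shows the feasible set is convex.

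Finally, condition \eqref{eq:feasibility_CVaR} should be shown to guarantee that the ambiguity set is nonempty and that the dual infimum is attained at some $\sigma>0$ rather than at the boundary. It is exactly the value of $W_c^\epsilon(\hat{\Prob}_n,\nu')$ at the minimal-cost admissible $\Q$, which is computed by the same Gaussian calculation with $\ell\equiv0$. This value ensures $\rho$ exceeds the threshold under which \eqref{eq:dual} is finite and the strong duality of \cite{sinkhorn} applies. In the case $\sigma=0$ the constraint is interpreted as its closure limit, which the LMI form handles automatically.
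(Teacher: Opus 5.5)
Your skeleton matches the paper's proof. You unfold the CVaR with the extra piece $a_{J+1}=0$, $b_{J+1}=\tau$, and swap $\sup$ and $\inf$ by Sion using \cref{proposition:convexity} and \cref{thm: Sinkhorn_compactness}. (Your uniform-integrability argument for upper semicontinuity in $\Q$ is more careful than the paper, which only invokes linearity.) You then apply \eqref{eq:dual}, evaluate one Gaussian integral per affine piece, and finish with a Schur complement.

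The gap is exactly the step you flag as the main obstacle, and neither of your two options closes it. Write $f^i_j(z)=(\alpha_j^\top z+\beta_j-\sigma\|\hat{\xi}_i-z\|^2)/(\sigma\epsilon)$, with $\alpha_j=a_j/\gamma$ and $\beta_j=(b_j+\gamma\tau-\tau)/\gamma$, and set $L^i_j=\sigma\epsilon\log\E_{z\sim\nu}[e^{f^i_j(z)}]$. Your two bounds give
\[
\max_{j\in[J+1]}L^i_j\;\le\;\sigma\epsilon\log\E_{z\sim\nu}\Bigl[e^{\max_{j}f^i_j(z)}\Bigr]\;\le\;\max_{j\in[J+1]}L^i_j+\sigma\epsilon\log(J+1),
\]
which is a sandwich, not an identity.

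The per-piece route is what the paper does through \cref{lemma: technical} and the monotonicity of $g(\cdot)=\log\E_\nu[e^{(\cdot)}]$. However, only the second implication of that lemma (max-constraint $\Rightarrow$ per-piece constraints) carries over when $g$ is an integral functional rather than a scalar function applied pointwise. The ``trivial'' first implication needs the maximizing index to be fixed, whereas here it varies with $z$. In fact $\E_\nu[e^{\max_j f_j}]>\max_k\E_\nu[e^{f_k}]$ whenever no single piece is maximal $\nu$-almost everywhere. For example, with $\nu=\mathcal{N}(0,1)$ and $f_{1,2}(z)=\pm z$, one has $\E_\nu[e^{\pm z}]=e^{1/2}\approx 1.65$ but $\E_\nu[e^{|z|}]\approx 2.77$. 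In the CVaR setting this happens for every $\sigma>0$ as soon as some $a_j\neq 0$, because the constant piece $\beta_{J+1}=\tau$ and any non-constant piece each dominate on a half-space.

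Consequently, your argument (and the paper's) establishes only one direction: \eqref{eq:DR_Sinkhorn_CVaR} implies feasibility of the per-piece system, so that system is a relaxation of the DR CVaR constraint. The converse fails in general for $\epsilon>0$. Take $J=1$ and $a_1\neq 0$, and shift $b_1$ until the per-piece optimum is exactly $0$. Since the exact dual objective strictly exceeds the per-piece one at every $(\tau,\sigma)$ with $\sigma>0$, the true worst-case CVaR is then strictly positive whenever the exact dual infimum is attained.

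What the upper half of your sandwich does give is a certified inner approximation. Impose the per-piece LMIs with $\zeta$ replaced by $\zeta+\sigma\epsilon\log(J+1)$; this term is linear in $\sigma$, so the LMI structure is preserved, and the resulting constraints are sufficient for \eqref{eq:DR_Sinkhorn_CVaR}. The two sets coincide as $\epsilon\to 0$, consistent with \cref{remark: Wasserstein_CVaR}, where $\sup_z$ and $\max_j$ commute. An exact finite-dimensional equivalence would require handling $\log\E_\nu[e^{\max_j f^i_j}]$ itself, which has no closed form.

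As a side remark, completing the square as you describe yields the weight $(4\sigma I+2\sigma\epsilon\Sigma^{-1})^{-1}$, as in the paper's own appendix computation. So the upper-left block of the LMI should read $4\sigma I+2\sigma\epsilon\Sigma^{-1}$, not $4\sigma I+\sigma\epsilon\Sigma^{-1}$ as displayed.
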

\begin{remark}
\label{remark: Wasserstein_CVaR}
As expected, we show in Appendix \ref{appendix:remark_proof} that, when $\epsilon\rightarrow 0$, the set of constraints in \eqref{eq:CVaR_set} reduces to the one obtainable with the Wasserstein distance.
\end{remark}
By virtue of \cref{proposition: relationships}, the $\mathcal{S}$-set is contained within the Wasserstein set. Consequently, satisfaction of the worst-case Wasserstein CVaR constraints implies satisfaction of the Sinkhorn constraints. This follows immediately from the fact that the supremum of a function over a subset is upper-bounded by the supremum over the enclosing set.
\subsection{Tractable reformulation of the control problem}
\label{subsection:reformulation}
In this subsection, we show that \eqref{eq:Sinkhorn_problem} can be recast into a convex optimization problem. The first Theorem characterizes the optimization problem to find the optimal maps in \eqref{eq:Sinkhorn_problem}. Instead, the subsequent Proposition proves that the problem is actually convex.
\begin{theorem}
\label{thm:main}
Assume $\mathcal{Z} = \R^s$, $c(x, y) = \|x -y\|^2$, and let $s = d + (T-1)p$. Problem \eqref{eq:Sinkhorn_problem} is feasible if and only if
\begin{align}
\rho \geq &\frac{\epsilon}{2} \log\left|\Sigma + \frac{\epsilon}{2}I\right| - \frac{\epsilon s}{2}\log\left(\frac{\epsilon}{2}\right) + \frac{\epsilon}{2}\|m\|^2_{\Sigma^{-1}} + \|\hat{\w}^i_T\|^2
\nonumber\\
-& \frac{1}{N} \sum_{i=1}^N \star^{\top}\left(I +\frac{\epsilon}{2}\Sigma^{-1}\right)^{-1}\left(\hat{\w}^i_T + \frac{\epsilon}{2}\Sigma^{-1}m\right).\label{eq:feasibility_condition_control}
\end{align}
Moreover, the optimal closed-loop maps $\mathbf{\Phi}^{\star}, \bm\phi^\star$ in \eqref{eq:Sinkhorn_problem} are given by the minimizer of
\begin{subequations}
\label{eq:convex SLS}
\begin{align}
&\inf\ \lambda\rho + \frac{1}{N}\sum_{i=1}^N s_i
\nonumber
\\
&\textrm{subject to},\ \forall i\in[N],\ \forall j \in[\Tilde{L}]:\nonumber
\\
&\lambda\in\R_+,\ Q\in \Symm^s,\ q\in\R^s,\ c\in\R,\  s_i\in\R,\ \zeta_i\in\R,\nonumber
\\
&\tau\in\R,\ \sigma\in\R_+,\ z_i\in\R\nonumber
\\
& M = \lambda\left(I+\frac{\epsilon}{2}\Sigma^{-1}\right) - Q, M \succ 0\label{eq:M>0}
\\
&\frac{\lambda\epsilon s}{2}\log\left(\frac{\lambda\epsilon}{2}\right) - \frac{\lambda\epsilon}{2}\log|\Sigma|- \frac{\lambda\epsilon}{2}\log|M| + \zeta_i\leq s_i\label{eq:logdet inequality}
\\
& \begin{bmatrix}
M && q + \lambda (\hat{\w}^i_T + \frac{\epsilon}{2}\Sigma^{-1}m)\\
\star && \zeta_i - c + \lambda\|\hat{\w}^i_T\|^2 + \frac{\lambda\epsilon}{2}\|m\|^2_{\Sigma^{-1}}\end{bmatrix}\succeq 0\label{eq:LMI2}\\
& \begin{bmatrix}
\label{eq:LMI1 Schur}
\begin{bmatrix}
    Q & q \\ \star &c
\end{bmatrix} && \star\\[1em]
D^\frac{1}{2}\begin{bmatrix}
  \mathbf{\Phi} & \bm\phi  
\end{bmatrix} && I
\end{bmatrix}\succeq 0
\\
&\sigma \rho + \frac{\gamma-1}{\gamma}\tau + \frac{1}{N} \sum_{i=1}^N z_i \leq 0\label{eq:CVaR1}
\\
&\setlength\arraycolsep{1pt}\begin{bmatrix}
        4\sigma I + \sigma\epsilon\Sigma^{-1} &\frac{1}{\gamma}M_j\mathbf{\Phi} + 2\sigma \hat{\w}^i_T + \sigma\epsilon\Sigma^{-1}m \\
        \star & z_i - \zeta - \frac{1}{\gamma}(m_j + M_j\bm\phi) + \sigma\|\hat{\w}^i_T\|^2
    \end{bmatrix}\! \succeq\! 0\label{eq:CVaR2}
\\
& \ \bm{\Phi}\in\R^{T(d+m)\times s} , \bm\phi\in\R^{T(d+m)}\ \textrm{satisfying}\ \eqref{eq:affine_subspace}\nonumber\,,
\end{align}
\end{subequations}
with $a_{\Tilde{L}+1}\!=\!0$, $b_{\Tilde{L}+1}\!=\!\tau$, and $\zeta = \frac{\sigma\epsilon d}{2}\log\epsilon - \frac{\sigma\epsilon}{2}\log|2\Sigma + \epsilon I| - \frac{\sigma\epsilon}{2}\|m\|^2_{\Sigma^{-1}}$.
\end{theorem}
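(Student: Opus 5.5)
The proof splits \eqref{eq:Sinkhorn_problem} into its two parts and handles each with the strong dual \eqref{eq:dual}. Under the SLS parametrization of Appendix~\ref{appendix:SLS}, the stacked input--state vector is $\mathbf{\Phi}\tilde{\w} + \bm\phi$. The cost is therefore the quadratic $J = \star^\top D(\mathbf{\Phi}\tilde{\w}+\bm\phi)$, and each constraint loss $M_j(\mathbf{\Phi}\tilde{\w}+\bm\phi)+m_j$ is affine in $\tilde{\w}$.

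\textbf{Feasibility.} The first step is the feasibility claim. By \eqref{eq:dual}, the worst-case expectation over $\B_{\rho,\epsilon}(\hat{\Prob})$ is finite, and the set is nonempty, exactly when $\rho$ is at least $\min_{\Q} W^\epsilon_c(\Q,\hat{\Prob})$. This is the value obtained as $\lambda\to\infty$ in \eqref{eq:dual}, namely $-\epsilon\,\E_{x\sim\hat\Prob}\log\E_{\nu}[e^{-\|x-z\|^2/\epsilon}]$. For Gaussian $\nu$ this inner expectation is a Gaussian integral. Completing the square in $z$ with precision matrix $\tfrac{2}{\epsilon}I+\Sigma^{-1}$ yields exactly the log-determinant and quadratic terms on the right of \eqref{eq:feasibility_condition_control}. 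Since the CVaR constraint can always be met (for instance by driving the affine term appropriately, or because Proposition~\ref{proposition: DR_Sinkhorn_CVaR} holds under the same condition), this gives the ``if and only if''.

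\textbf{Objective.} Next I would reformulate the objective. Write $\E_{\nu}[e^{(\ell(z)-\lambda\|x-z\|^2)/(\lambda\epsilon)}]$ and use an epigraph upper bound $\ell(z)\le z^\top Q z+2q^\top z+c$, where $\begin{bmatrix}Q&q\\ \star&c\end{bmatrix}\succeq \star^\top D\begin{bmatrix}\mathbf{\Phi}&\bm\phi\end{bmatrix}$. By a Schur complement this is \eqref{eq:LMI1 Schur}; it is tight at the optimum. The exponent is then a concave quadratic in $z$ with negative Hessian $M=\lambda(I+\tfrac{\epsilon}{2}\Sigma^{-1})-Q$ (up to scaling), so $M\succ0$ is required for finiteness, giving \eqref{eq:M>0}. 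Evaluating the Gaussian integral in closed form produces a term $-\tfrac{\lambda\epsilon}{2}\log|M|$ plus $\tfrac{\lambda\epsilon s}{2}\log(\tfrac{\lambda\epsilon}{2})-\tfrac{\lambda\epsilon}{2}\log|\Sigma|$. It also produces a quadratic-over-$M$ term $\star^\top M^{-1}(q+\lambda(\hat{\w}^i_T+\tfrac{\epsilon}{2}\Sigma^{-1}m))$ together with the constants. Introducing an epigraph variable $\zeta_i$ for the latter and applying a Schur complement gives \eqref{eq:LMI2}, and the remaining terms give \eqref{eq:logdet inequality}. Summing with weights $1/N$ yields the objective $\lambda\rho+\frac1N\sum_i s_i$.

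\textbf{Constraints and main obstacle.} The constraints \eqref{eq:Sinkhorn_CVaR} follow by applying Proposition~\ref{proposition: DR_Sinkhorn_CVaR} to each $j\in[\tilde L]$. Here $a_j^\top=\frac{1}{\gamma}\cdot$ the row $M_j\mathbf{\Phi}$, and $b_j=m_j+M_j\bm\phi$; substituting gives \eqref{eq:CVaR1}--\eqref{eq:CVaR2}. The SLS affine constraints \eqref{eq:affine_subspace} are appended, and optimal maps are recovered as minimizers. The main obstacle is the objective step. It requires justifying that replacing the quadratic $\ell$ by its upper bound $(Q,q,c)$ loses nothing, which holds because the infimum over $(Q,q,c)$ attains equality and the dual objective is monotone in $\ell$. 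It also requires handling the joint convexity of $\lambda\log\lambda$ and $-\lambda\log|M|$ in $(\lambda,M)$; the latter is a perspective of $-\log\det$, whose convexity is deferred to the subsequent Proposition.
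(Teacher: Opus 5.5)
Your argument is correct and follows the paper's overall structure. The paper gets the dual of the worst-case quadratic cost by invoking \cref{thm: duality} with $Q=\mathbf{\Phi}^\top D\mathbf{\Phi}$, $q=\mathbf{\Phi}^\top D\bm\phi$, instead of redoing the Gaussian integral. It then uses $\zeta_i$ and a Schur complement to obtain \eqref{eq:logdet inequality}--\eqref{eq:LMI2}, and \cref{proposition: DR_Sinkhorn_CVaR} to obtain \eqref{eq:CVaR1}--\eqref{eq:CVaR2}. Finally it relaxes $S=\tilde{\mathbf{\Phi}}^\top D\tilde{\mathbf{\Phi}}$ to $S\succeq\tilde{\mathbf{\Phi}}^\top D\tilde{\mathbf{\Phi}}$, written as \eqref{eq:LMI1 Schur}.

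Where you differ is the proof that this relaxation is tight. The paper stays inside the finite-dimensional program. It takes an optimal point, replaces $S^\star$ by $\tilde S=[\mathbf{\Phi}^\star\ \bm\phi^\star]^\top D[\mathbf{\Phi}^\star\ \bm\phi^\star]$, and checks each constraint by hand:
$M$ stays positive definite because $\tilde Q\preceq Q^\star$;
the log-det term can only decrease, by monotonicity of the determinant;
the matrix in \eqref{eq:LMI2} changes by $P^\top(S^\star-\tilde S)P\succeq 0$ with $P=\mathrm{diag}(I,-1)$.
So the paper only needs exactness of \cref{thm: duality} at $S=\tilde S$.

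You argue one level up. $S\succeq\tilde S$ gives a pointwise majorant $\ell_S\ge\ell_{\tilde S}$, and the Sinkhorn dual objective is pointwise nondecreasing in the loss, so for every $\lambda$ the choice $S=\tilde S$ is optimal. This is shorter and more conceptual, and it would work for any family of majorants of the loss. What it needs, and what you should state explicitly, is that \eqref{eq:M>0}--\eqref{eq:LMI2} encode $s_i\ge\lambda\epsilon\log\E_\nu[e^{(\ell_S(z)-\lambda\|\hat{\w}^i_T-z\|^2)/(\lambda\epsilon)}]$ exactly for every $S$ in the relaxed set, i.e. for a general quadratic plus constant $c$. Your Gaussian computation does provide this.

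Your direct derivation of the threshold in \eqref{eq:feasibility_condition_control} is also correct. You compute it as $\min_{\Q}W_c^\epsilon(\Q,\hat{\Prob})=-\epsilon\,\E_{x\sim\hat{\Prob}}\log\E_\nu[e^{-\|x-z\|^2/\epsilon}]$, which is more self-contained than the paper's step of simply importing the condition from \cref{thm: duality}.

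One justification should be removed: the claim that the CVaR constraints can always be met is false in general. For example, \eqref{eq:affine_subspace} forces:
the first block rows of $\mathbf{\Phi}_{xx}$ and $\mathbf{\Phi}_{xy}$ to be $[I\ 0\ \cdots\ 0]$ and $0$;
the first block of $\bm\phi_x$ to be $0$.
So the closed-loop $x_0$ is just the random initial state. A state constraint at $t=0$ then has a worst-case CVaR that no controller can affect, and that value can be positive.

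As in the paper, \eqref{eq:feasibility_condition_control} should be read as the condition for $\B_{\rho,\epsilon}(\hat{\Prob})$ to be nonempty, i.e. for the reformulations of \cref{thm: duality} and \cref{proposition: DR_Sinkhorn_CVaR} to be valid. Your second reason is the one the paper relies on, and it gives validity of the reformulation, not satisfiability of \eqref{eq:Sinkhorn_CVaR}.

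A small slip: the substitution into \cref{proposition: DR_Sinkhorn_CVaR} is $a_j^\top=M_j\mathbf{\Phi}$, $b_j=m_j+M_j\bm\phi$. The factor $1/\gamma$ already appears in the proposition's LMI, so with your $a_j^\top=\frac{1}{\gamma}M_j\mathbf{\Phi}$ you would get $1/\gamma^2$.
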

\begin{proof}
Consider the inner supremum in \eqref{eq:Sinkhorn_cost}. We apply \cref{thm: duality} in Appendix \ref{appendix:lemma} with $Q = \mathbf{\Phi}^{\top} D\mathbf{\Phi}$ and $q = \mathbf{\Phi}^\top D\bm\phi$. The feasibility condition \eqref{eq:feasibility_condition_control} was already proved in the Lemma. By performing a change of variables, the affine term $\bm\phi^\top D\bm\phi$ can be shifted from the objective to the left hand side of the constraint in (\ref{eq:convex program}). Next, we introduce auxiliary variables $\zeta_i$, for $i\in[N]$ to upper bound the linear part of the constraint in (\ref{eq:convex program}) which directly yields the constraint \eqref{eq:logdet inequality}. Now, consider the remaining linear part of the constraint
\begin{align*}
\star^{\top}\left(\lambda\left(I +\frac{\epsilon}{2}\Sigma^{-1}\right)-Q\right)^{-1}\left(q + \lambda\left(\hat{\w}^i_T+\frac{\epsilon}{2}\Sigma^{-1}m\right)\right)+
\\
\bm\phi^\top D\bm\phi- \lambda \|\hat{\w}^i_T\|^2 - \frac{\lambda\epsilon}{2}\|m\|^2_{\Sigma^{-1}}\leq \zeta_i\label{eq:second_ineq}\,.
\end{align*}
With the change of variables $c = \bm\phi^\top D\bm\phi$ and applying the Schur complement, this constraint is equivalent to \eqref{eq:LMI2}. 
Moreover, applying \cref{proposition: DR_Sinkhorn_CVaR} to \eqref{eq:Sinkhorn_CVaR} yields \eqref{eq:CVaR1}-\eqref{eq:CVaR2}.\looseness=-1

Next, define $\Tilde{\mathbf{\Phi}} =\begin{bmatrix}
    \mathbf{\Phi} & \bm\phi
\end{bmatrix}$ and $S = \begin{bmatrix}
    Q & q \\ \star &c
\end{bmatrix}$. We must enforce $S = \Tilde{\mathbf{\Phi}}^\top D \Tilde{\mathbf{\Phi}}$, which is a non-convex constraint. Nevertheless, we relax it to $S \succeq \Tilde{\mathbf{\Phi}}^\top D \Tilde{\mathbf{\Phi}}$, which can be reformulated with Schur complement as \eqref{eq:LMI1 Schur}. This leads to problem \eqref{eq:convex SLS}, denoted with $P_{\text{ineq}}$. Similarly, we define $P_{\text{eq}}$ as the same problem but imposing $S \preceq \Tilde{\mathbf{\Phi}}^\top D \Tilde{\mathbf{\Phi}}$. Note that $P_{\text{ineq}}$ can have a smaller optimal solution because we enlarged the feasible set compared to $P_{\text{eq}}$. We now show that this relaxation is tight.

Let $\{Q^\star, q^\star, c^\star, \mathbf{\Phi}^\star, \bm\phi^\star, \lambda^\star, s^\star, \zeta^\star, \tau^\star, \sigma^\star, z^\star\}$ be an optimal solution to \eqref{eq:convex SLS} with optimal cost $J^{\star} = \lambda^{\star}\rho + \frac{1}{N}\sum_{i=1}^N s^{\star}_i$. Moreover, we construct a candidate solution for $P_{\text{eq}}$ as $\{\Tilde{S},\mathbf{\Phi}^{\star}, \bm\phi^\star, \lambda^{\star}, s^{\star}, \zeta^{\star}, \tau^\star, \sigma^\star, z^\star\}$, with $\Tilde{S} = \begin{bmatrix}
    \mathbf{\Phi}^\star & \bm\phi^\star
\end{bmatrix}^\top D\begin{bmatrix}
    \mathbf{\Phi}^\star & \bm\phi^\star
\end{bmatrix}$. This satisfies the equality constraint by design and it has the same cost $J^{\star}$ since the latter depends only on $\lambda^{\star}$ and $s^{\star}$. Thus, if such a candidate point is feasible, it is also optimal. Hence, we verify that the constraints in \eqref{eq:convex SLS} are satisfied. 

We start with \eqref{eq:M>0} by noticing that $\Tilde{S} \preceq S^\star$ holds by construction. Moreover, since the principal minors of a positive semidefinite matrix are nonnegative we have $\mathbf{\Phi}^{\star \top} D \mathbf{\Phi}^\star\preceq Q^\star$. This implies that \eqref{eq:M>0} is satisfied because $\lambda \left(I + \frac{\epsilon}{2}\Sigma^{-1}\right) - \mathbf{\Phi}^{\star \top} D \mathbf{\Phi}^\star \succeq \lambda \left(I + \frac{\epsilon}{2}\Sigma^{-1}\right) - Q^{\star} \succ 0$. Furthermore, \eqref{eq:logdet inequality} is satisfied because it holds
\begin{equation*}
\log\left|\lambda \left(I + \frac{\epsilon}{2}\Sigma^{-1}\!\right) - Q^{\star}\right|\! \leq\! \log\left|\lambda \left(I + \frac{\epsilon}{2}\Sigma^{-1}\!\right) - \mathbf{\Phi}^\star D \mathbf{\Phi}^\star\right|\, ,
\end{equation*}
where we used the monotonicity of the logarithm function together with the fact that $A \preceq B$ implies $|A| \leq |B|$. Finally, we show that the candidate solution satisfies \eqref{eq:LMI2}. The constraint can be written as
\begin{align*}
&\setlength\arraycolsep{.9pt}\underbrace{\begin{bmatrix}
\lambda^\star\left(I+\frac{\epsilon}{2}\Sigma^{-1}\right) - Q^\star & q^\star + \lambda^\star( \hat{\w}^i_T + \frac{\epsilon}{2}\Sigma^{-1}m)
\\
\star & \zeta_i - c^\star + \lambda^\star\|\hat{\w}^i_T\|^2 + \frac{\lambda^\star\epsilon}{2}\|m\|^2_{\Sigma^{-1}}
\end{bmatrix}}_{\succeq 0}\! +\!
\\
&\underbrace{\begin{bmatrix}
Q^\star - \mathbf{\Phi}^\star D \mathbf{\Phi}^\star & {\mathbf{\Phi}^\star}^\top D \phi^\star - q^\star
\\
\star & c^\star - \phi^\star D\phi^\star
\end{bmatrix}}_{(\Delta)}\, ,
\end{align*}
where the first term is positive semidefinite because the optimal solution satisfies \eqref{eq:LMI2} by construction. We notice that $(\Delta) = P^\top (S^\star - \Tilde{S})P$, with $P = \begin{bmatrix}
    I & 0\\
    0 & -1
\end{bmatrix}$. Since $P$ is nonsingular, thanks to Sylvester's law of inertia, $(\Delta)$ has the same number of positive, negative and null eigenvalues of $S^\star - \Tilde{S}$. This in turn implies that $(\Delta)$ is positive semidefinite. This concludes the proof because $\Delta \succeq 0$ implies that \eqref{eq:LMI2} is satisfied. 
\end{proof}
The optimization problem in \eqref{eq:convex SLS} is amenable to standard solvers because, as we show in the next Proposition, it is a convex program. The proof is deferred to Appendix \ref{app:proof_convexity}.
\begin{proposition}
\label{prop:convexity_problem}
    The optimization problem in \eqref{eq:convex SLS} has a linear objective. Moreover, the constraints are convex sets in the optimization variables $(\lambda,\ Q,\ q,\ c,\ \mathbf{\Phi},\ \bm\phi,\ s_i,\ z_i,\ \zeta_i,\ \tau,\ \sigma),\ \forall i\in[N]$. Hence, \eqref{eq:convex SLS} is convex. 
\end{proposition}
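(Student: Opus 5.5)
The plan is to check the objective and each constraint of \eqref{eq:convex SLS} one at a time, showing that each is either affine, an LMI that is affine in the decision variables, or a sublevel set of a convex function. The objective $\lambda\rho + \frac{1}{N}\sum_i s_i$ is linear in $(\lambda, s_i)$, so that part is immediate. The same goes for \eqref{eq:CVaR1}, which is affine in $(\sigma,\tau,z_i)$, and for the SLS achievability constraints \eqref{eq:affine_subspace}, which define an affine subspace in $(\mathbf{\Phi},\bm\phi)$.

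The constraints that are matrix inequalities come next. In \eqref{eq:M>0}, $M$ is affine in $(\lambda, Q)$, so $M\succ 0$ is an open convex cone condition (the preimage of the PSD cone under an affine map). In \eqref{eq:LMI2}, each block is affine in $(\lambda,Q,q,c,\zeta_i)$; since $\epsilon,\Sigma,m,\hat{\w}^i_T$ are data, this is an LMI. The same holds for \eqref{eq:LMI1 Schur}: $D^{1/2}[\mathbf{\Phi}\ \bm\phi]$ is linear in the variables and the remaining blocks are $S$ and $I$. For \eqref{eq:CVaR2}, the blocks $4\sigma I+\sigma\epsilon\Sigma^{-1}$ and $\frac{1}{\gamma}M_j\mathbf{\Phi}+2\sigma\hat{\w}^i_T+\sigma\epsilon\Sigma^{-1}m$ are linear. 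The quantity $\zeta$ is $\sigma$ times a constant depending only on $(\epsilon,d,\Sigma,m)$, so the bottom-right entry is affine in $(z_i,\sigma,\bm\phi,\tau)$. Each of these sets is therefore convex.

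The main obstacle is the log-determinant constraint \eqref{eq:logdet inequality}. I would rewrite it as $f(\lambda,Q)+\zeta_i\le s_i$ with
\[
f(\lambda,Q)=\frac{\epsilon s}{2}\lambda\log\lambda+\frac{\epsilon s}{2}\log\left(\frac{\epsilon}{2}\right)\lambda-\frac{\epsilon}{2}\log|\Sigma|\,\lambda-\frac{\epsilon}{2}\lambda\log|M|,
\]
and show that $f$ is jointly convex on $\{\lambda>0,\ M\succ0\}$. The first term is convex because $\lambda\log\lambda$ is convex, and the next two are linear in $\lambda$. For the last term, I would factor $M=\lambda(A-Q/\lambda)$ with $A=I+\frac{\epsilon}{2}\Sigma^{-1}$, which gives
\[
-\lambda\log|M| = -s\,\lambda\log\lambda-\lambda\log|A-Q/\lambda|.
\]
Combining this with the first term, the $\lambda\log\lambda$ contributions cancel exactly: $\frac{\epsilon s}{2}\lambda\log\lambda-\frac{\epsilon s}{2}\lambda\log\lambda=0$. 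What remains in $f$ is a linear function of $\lambda$ plus $-\frac{\epsilon}{2}\lambda\log|A-Q/\lambda|$. This last term is the perspective of the convex function $Q\mapsto -\log|A-Q|$, which is convex because $A-Q$ is affine in $Q$ and $-\log\det$ is convex on the PD cone. Perspectives of convex functions are jointly convex for $\lambda>0$, so $f$ is jointly convex. The boundary case $\lambda=0$ is handled by the lower semicontinuous closure of the perspective; note that $M\succ0$ then forces $-Q\succ0$.

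It follows that \eqref{eq:logdet inequality} is the epigraph-type set $\{f(\lambda,Q)+\zeta_i-s_i\le0\}$ of a jointly convex function, hence convex. Intersecting all these convex sets gives a convex feasible region, and together with the linear objective this makes \eqref{eq:convex SLS} a convex program.
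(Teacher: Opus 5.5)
Your proposal is correct and follows essentially the paper's route: the affine constraints and LMIs are handled directly, and the nonlinear part of \eqref{eq:logdet inequality} is identified as the perspective $\lambda g(Q/\lambda)$ of the convex function $g(Q)=\frac{\epsilon}{2}\log\bigl((\frac{\epsilon}{2})^s/|I+\frac{\epsilon}{2}\Sigma^{-1}-Q|\bigr)$. The differences are cosmetic. You write out the cancellation of the $\lambda\log\lambda$ terms, whereas the paper absorbs it by placing $(\epsilon/2)^s$ inside $g$. You also treat the boundary case $\lambda=0$, which the paper leaves implicit.
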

We conclude the section noticing that \eqref{eq:convex SLS}, due to the nonlinear constraint \eqref{eq:logdet inequality}, is a conic problem when $\epsilon\neq0$. This is in contrast with the Wasserstein counterpart of \eqref{eq:convex SLS}, which is a semidefinite program \cite{kuhn2019wasserstein}.
\section{Numerical Results}
\label{sec:numerical}

\subsection{Trajectory Planning Boeing B-747}
In this section, we benchmark the policy $\bm \pi_S$ synthesized using \eqref{thm:main} against the Wasserstein counterpart $\bm \pi_W$, and a certainty-equivalence policy $\bm \pi_\text{emp}$, which is an instance of $\bm \pi_S$ when $\rho, \epsilon\rightarrow 0$ in \eqref{eq:convex SLS}. For our experiments,\footnote{All our experiments were run on an M3 Pro CPU machine with 36GB RAM. All SDP problems were modeled in Matlab 2025b using Yalmip and solved with MOSEK.
Our source code is publicly available at \href{https://github.com/DecodEPFL/Sinkhorn_DRCC}{\texttt{https://github.com/DecodEPFL/Sinkhorn\_DRCC.git}}.} we consider the two-input fourth order plant obtained by discretizing with sampling period $T_s = \SI{0.1}{\s}$ a continuous-time model of a Boeing B-747 aircraft flying at an altitude of $\SI{20000}{ft}$ with a speed of Mach 0.8 \cite{ishihara1992design}. The state and input vector are defined as $x(t) = [\beta(t)\ p(t)\ r(t)\ \phi(t)]^\top$ and $u(t) = [\delta_a(t)\ \delta_r(t)]^\top$, respectively, where $\beta(t)$ is the sideslip angle, $p(t)$ is the roll rate, $r(t)$ is the yaw rate, $\phi(t)$ is the roll angle, $\delta_a(t)$ is the aileron deflection and $\delta_r(t)$ is the rudder deflection. We measure all angles in radians and all angular velocities in radians per second. The state-space model matrices describing the discrete-time system dynamics are given by
\begin{equation*}
    \begin{aligned}
A &= \begin{bmatrix} 
0.9801 & 0.0003 & -0.0980 & 0.0038 \\ 
-0.3868 & 0.9071 & 0.0471 & -0.0008 \\ 
0.1591 & -0.0015 & 0.9691 & 0.0003 \\ 
-0.0198 & 0.0958 & 0.0021 & 1.000 
\end{bmatrix}, \\
B &= \begin{bmatrix} 
-0.0001 & 0.0058 \\ 
0.0296 & 0.0153 \\ 
0.0012 & -0.0908 \\ 
0.0015 & 0.0008 
\end{bmatrix}\,.
\end{aligned}
\end{equation*}
We fix $T=10$ and $D = \text{diag}(I_4, 0.01)$.
We consider the goal of steering the system from a given initial condition $x_0 = \begin{bmatrix}
    0.7\ 0.1\ 0.5\ 0.3
\end{bmatrix}^\top$ to the terminal constraint set defined by the polytope $\mathcal X = \{ x \in \mathbb{R}^4 \mid |x| \leq x_{\text{max}} \}$, where $x_{\text{max}} = \begin{bmatrix}
0.3491\ 0.2618\ 0.1745\  0.5236
\end{bmatrix}^\top$. Moreover, we set the constraints violation probability in \eqref{eq:Sinkhorn_problem} to $\gamma = 0.3$. For simplicity, we do not consider input constraints, that is, we let $\mathcal U = \R^2$.

Airplanes are heavily affected by wind turbulence, which acts as an additive noise to the system. Most commonly, the atmospheric turbulence is represented as the convolution of white noise through an
LTI shaping filter, usually referred to as a Dryden model \cite{milhdbk1797_short, gage2003creating}. The continuous filter for the lateral velocity, roll and yaw angular velocities are given by
\begin{align*}
    H_v(s) &= \sigma_v \sqrt{\frac{2L_v}{\pi V}} \frac{1 + \frac{2\sqrt{3}L_v}{V}s}{\left( 1 + \frac{2L_v}{V}s \right)^2}\,,\\
    H_p(s) &= \sigma_w \sqrt{\frac{0.8}{V}} \frac{\left( \frac{\pi}{4b} \right)^{1/6}}{(2L_w)^{1/3} \left( 1 + \frac{4b}{\pi V} s \right)}\,,\\
    H_r(s) &= \frac{- \frac{s}{V}}{\left( 1 + \frac{3b}{\pi V}s \right)} \cdot H_v(s)\,, 
\end{align*}
respectively, where $V$ is the aircraft velocity, $\sigma_v = \sigma_w = \SI{20}{ft\per\s}$ are turbulence intensities, $L_v = L_w = \SI{875}{ft}$ are turbulence
scale lengths, and $b = \SI{210}{ft}$ indicates wing span. 

To design the different controllers, we generate $n=5$ trajectories of length $T$ of the disturbance. Such samples are obtained simulating the discretization of the previous filters in response to white noise. In particular, $H_p$ and $H_r$ yield the noise realizations along the $p$ and $r$ state components, respectively; the noise on $\beta$ is approximately the ratio of the output of $H_v$ and the velocity of the plane $V$; finally the noise on $\phi$ is obtained integrating the output of $H_p$. We remark that such noise profile is colored and hence violates standard white noise assumptions used in stochastic control. We first compute the policies $\bm \pi_S$ and $\bm \pi_W$ corresponding to different radii $\rho\in\{0.001,\, 0.003,\,0.007\}$ and regularization parameters $\epsilon\in\{2,\, 2.5,\, 3.2,\, 4,\, 5,\, 6.3,\, 8,\, 10,\, 12.5,\, 16,\, 20\}\times 10^{-6}$.

Then, for the testing phase, we generate $20000$ new unseen noise trajectories of the wind using the same Dryden model described previously. 
\cref{fig:comparison} compares the cost and constraint violations obtained by the different controllers in response to such testing samples. We observe that both DR approaches outperform the nominal design $\bm \pi_{\text{emp}}$, which pays substantially higher control cost and incurs a large constraint violation rate when deployed on unseen test data. Moreover, the benefit of introducing an entropic regularization in the design of $\bm\pi_S$ can be seen in the reduction of conservatism; indeed, while $\bm\pi_S$ incurs a slightly higher constraint violation rate than $\bm\pi_W$ for $\rho = 0.001$, $\bm\pi_S$ consistently meets the required safety threshold $1- \gamma$, and does so with a substantially lower control cost.

This can also be observed in \cref{fig:trajectories}, where terminal states are projected onto the angle and angular velocity components. We see that $\bm \pi_{\text{emp}}$ showcases a large violation rate, failing to comply with the design requirements. In contrast, both $\bm\pi_S$ and $\bm\pi_W$ guarantee that all system trajectories remain inside the feasible region. In addition, since the $\mathcal{S}$-set is smaller according to \cref{proposition: relationships}, the level of conservatism is reduced and $\bm\pi_S$ can effectively stay closer to the border of the gray feasible area.
\begin{figure}[h]
    \centering
    \includegraphics[trim=0 0 0 0, clip, width = \columnwidth]{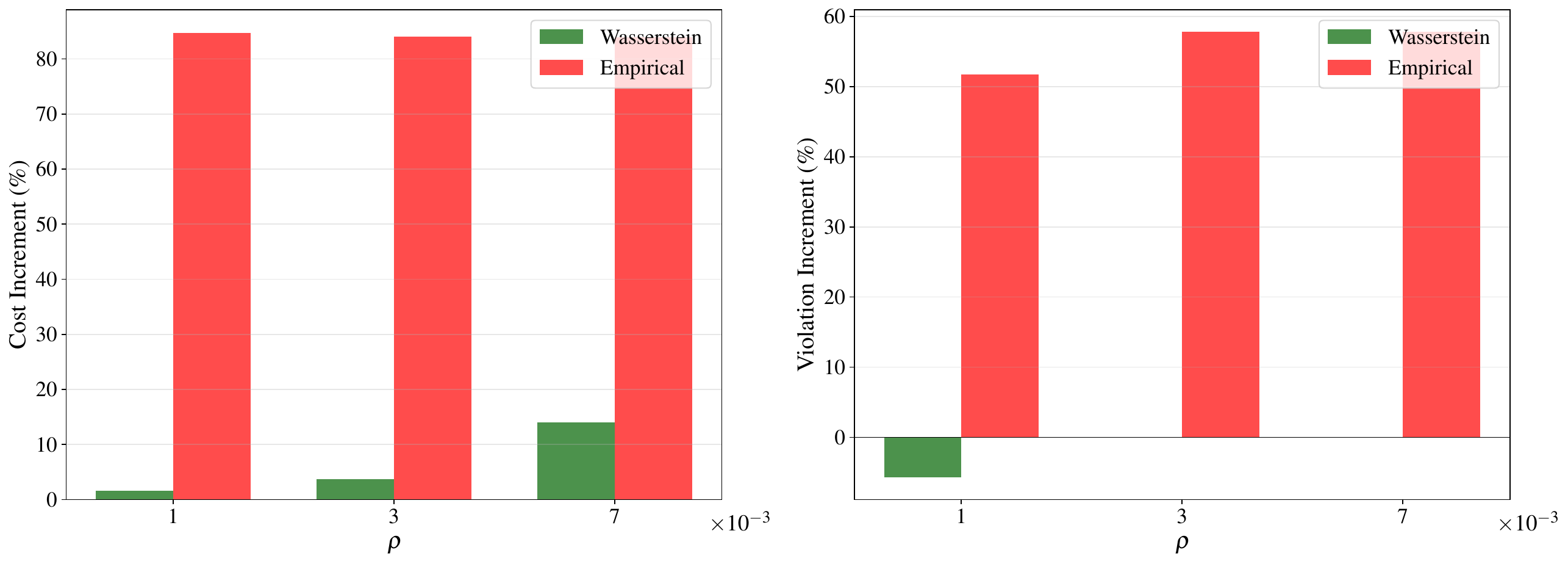}
    \caption{Comparison in cost (left) and violation (right) percentage increase of the Wasserstein DRC in green and empirical controller in red with respect to the Sinkhorn DRC. The depicted Sinkhorn baseline represents the best performance across all tested values of $\epsilon$ in response to unseen wind turbulence realizations.}
    \label{fig:comparison}
\end{figure}
\begin{figure}[h]
    \centering
    \includegraphics[trim=0 0 0 0, clip, width=\columnwidth]{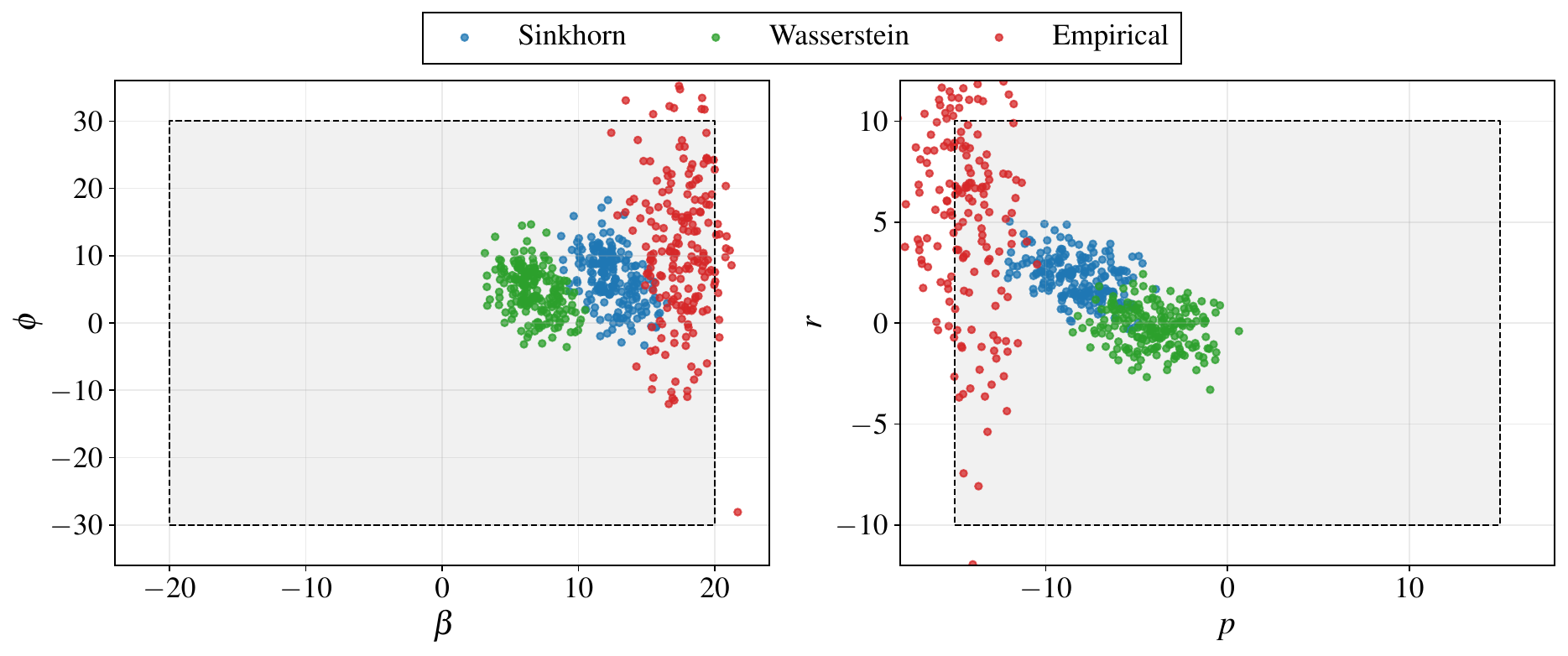}
    \caption{Terminal state of different trajectories given by the policies $\bm \pi_S$ (blue), $\bm\pi_W$ (green), and $\bm\pi_{\text{emp}}$ (red) in response to samples taken from a  random probability in the Sinkhorn ambiguity set. The gray-shaded area represents the terminal feasible set.}
\label{fig:trajectories}
\end{figure}
\subsection{Implementation and complexity analysis}
In \cref{subsection:reformulation}, we established that problem \eqref{eq:convex SLS} is jointly convex in the optimization variables. However, standard modeling toolboxes (e.g., YALMIP) are unable to directly represent this formulation. Since the problem is scalar in the dual variable $\lambda$, we address this limitation by performing a golden section search over the interval $[0,\lambda_{\max}]$, where $\lambda_{\max}$ is a sufficiently large upper bound on the optimal dual variable. At each iteration, the objective function is evaluated by fixing $\lambda$ and solving the resulting conic optimization problem over the remaining variables. The overall computational complexity of solving \eqref{eq:convex SLS} can therefore be decomposed into two parts. The first corresponds to the golden section search, which is known to require $\mathcal{O}(\log(1/\upsilon))$ iterations to achieve an accuracy $\upsilon>0$. The second corresponds to the cost of evaluating the objective function at each iteration, which involves solving a conic optimization problem with $\mathcal{O}(T^2(p^2 + pm + pd) + N)$ decision variables.

\section{Conclusion}
In this paper, we studied topological and variational properties of Sinkhorn ambiguity sets, proving their convexity and weak-compactness. We illustrated the relevance of these results in a finite-horizon Sinkhorn DR control problem with DR safety constraints. Leveraging convexity and weak compactness of Sinkhorn ambiguity sets, we characterized the DR safety constraints as convex sets, and we presented an equivalent, tractable reformulation of the resulting DR controller synthesis problem over affine feedback policies using the system level parametrization. Our numerical experiments  confirm that introducing an entropic regularization in the optimal transport definition allows reducing conservatism, especially when only a limited number of noise samples is available. Promising directions for future work include the development of gradient-based methods to efficiently solve the Sinkhorn DR control problem and extensions of the proposed approach to nonlinear systems.

\bibliographystyle{IEEEtran}
\bibliography{reference}

\begin{thebibliography}{10}
\providecommand{\url}[1]{#1}
\csname url@samestyle\endcsname
\providecommand{\newblock}{\relax}
\providecommand{\bibinfo}[2]{#2}
\providecommand{\BIBentrySTDinterwordspacing}{\spaceskip=0pt\relax}
\providecommand{\BIBentryALTinterwordstretchfactor}{4}
\providecommand{\BIBentryALTinterwordspacing}{\spaceskip=\fontdimen2\font plus
\BIBentryALTinterwordstretchfactor\fontdimen3\font minus \fontdimen4\font\relax}
\providecommand{\BIBforeignlanguage}[2]{{%
\expandafter\ifx\csname l@#1\endcsname\relax
\typeout{** WARNING: IEEEtran.bst: No hyphenation pattern has been}%
\typeout{** loaded for the language `#1'. Using the pattern for}%
\typeout{** the default language instead.}%
\else
\language=\csname l@#1\endcsname
\fi
#2}}
\providecommand{\BIBdecl}{\relax}
\BIBdecl

\bibitem{bertsekas2012dynamic}
D.~Bertsekas, \emph{Dynamic programming and optimal control: Volume I}.\hskip 1em plus 0.5em minus 0.4em\relax Athena scientific, 2012, vol.~4.

\bibitem{zhou1998essentials}
K.~Zhou and J.~Doyle, \emph{Essentials of robust control}.\hskip 1em plus 0.5em minus 0.4em\relax Upper Saddle River: Prentice-Hall, 1998.

\bibitem{bernstein1988lqg}
D.~S. Bernstein and W.~M. Haddad, ``{LQG} control with an $\mathcal{H}_\infty$ performance bound: A {R}iccati equation approach,'' in \emph{1988 American Control Conference}.\hskip 1em plus 0.5em minus 0.4em\relax IEEE, 1988, pp. 796--802.

\bibitem{doyle1989optimal}
J.~Doyle, K.~Zhou, and B.~Bodenheimer, ``Optimal control with mixed $\mathcal{H}_2$ and $\mathcal{H}_\infty$ performance objectives,'' in \emph{1989 American Control Conference}.\hskip 1em plus 0.5em minus 0.4em\relax IEEE, 1989, pp. 2065--2070.

\bibitem{martin2025guarantees}
A.~Martin, L.~Furieri, F.~D{\"o}rfler, J.~Lygeros, and G.~Ferrari-Trecate, ``On the guarantees of minimizing regret in receding horizon,'' \emph{IEEE Transactions on Automatic Control}, vol.~70, no.~3, pp. 1547--1562, 2025.

\bibitem{goel2023regret}
G.~Goel and B.~Hassibi, ``Regret-optimal estimation and control,'' \emph{IEEE Transactions on Automatic Control}, vol.~68, no.~5, pp. 3041--3053, 2023.

\bibitem{martin2024regret}
A.~Martin, L.~Furieri, F.~D{\"o}rfler, J.~Lygeros, and G.~Ferrari-Trecate, ``Regret optimal control for uncertain stochastic systems,'' \emph{European Journal of Control}, vol.~80, p. 101051, 2024.

\bibitem{van2015distributionally}
B.~P. Van~Parys, D.~Kuhn, P.~J. Goulart, and M.~Morari, ``Distributionally robust control of constrained stochastic systems,'' \emph{IEEE Transactions on Automatic Control}, vol.~61, no.~2, pp. 430--442, 2015.

\bibitem{falconi2025distributionally}
L.~Falconi, A.~Ferrante, and M.~Zorzi, ``Distributionally robust {LQG} control under distributed uncertainty,'' \emph{Automatica}, vol. 174, p. 112128, 2025.

\bibitem{petersen2000minimax}
I.~R. Petersen, M.~R. James, and P.~Dupuis, ``Minimax optimal control of stochastic uncertain systems with relative entropy constraints,'' \emph{IEEE Transactions on Automatic Control}, vol.~45, no.~3, pp. 398--412, 2000.

\bibitem{fochesato2025distributionally}
M.~Fochesato, L.~Falconi, M.~Zorzi, A.~Ferrante, and J.~Lygeros, ``Distributionally robust {LQG} with {K}ullback-{L}eibler ambiguity sets,'' \emph{arXiv preprint arXiv:2505.08370}, 2025.

\bibitem{mohajerin2018data}
P.~Mohajerin~Esfahani and D.~Kuhn, ``Data-driven distributionally robust optimization using the {W}asserstein metric: Performance guarantees and tractable reformulations,'' \emph{Mathematical Programming}, vol. 171, no.~1, pp. 115--166, 2018.

\bibitem{taskesen2024distributionally}
B.~Taskesen, D.~Iancu, {\c{C}}.~Ko{\c{c}}yi{\u{g}}it, and D.~Kuhn, ``Distributionally robust linear quadratic control,'' \emph{Advances in Neural Information Processing Systems}, vol.~36, 2024.

\bibitem{lanzetti2024optimality}
N.~Lanzetti, A.~Terpin, and F.~D{\"o}rfler, ``Optimality of linear policies for distributionally robust linear quadratic gaussian regulator with stationary distributions,'' \emph{arXiv preprint arXiv:2410.22826}, 2024.

\bibitem{aolaritei2023wasserstein}
L.~Aolaritei, M.~Fochesato, J.~Lygeros, and F.~D{\"o}rfler, ``Wasserstein tube {MPC} with exact uncertainty propagation,'' in \emph{2023 62nd IEEE Conference on Decision and Control (CDC)}.\hskip 1em plus 0.5em minus 0.4em\relax IEEE, 2023, pp. 2036--2041.

\bibitem{coulson2021distributionally}
J.~Coulson, J.~Lygeros, and F.~D{\"o}rfler, ``Distributionally robust chance constrained data-enabled predictive control,'' \emph{IEEE Transactions on Automatic Control}, vol.~67, no.~7, pp. 3289--3304, 2021.

\bibitem{micheli2022data}
F.~Micheli, T.~Summers, and J.~Lygeros, ``Data-driven distributionally robust {MPC} for systems with uncertain dynamics,'' in \emph{2022 IEEE 61st Conference on Decision and Control (CDC)}.\hskip 1em plus 0.5em minus 0.4em\relax IEEE, 2022, pp. 4788--4793.

\bibitem{brouillon2025distributionally}
J.-S. Brouillon, A.~Martin, J.~Lygeros, F.~D{\"o}rfler, and G.~Ferrari-Trecate, ``Distributionally robust infinite-horizon control: from a pool of samples to the design of dependable controllers,'' \emph{IEEE Transactions on Automatic Control}, vol.~70, no.~10, pp. 6465--6480, 2025.

\bibitem{gao2023distributionally}
R.~Gao and A.~Kleywegt, ``Distributionally robust stochastic optimization with {W}asserstein distance,'' \emph{Mathematics of Operations Research}, vol.~48, no.~2, pp. 603--655, 2023.

\bibitem{cuturi2013sinkhorn}
M.~Cuturi, ``Sinkhorn distances: Lightspeed computation of optimal transport,'' \emph{Advances in neural information processing systems}, vol.~26, 2013.

\bibitem{azizian2023regularization}
W.~Azizian, F.~Iutzeler, and J.~Malick, ``Regularization for {W}asserstein distributionally robust optimization,'' \emph{ESAIM: Control, Optimisation and Calculus of Variations}, vol.~29, p.~33, 2023.

\bibitem{blanchet2023unifying}
J.~Blanchet, D.~Kuhn, J.~Li, and B.~Taskesen, ``Unifying distributionally robust optimization via optimal transport theory,'' \emph{arXiv preprint arXiv:2308.05414}, 2023.

\bibitem{dapogny2023entropy}
C.~Dapogny, F.~Iutzeler, A.~Meda, and B.~Thibert, ``Entropy-regularized {W}asserstein distributionally robust shape and topology optimization,'' \emph{Structural and Multidisciplinary Optimization}, vol.~66, no.~3, p.~42, 2023.

\bibitem{cescon2025data}
R.~Cescon, A.~Martin, and G.~Ferrari-Trecate, ``Data-driven {D}istributionally {R}obust {C}ontrol {B}ased on {S}inkhorn {A}mbiguity {S}ets,'' in \emph{2025 IEEE 64th Conference on Decision and Control (CDC)}.\hskip 1em plus 0.5em minus 0.4em\relax IEEE, 2025, pp. 4708--4713.

\bibitem{feng2026sinkhorn}
Y.~Feng, X.~Li, S.~X. Ding, H.~Ye, and C.~Shang, ``Sinkhorn {D}istributionally {R}obust {S}tate {E}stimation via {S}ystem {L}evel {S}ynthesis,'' \emph{arXiv preprint arXiv:2602.08018}, 2026.

\bibitem{kuhn2025distributionally}
D.~Kuhn, S.~Shafiee, and W.~Wiesemann, ``Distributionally robust optimization,'' \emph{Acta Numerica}, vol.~34, pp. 579--804, 2025.

\bibitem{sinkhorn}
J.~Wang, R.~Gao, and Y.~Xie, ``Sinkhorn distributionally robust optimization,'' \emph{Operations Research}, 2025.

\bibitem{villani2009optimal}
C.~Villani \emph{et~al.}, \emph{Optimal transport: old and new}.\hskip 1em plus 0.5em minus 0.4em\relax Springer, 2009, vol. 338.

\bibitem{kallenberg2002foundations}
O.~Kallenberg, \emph{Foundations of modern probability}, 2nd~ed.\hskip 1em plus 0.5em minus 0.4em\relax Springer-Verlag, New York, 2002.

\bibitem{kuhn2019wasserstein}
D.~Kuhn, P.~M. Esfahani, V.~A. Nguyen, and S.~Shafieezadeh-Abadeh, ``Wasserstein distributionally robust optimization: Theory and applications in machine learning,'' in \emph{Operations research \& management science in the age of analytics}.\hskip 1em plus 0.5em minus 0.4em\relax Informs, 2019, pp. 130--166.

\bibitem{shafieezadeh2023new}
S.~Shafieezadeh-Abadeh, L.~Aolaritei, F.~D{\"o}rfler, and D.~Kuhn, ``New perspectives on regularization and computation in optimal transport-based distributionally robust optimization,'' \emph{arXiv preprint arXiv:2303.03900}, 2023.

\bibitem{yue}
M.-C. Yue, D.~Kuhn, and W.~Wiesemann, ``On linear optimization over {W}asserstein balls,'' \emph{Mathematical Programming}, vol. 195, no.~1, pp. 1107--1122, 2022.

\bibitem{rockafellar2000optimization}
R.~T. Rockafellar and S.~Uryasev, ``Optimization of conditional value-at-risk,'' \emph{Journal of risk}, vol.~2, pp. 21--42, 2000.

\bibitem{cescon2025global}
R.~Cescon, A.~Martin, and G.~Ferrari-Trecate, ``On the global optimality of linear policies for {S}inkhorn distributionally robust linear quadratic control,'' \emph{arXiv preprint arXiv:2509.00956}, 2025.

\bibitem{ishihara1992design}
T.~Ishihara, H.-J. Guo, and H.~Takeda, ``A design of discrete-time integral controllers with computation delays via loop transfer recovery,'' \emph{Automatica}, vol.~28, no.~3, pp. 599--603, 1992.

\bibitem{milhdbk1797_short}
\emph{{U.S. Military Handbook}}, U.S. Department of Defense, 1997.

\bibitem{gage2003creating}
S.~Gage, ``Creating a unified graphical wind turbulence model from multiple specifications,'' in \emph{AIAA Modeling and simulation technologies conference and exhibit}, 2003, p. 5529.

\bibitem{rockafellar2009variational}
R.~T. Rockafellar and R.~J.-B. Wets, \emph{Variational analysis}.\hskip 1em plus 0.5em minus 0.4em\relax Springer Science \& Business Media, 2009, vol. 317.

\bibitem{billingsley}
P.~Billingsley, \emph{Convergence of probability measures}.\hskip 1em plus 0.5em minus 0.4em\relax John Wiley \& Sons, 2013.

\bibitem{munkres}
J.~Munkres, \emph{Topology}, ser. Featured {Titles} for {Topology}.\hskip 1em plus 0.5em minus 0.4em\relax Prentice Hall, Incorporated, 2000.

\bibitem{mallasto2022entropy}
A.~Mallasto, A.~Gerolin, and H.~Q. Minh, ``Entropy-regularized 2-{W}asserstein distance between {G}aussian measures,'' \emph{Information Geometry}, vol.~5, no.~1, pp. 289--323, 2022.

\bibitem{polyanskiy2014}
Y.~Polyanskiy and Y.~Wu, ``Lecture notes on information theory,'' \emph{Lecture Notes for ECE563 (UIUC) and}, vol.~6, no. 2012-2016, p.~7, 2014.

\bibitem{clement2008wasserstein}
P.~Cl{\'e}ment and W.~Desch, ``Wasserstein metric and subordination,'' \emph{Studia Mathematica}, vol.~1, no. 189, pp. 35--52, 2008.

\bibitem{wang2019system}
Y.-S. Wang, N.~Matni, and J.~C. Doyle, ``A system-level approach to controller synthesis,'' \emph{IEEE Transactions on Automatic Control}, vol.~64, no.~10, pp. 4079--4093, 2019.

\bibitem{Sion1958}
M.~Sion, ``On general minimax theorems,'' \emph{Pacific Journal of Mathematics}, vol.~8, no.~1, pp. 171--176, 1958.

\bibitem{boyd2004convex}
S.~Boyd and L.~Vandenberghe, \emph{Convex optimization}.\hskip 1em plus 0.5em minus 0.4em\relax Cambridge university press, 2004.

\end{thebibliography}

\appendices
\section{Auxiliary Results}
\label{appendix}
We review some well-known facts from measure theory and variational analysis that are used to prove our results. Later in the section we also present a brief overview of the topology of the Wasserstein space. 
\begin{definition}[Lower limit and lower semicontinuity]
The lower limit of a function $f: \R^n \rightarrow \overline{\R} = \R\cup\{\infty\}$ at $\bar{x}$ is the value in $\overline{\R}$ defined by
\begin{align*}
\liminf_{x\rightarrow\bar{x}} f(x) := \lim_{\delta\rightarrow 0}\,\left[ \inf_{x\in B(\bar{x}, \delta)}\, f(x)\right]
=\sup_{\delta>0}\, \left[ \inf_{x\in B(\bar{x}, \delta)}\, f(x)\right]
\end{align*}
where $B(\bar{x}, \delta) := \{x\ |\ \|x- \bar{x}\|\leq \delta)\}$.
\\
A function $f: \R^n \rightarrow \overline{\R}$ is lower semicontinuous (lsc) at $\bar{x}$ if \looseness=-1
\begin{equation*}
    \liminf_{x\rightarrow\bar{x}} f(x) \geq f(\bar{x})\,,
\end{equation*}
and lower semicontinuous on $\R^n$ if it holds for every $\bar{x}\in\R^n$.
\end{definition}

We next introduce the definition of epigraph which can be understood as the collection of all the points of $\R^{n+1}$ lying on or above the graph of $f$. Complementary to this is the sublevel set, which identifies the region of the domain where the function does not exceed a threshold $\alpha\in\R$.
\begin{definition}[Epigraph and sublevel set]
For $f: \R^n \rightarrow \overline{\R}$, the epigraph of $f$ is the set 
\begin{equation*}
\epi f = \{(x,\alpha) \in \R^n \times \R\ |\ f(x)\leq \alpha\}\,,
\end{equation*}
and the sublevel set of level $\alpha\in\R$ is 
\begin{equation*}
    \text{lev}_{\leq\alpha} := \{x\in\R^n\ |\ f(x) \leq \alpha\}\,.
\end{equation*}
\end{definition}

The following Theorem characterizes the closedness of the level sets of $f$ in terms of properties of the function itself.
\begin{theorem}[{\cite[Theorem 1.6]{rockafellar2009variational}}] The following properties of a function $f: \R^n \rightarrow \overline{\R}$ are equivalent:
\begin{enumerate}[label=(\roman*)]
    \item $f$ is lower semicontinuous on $\R^n$;
    \item the epigraph set $\epi f$ is closed in $\R^n\times\R$;
    \item the level sets of type $\text{lev}_{\leq \alpha}f$ are all closed in $\R^n$.
\end{enumerate}
\label{thm:rockafellar}
\end{theorem}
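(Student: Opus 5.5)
The statement is Theorem~\ref{thm:rockafellar}, the standard characterization of lower semicontinuity through closed epigraphs and closed sublevel sets. I will prove the cycle of implications (i)$\Rightarrow$(ii)$\Rightarrow$(iii)$\Rightarrow$(i). The only tool needed is the definition of the lower limit via $\sup_{\delta>0}\inf_{B(\bar x,\delta)}$, recast as a sequential statement. Specifically, $\liminf_{x\to\bar x} f(x)$ is the smallest value in $\overline{\R}$ attained as $\lim f(x^k)$ along some sequence $x^k\to\bar x$ for which $f(x^k)$ converges in $\overline{\R}$. I will first record this sequential form by picking, for each $\delta=1/k$, a point $x^k\in B(\bar x,1/k)$ with $f(x^k)$ within $1/k$ of the infimum, or below $-k$ if the infimum is $-\infty$.

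\emph{(i)$\Rightarrow$(ii).} Take $(x^k,\alpha^k)\in\epi f$ converging to $(\bar x,\bar\alpha)$. Then $f(x^k)\le\alpha^k$ for every $k$. Lower semicontinuity gives
\[
f(\bar x)\le\liminf_k f(x^k)\le\lim_k\alpha^k=\bar\alpha,
\]
so $(\bar x,\bar\alpha)\in\epi f$.

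\emph{(ii)$\Rightarrow$(iii).} Observe that $\text{lev}_{\le\alpha}f\times\{\alpha\}=\epi f\cap(\R^n\times\{\alpha\})$. This is an intersection of two closed sets, hence closed. It is then closed as a subset of $\R^n$ under the homeomorphism $x\mapsto(x,\alpha)$.

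\emph{(iii)$\Rightarrow$(i).} Argue by contradiction. Suppose $\liminf_{x\to\bar x}f(x)<f(\bar x)$ at some $\bar x$. Choose a real $\alpha$ strictly between the two values; this is possible because the lower limit is below $f(\bar x)\in\overline{\R}$, so a finite $\alpha$ fits between them even when the lower limit is $-\infty$. Since $\sup_\delta\inf_{B(\bar x,\delta)}f<\alpha$, every ball $B(\bar x,1/k)$ contains a point $x^k$ with $f(x^k)\le\alpha$. Hence $x^k\in\text{lev}_{\le\alpha}f$ and $x^k\to\bar x$. Closedness of the level set gives $f(\bar x)\le\alpha$, contradicting the choice of $\alpha$.

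No step is a genuine obstacle. The only care needed is with extended values. When $f(\bar x)=+\infty$, the real $\alpha$ must still be chosen finite, as above. In (i)$\Rightarrow$(ii), epigraph points have real $\alpha^k$, so the limit $\bar\alpha$ is real and the chain of inequalities is meaningful.
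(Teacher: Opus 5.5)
Your proof is correct and follows essentially the same route as the standard proof in Rockafellar--Wets, which the paper only cites. That route is the cycle (i)$\Rightarrow$(ii)$\Rightarrow$(iii)$\Rightarrow$(i): closedness of $\epi f$ from the sequential form of lower semicontinuity, sublevel sets as horizontal slices of the epigraph, and a contradiction using a real level strictly between $\liminf_{x\to\bar x} f(x)$ and $f(\bar x)$, with the extended-value cases handled properly. Your argument uses only balls and sequences, so it carries over verbatim to metric spaces. That is the setting the paper actually needs, since it invokes this result for $W_c^{\epsilon}(\cdot,\Prob)$ under the metrizable weak topology on $\mathcal{P}(\mathcal{Z})$.
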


We now give some useful definitions and topological properties regarding probability distributions. In the following we assume that $\mathcal{Z}$ is a
closed subset of $\R^n$, hence we know that $\Z$ is a Polish space, i.e. a separable
and completely metrizable topological space.
\begin{definition}[Weak convergence of probability distributions] A sequence of probability distributions $\{\Prob_j\}_{j\in\N}\in\Pset$, converges 
weakly to $\Prob\in\Pset$ if for any bounded and continuous function $\phi$ on $\Z$, we have
\begin{equation*}
    \lim_{j\rightarrow\infty}\int_\Z \phi(z)d\Prob_j(z) = \int_\Z \phi(z)d\Prob(z)\,.
\end{equation*}
\end{definition}
To prove some results in the paper we equip the space of probability distributions $\Pset$ with the weak topology, i.e. the topology generated by the
open sets given by
\begin{equation*}
    \mathcal{U}_{f,\delta} = \{\Prob\in\Pset\ :\ |\E_\Prob[\phi(Z)]| < \delta\}\,,
\end{equation*}
for any continuous and bounded function $\phi$ on $\Z$ and tolerance $\delta>0$.

We will also use the concept of tightness regarding a family of probability distributions which is widely used in the theory of weak convergence 
and in its applications.
\begin{definition}[Tightness]
A family of probability distributions $\mathcal{S}\subseteq\Pset$ is tight if for every $\epsilon > 0$ there exists a compact subset $K\subseteq\Z$ 
such that $\Prob(Z\notin K) \leq \epsilon$ for all $\Prob\in\mathcal{S}$.
\end{definition}
Intuitively, a family of probability distributions $\mathcal{S}$ is tight if, for any $\epsilon > 0$, there exists a compact set $K$ such that every distribution in the family assigns at least $1 - \epsilon$ of its mass to $K$. This requirement prevents the probability mass from ``escaping'' to infinity or vanishing in the limit.
\begin{example}
For $\mathcal{Z} = \R$ the family of uniform distributions over $[-\frac{1}{n}, \frac{1}{n}]\,, n\in\N$ is tight because the compact set $[-1, 1]$ contains all the probability mass. On the contrary the family $\mathcal{S}$ of uniform distributions over $[-n, n]\,, n\in\N$ is not tight. Indeed, given a tolerance $\epsilon=0.5$ and any compact set $K$ (which is necessarily contained in some interval $[-M, M]$), any distribution in $\mathcal{S}$ with $n>2M$ assigns more than $\epsilon$ mass outside of $K$.
\end{example}
Next, we introduce two notions of compactness of a probability space under the weak topology.
\begin{definition}[Weak compactness]
\label{def:compactness}
We say that a family of probability distributions $\mathcal{S}\subseteq\Pset$ is weakly 
compact if every collection of open subsets of $\mathcal{S}$ whose union is $\mathcal{S}$ contains a finite subcollection whose union is also $\mathcal{S}$.
\end{definition}
\begin{definition}[Weak sequential compactness] We say that a family of probability distributions $\mathcal{S}\subseteq\Pset$ is weakly sequentially 
    compact if every sequence in $\mathcal{S}$ has a subsequence converging to an element of $\mathcal{S}$.
\end{definition}
The concepts of tightness and weak sequential compactness are strictly related by Prokhorov's Thereom.
\begin{theorem}[Prokhorov's theorem, {\cite[Theorem 5.1]{billingsley}}]
A collection of probability distributions $\mathcal{S}\subseteq\Pset$ is tight if and only if the weak closure of $\mathcal{S}$ is weakly sequentially 
compact in $\Pset$.
\end{theorem}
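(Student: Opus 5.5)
We prove the two implications separately, using that $\Z$ is a closed subset of $\R^n$, hence Polish and $\sigma$-compact. Throughout, "relatively weakly sequentially compact" means that every sequence in $\mathcal{S}$ has a subsequence converging weakly to some element of $\Pset$. This is exactly what sequential compactness of the weak closure requires once we note that the weak topology on $\Pset$ is metrizable (e.g.\ by the Lévy--Prokhorov metric), so closure and sequential closure coincide.

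\emph{Tightness implies relative compactness.} Take a sequence $\{\Prob_j\}\subseteq\mathcal{S}$. By tightness, choose an increasing sequence of compact sets $K_1\subseteq K_2\subseteq\dots\subseteq\Z$ with $\Prob_j(\Z\setminus K_m)\le 1/m$ for all $j$.
\begin{enumerate}[label=(\roman*)]
\item \emph{Extract a limit on each compact.} The restrictions $\Prob_j|_{K_m}$ are sub-probability measures on a compact metric space. The space $C(K_m)$ is separable, so by Riesz representation and a diagonal argument on a countable dense family, we get a single subsequence such that $\Prob_j|_{K_m}$ converges vaguely to a finite measure $\mu_m$ on $K_m$, for every $m$. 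This is Banach--Alaoglu in its sequential form.
\item \emph{Glue the limits.} The $\mu_m$ are increasing in $m$ on common Borel sets, which we check through the portmanteau inequalities on open and closed sets. Define $\mu=\lim_m\mu_m$. Since $\mu_m(K_m)\ge 1-1/m$, $\mu$ is a probability measure: tightness is exactly what prevents mass loss.
\item \emph{Verify weak convergence.} For bounded continuous $\phi$, split $\int\phi\,\dd\Prob_j$ into the part on $K_m$ and the remainder, which is at most $\|\phi\|_\infty/m$. Then let $j\to\infty$ and afterwards $m\to\infty$.
\end{enumerate}

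\emph{Relative compactness implies tightness.} Suppose $\mathcal{S}$ is not tight. Since $\Z$ is a closed subset of $\R^n$, closed balls $\bar B(0,R)\cap\Z$ are compact. Failure of tightness then yields $\epsilon>0$ and $\Prob_j\in\mathcal{S}$ with $\Prob_j(\Z\setminus \bar B(0,j))>\epsilon$. By relative compactness, a subsequence converges weakly to some $\Prob\in\Pset$. Each set $\{\|z\|>R\}\cap\Z$ is open in $\Z$, so the portmanteau theorem gives, for fixed $R$,
\[
\Prob(\|z\|>R)\ge\limsup_j \Prob_j(\|z\|>R)\ge\epsilon .
\]
Letting $R\to\infty$ contradicts countable additivity of $\Prob$. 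In a general Polish space one would instead cover $\Z$ by finitely many $1/k$-balls for each $k$; here $\sigma$-compactness makes the argument immediate.

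\emph{Main obstacle.} We expect the construction of the limit in the first direction to be the hardest step. The difficulty is ensuring that the diagonal vague limits are consistent across the $K_m$ and assemble into a genuine countably additive probability measure. If gluing proves awkward, we would instead embed $\Z$ in the one-point compactification of $\R^n$. There we take a weak-$*$ limit in $C(\overline{\R^n})^*$, and tightness shows the limit puts zero mass at infinity and on $\overline{\R^n}\setminus\Z$ (the latter because $\Z$ is closed). The bookkeeping in the remaining steps is routine.
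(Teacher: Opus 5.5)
The step that fails is in your converse direction, where the portmanteau theorem is applied backwards. For an open set $G$ it gives $\Prob(G)\le\liminf_j\Prob_j(G)$. The bound $\Prob(F)\ge\limsup_j\Prob_j(F)$ that you wrote holds only for \emph{closed} $F$, and for the open sets $\{\|z\|>R\}$ it is false. On $\Z=\R$, the measures $\Prob_j=\delta_{R+1/j}$ converge weakly to $\Prob=\delta_R$, yet $\Prob_j(|z|>R)=1$ for all $j$ while $\Prob(|z|>R)=0$. Mass inside an open set can drift to its boundary and be lost in the limit, which is exactly what this direction has to rule out.

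The repair is to use the closed sets $F_R=\{z\in\Z:\|z\|\ge R\}$ (equivalently, apply the open-set bound to their complements). Along the convergent subsequence, $F_R\supseteq\{z:\|z\|>j\}$ whenever $j\ge R$, so
\[
\Prob(F_R)\ \ge\ \limsup_{j}\Prob_j(F_R)\ \ge\ \epsilon\qquad\text{for every }R.
\]
But $F_R\downarrow\emptyset$ forces $\Prob(F_R)\to 0$, a contradiction.

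With that fix your proof is complete, and your first direction is sound. Two small points there. The cleanest way to get $\mu_m\le\mu_{m'}$ for $m<m'$ is to pass to the limit in $\int_{K_m}f\,\dd\Prob_j\le\int_{K_{m'}}f\,\dd\Prob_j$ for nonnegative $f\in C(K_{m'})$ and then use regularity. Step (iii) also needs $\int\phi\,\dd\mu_m\to\int\phi\,\dd\mu$, which holds because $\mu(\Z)-\mu_m(\Z)\to 0$.

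For comparison, the paper gives no proof and only cites Billingsley, whose Theorem 5.1 covers just the direction tight $\Rightarrow$ relatively compact. The converse is his Theorem 5.2, which assumes $\Z$ separable and complete and builds the compact set from finite covers by small balls. Your Heine--Borel shortcut replaces that covering argument and is legitimate here because $\Z$ is a closed subset of $\R^n$.
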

Note that the space $\Pset$ is metrizable, hence the notions of compactness and sequential compactness are equivalent \cite[Theorem 28.2]{munkres}.
\\
Now, we provide a topological characterization of Wasserstein ambiguity sets which was first presented in \cite[Theorem 1]{yue}.
\begin{theorem}[Weak compactness of Wasserstein balls] The Wasserstein ambiguity set $\mathbb{W}_\rho(\Prob)$ is weakly compact whenever the reference
measure has finite $p$-th moment.
\label{thm:weak_compactness_Wasserstein}
\end{theorem}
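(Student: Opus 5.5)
The plan is the usual two-step route: show that $\mathbb{W}_\rho(\Prob)$ is tight and weakly closed, then invoke Prokhorov's theorem. Since $\Pset$ is metrizable, weak sequential compactness coincides with weak compactness \cite[Theorem 28.2]{munkres}, so it is enough to prove that every sequence in $\mathbb{W}_\rho(\Prob)$ has a subsequence converging weakly to a point of $\mathbb{W}_\rho(\Prob)$. Throughout, $\mathcal{Z}$ is a closed subset of $\R^n$ with the Euclidean metric $d$. Hence closed bounded balls in $\mathcal{Z}$ are compact, and the cost is $c = d^p$.

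\emph{Tightness.} Fix $\hat z_0 \in \mathcal{Z}$ and set $M_p \doteq \E_{x\sim\Prob}[d(x,\hat z_0)^p] < \infty$, which is finite by the moment assumption. Take $\Q \in \mathbb{W}_\rho(\Prob)$. By \cref{lemma:existance} with $\epsilon = 0$ (or by taking a near-optimal coupling and adding a slack $\delta$), there is $\gamma \in \Gamma(\Prob,\Q)$ with $\E_\gamma[d(x,y)^p] \le \rho$. The triangle inequality together with convexity of $t\mapsto t^p$ gives $d(y,\hat z_0)^p \le 2^{p-1}\bigl(d(x,y)^p + d(x,\hat z_0)^p\bigr)$. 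Integrating against $\gamma$ yields the bound
\begin{equation*}
\E_{\Q}[d(y,\hat z_0)^p] \le 2^{p-1}(\rho + M_p) \doteq C',
\end{equation*}
which holds uniformly over the ball. Markov's inequality then gives $\Q\bigl(d(y,\hat z_0) > r\bigr) \le C'/r^p$. Choosing $r$ large and $K = \{z \in \mathcal{Z} : d(z,\hat z_0)\le r\}$, which is compact, shows that the family is tight. Prokhorov's theorem then says that every sequence $\{\Q_k\}$ in the ball has a subsequence converging weakly to some $\Q \in \Pset$.

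\emph{Closedness.} This is where I expect the real work to lie. I must show that the weak limit $\Q$ still satisfies $W_p(\Q,\Prob)^p \le \rho$. My first choice is to apply \cref{lemma: semicontinuity} with $\epsilon = 0$ and the constant sequence $\Prob_k = \Prob$, which gives $W_c^0(\Q,\Prob) \le \liminf_k W_c^0(\Q_k,\Prob) \le \rho$. If a self-contained argument is preferred, I would instead take optimal couplings $\gamma_k \in \Gamma(\Prob,\Q_k)$. These are tight because both marginal families are tight. A further subsequence then converges weakly to some $\gamma$, and $\gamma$ has marginals $\Prob$ and $\Q$ because projections are continuous. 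Since $c$ is lower semicontinuous and nonnegative, the Portmanteau theorem, applied to $c$ written as an increasing limit of bounded continuous functions, gives $\E_\gamma[c] \le \liminf_k \E_{\gamma_k}[c] \le \rho$.

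Hence $\Q \in \mathbb{W}_\rho(\Prob)$. Combined with the tightness step, every sequence in the ball has a weakly convergent subsequence with limit in the ball, which proves weak compactness.
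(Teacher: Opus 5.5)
Your proof is correct and follows essentially the same route as the result the paper relies on. The paper does not prove this statement itself but cites \cite[Theorem 1]{yue}, whose argument also goes through tightness from a uniform $p$-th moment bound and weak closedness via a weakly convergent subsequence of optimal couplings plus lower semicontinuity of the cost, combined by Prokhorov's theorem; this is the same coupling-limit step the paper adapts in its proof of \cref{lemma: semicontinuity}, so invoking that lemma with $\epsilon=0$ is not circular. Your restriction to the Euclidean metric is harmless: the steps only use symmetry and the triangle inequality, compactness of closed $d$-balls, and lower semicontinuity of $d^p$. These also hold for the metric with compact sublevel sets from Assumption~\ref{assum: transp_cost}, which the paper uses when invoking this result, provided $d^p$ is lower semicontinuous.
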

\section{Proofs for Section \ref{sec:distributional_uncertainty}}
\subsection{Proof of Proposition \ref{proposition:convexity}}
\label{appendix: convexity}
We first prove the convexity of the Sinkhorn discrepancy defining the $\mathcal{S}$-set in \eqref{eq:ambiguity_set}. We notice that \eqref{eq:sinkhorn} is equivalent to
\begin{equation*}
    \underbrace{%
    \inf_{\gamma \in \Gamma(\Prob, \Q)} \left\{ \mathbb{E}_\gamma[c(x, y)] + \epsilon\, \mathrm{KL}(\gamma \| \Prob \times \Q) \right\}%
    }_{(\heartsuit)}
    + \mathrm{KL}(\Q \| \nu)\,.
\end{equation*}
Hence, $W_c^\epsilon(\Prob, \Q)$ is convex in $\Q$ because sum of convex functions. Indeed, \cite[Proposition 1]{mallasto2022entropy} proved the convexity of $(\heartsuit)$, while the $\mathrm{KL}$ divergence is convex in its first argument. 

Now, consider two probability distributions in the $\mathcal{S}$-set, $\Q_1, \Q_2$ and a scalar parameter $\alpha\in(0,1)$. We show that $\alpha \Q_1 + (1-\alpha)\Q_2\in\B_{\rho, \epsilon}(\Prob)$. 
\begin{align*}
&W_c^{\epsilon}(\Prob, \alpha\Q_1+(1-\alpha)\Q_2) 
\\
&\leq \alpha W_c^{\epsilon}(\Prob, \Q_1)+(1-\alpha)W_c^{\epsilon}(\Prob, \Q_2)
\\
&\leq \alpha\rho + (1-\alpha)\rho = \rho\,,
\end{align*}
where the first inequality follows from the convexity of the Sinkhorn discrepancy.
\subsection{Proof of Lemma \ref{lemma:existance}}
\label{appendix: solvability}
By \cite[Corollary 3.16]{kuhn2025distributionally}, the set $\Gamma(\Prob, \Q)$ is weakly compact. In addition, the transportation cost function $c(x, y)$ is lower 
semicontinuous and bounded from below. Therefore, by \cite[Proposition 3.3]{kuhn2025distributionally}, the expected value $\E_\gamma[c(X, Y)]$ is weakly lower semicontinuous.
Moreover, by \cite[Theorem 3.6]{polyanskiy2014}, the KL divergence term in (\ref{eq:sinkhorn}) is lower semicontinous in $\gamma$ and has lower bound 
zero. Since the sum of two lower semicontinuous functions that are bounded from below remains lower semicontinuous and bounded from below, the Sinkhorn
optimal transport problem in (\ref{eq:sinkhorn}) is solvable thanks to Weierstrass' theorem, and its infimum is attained.
\subsection{Proof of Lemma \ref{lemma: semicontinuity}}
\label{appendix: semicontinuity}
Recall that, to prove the lemma, we aim to show that
\begin{equation}
\label{eq:liminf}
    \liminf_{j\rightarrow\infty}W_c^{\epsilon}(\Prob_j, \Q_j) 
    \geq W_c^{\epsilon}(\Prob, \Q)\,,
\end{equation}
for any two sequences of probability distributions $\Prob_j$ and $\Q_j$, indexed by $j\in\N$, that converge weakly to $\Prob$ and $\Q$, respectively. To do so, we (i) consider the optimal coupling $
\gamma_j^\star$ between $\Prob_j$ and $\Q_j$ and show that $\lim_j \gamma_j^\star$ yields a feasible but possibly suboptimal coupling $\gamma$ between $\Prob$ and $\Q$; and later (ii) we use such $\gamma$ to construct a lower bound to the limit inferior in \eqref{eq:liminf} which serves also as an upper bound to $W_c^\epsilon(\Prob, \Q)$, from which the lemma follows.
\newline
The proof of (i) is inspired by \cite[Theorem 1]{yue} and \cite[Theorem 5.2]{clement2008wasserstein}. We need to construct a converging sequence of optimal couplings. To do so, we first show that the union over $j\in\N$ of all joint couplings between $\Prob_j$ and $\Q_j$ is weakly compact. Indeed, consider the sets $\mathcal{P} = \{\Prob_j\}_{j\in\N}$ and $\mathcal{Q} = \{\Q_j\}_{j\in\N}$. Since we have assumed weak convergence of the previous sequences, each subsequence converges. Hence, the weak closures of such sets are weakly sequentially compact which in turns implies weak compactness. Prokhorov's theorem (see Appendix \ref{appendix}) implies that $\mathcal{P}, \mathcal{Q}$ are both tight. Hence, given any $\epsilon > 0$ there exist two compact sets $\mathcal{C}, \hat{\mathcal{C}} \subseteq \R^d$ such that $\forall j\in\N$
\begin{equation*}
    \Prob_j(Z\notin \mathcal{C}) \leq \epsilon/2 \quad \text{and}\quad \Q_j(Z\notin \hat{\mathcal{C}}) \leq \epsilon/2\,.
\end{equation*}
Next, we can show that
\begin{equation}
\label{eq:union}
    \bigcup_{j\in\N} \Gamma(\Prob_j, \Q_j)
\end{equation}
is also tight. Indeed, taken $\mathcal{C} \times \hat{C}$ compact and $\epsilon$ arbitrary, whenever $\gamma\in\Gamma(\Prob_j, \Q_j)$ for some $j\in\N$, we have
\begin{equation*}
    \gamma((Z, \hat{Z}) \notin \mathcal{C} \times \hat{C}) \leq  \Prob_j(Z\notin \mathcal{C}) +\Q_j(Z\notin \hat{\mathcal{C}}) \leq \epsilon\,. 
\end{equation*}
Again by Prokhorov's theorem the family of probability distributions in (\ref{eq:union}) is relatively compact, i.e. its closure is compact. 
\newline
Consider $\gamma_j^{\star}$ to be 
an optimal coupling of $\Prob_j$ and $\Q_j$ solving (\ref{eq:sinkhorn}) which we know it exists because of Lemma \ref{lemma:existance}. Since the ensemble of such
optimal couplings belongs to a weakly compact set, i.e. the weak closure of (\ref{eq:union}), we may assume without loss of generality that the sequence 
$\gamma^{\star}_j$, $j\in\N$, converges weakly to some distribution $\gamma$, otherwise we can pass to a subsequence which is guaranteed to converge 
by compactness. Then, $\gamma\in\Gamma(\Prob, \Q)$. Indeed, it can be shown that $\gamma$ has marginals $\Prob$ and $\Q$. Consider any given bounded and 
continuous function $g:\mathcal{Z}\rightarrow\R$, then
\setcounter{relctr}{0}
\begin{align*}
&\int_\mathcal{Z}g(z)d\Prob(z) 
\labelrel={marginal:first_eq}
\lim_{j\rightarrow\infty}\int_\mathcal{Z}g(z)d\Prob_j(z)
\\
&\labelrel={marginal:second_eq}
\lim_{j\rightarrow\infty}\int_\mathcal{Z\times\mathcal{Z}}g(z)d\gamma^{\star}_j(z, \hat{z})
\labelrel={marginal:third_eq}
\int_\mathcal{Z\times\mathcal{Z}}g(z)d\gamma(z, \hat{z})\,,
\end{align*}
where \eqref{marginal:first_eq} follows from the weak convergence of $\Prob_j$, \eqref{marginal:second_eq} because $\Prob_j$ is the first marginal of 
$\gamma_j^{\star}$ and finally \eqref{marginal:third_eq} because of weak convergence of $\gamma_j^{\star}$ to $\gamma$. The same holds for the second 
marginal $\Q$ of $\gamma$.
\newline
To prove (ii), consider $\gamma^{\star}$ an optimal coupling between $\Prob$ and $\Q$. Then, it holds
\begin{equation*}
\begin{split}
    &\liminf_{j\rightarrow\infty}W_c^{\epsilon}(\Prob_j, \Q_j) 
    \\
    &= \liminf_{j\rightarrow\infty} \left\{\E_{\gamma^{\star}_j}[c(z, \hat{z})] + \epsilon \mathrm{KL}(\gamma^{\star}_j|\Prob_j\times\Q_j)\right\}
    \\
    &\labelrel\geq{liminf:first_inequality} \liminf_{j\rightarrow\infty}\E_{\gamma^{\star}_j}[c(z, \hat{z})] + \epsilon\liminf_{j\rightarrow\infty}\mathrm{KL}(\gamma^{\star}_j|\Prob_j\times\Q_j) 
    \\
    &\labelrel\geq{liminf:second_inequality}\E_{\gamma}[c(z, \hat{z})] + \epsilon \mathrm{KL}(\gamma|\Prob\times\Q)
    \\
    &\labelrel\geq{liminf:third_inequality} \E_{\gamma^{\star}}[c(z, \hat{z})] + \epsilon \mathrm{KL}(\gamma^{\star}|\Prob\times\Q) = W_c^{\epsilon}(\Prob, \Q)\,,
\end{split}
\end{equation*}
where the two equalities follow from the definitions of $\gamma^{\star}_j$ and $\gamma^{\star}$, respectively. (\ref{liminf:first_inequality}) follows 
from the superadditivity of limit inferior; (\ref{liminf:second_inequality}) holds because $\E_{\gamma}[c(z, \hat{z})]$ is weakly lower semicontinuous
in $\gamma$ from \cite[Proposition 3.3]{kuhn2025distributionally} and the KL divergence term $\mathrm{KL}(\gamma|\Prob\times\Q)$ is lower semicontinuous thanks to \cite[Theorem 3.6]{polyanskiy2014}
where we have also used the fact that $\Prob_j \times \Q_j$ converges weakly to $\Prob\times\Q$ thanks to \cite[Theorem 2.8]{billingsley}. 
Finally, (\ref{liminf:third_inequality}) follows from the suboptimality of $\gamma$. This shows \eqref{eq:liminf} and thus $W_c^{\epsilon}(\Prob, \Q)$ is weakly
lower semicontinuous in $\Prob$ and $\Q$.
\subsection{Proof of Theorem \ref{thm: Sinkhorn_compactness}}
\label{appendix:theorem_compactness}
Recall that, to prove the theorem, we need to verify \cref{def:compactness}. However, it suffices to show that the $\mathcal{S}$-set is a closed subset of a compact space. To do so, we show (i) that the $\mathcal{S}$-set $\B_{\rho, \epsilon}(\Prob)$ is a subset of the compact Wasserstein ambiguity set $\mathbb{W}_{\rho^{1/p}}(\Prob)$, and (ii) that the $\mathcal{S}$-set is weakly closed. 
\newline
To prove (i), \cref{proposition: relationships} guarantees that $\B_{\rho, \epsilon}(\Prob)$ is always contained in
$\B_\rho(\Prob)$ for any $\epsilon \geq 0$ and given radius $\rho$. Moreover, 
Assumption \ref{assum: transp_cost}\ref{assum: c3} ensures that $OT_c \geq W_p^p $, which allows us to apply \cite[Proposition 2.5]{shafieezadeh2023new}
and conclude that the OT ambiguity set $\B_\rho(\Prob)$ is itself contained within the Wasserstein ambiguity set $\mathbb{W}_{\rho^{1/p}}(\Prob)$. Furthermore, we notice that the Wasserstein ball \( \mathbb{W}_{\rho^{1/p}}(\Prob) \) is weakly compact, as shown in Theorem \ref{thm:weak_compactness_Wasserstein} in Appendix \ref{appendix}. 
\newline
To prove (ii), we notice that the mapping $W_c^{\epsilon}(\cdot, \Prob)$ is lower semicontinuous with respect to the weak topology on $\mathcal{P}(\mathcal{Z})$ thanks to \cref{lemma: semicontinuity}. Hence, the ambiguity set $\B_{\rho, \epsilon}(\Prob)$ is weakly closed, being the sublevel set of a weakly lower semicontinuous mapping; see \cref{thm:rockafellar} in 
Appendix \ref{appendix}. 
We can conclude the proof because \( \B_{\rho, \epsilon}(\Prob) \) is a closed subset of the weakly compact set \( \mathbb{W}_{\rho^{1/p}}(\Prob) \), hence it is itself weakly compact by \cite[Theorem 26.2]{munkres}.
\section{System Level Synthesis}
\label{appendix:SLS}
In this section, we are going to revise the main concepts of the System-Level Synthesis (SLS) framework \cite{wang2019system}. This parametrization shifts the control synthesis problem from the direct design of the controller to the shaping of closed-loop maps from the exogenous disturbances to the state and input signals.

We start by rewriting the system's dynamics in \eqref{eq:system} over $T$ steps as
\begin{equation}
\label{eq:compact_system}
\x = Z\A\x + Z\Bmat\inp + \Emat\w, \quad
\y = \mathbf{C}\x + \mathbf{F}\w\, ,
\end{equation}
where $Z$ is the block downshift operator, that is, a block matrix with identity matrices in its first block sub-diagonal and zeros elsewhere, while $\A \doteq \text{diag}(A_0, \dots, A_{T-1}, 0_{d\times d})$, $\Bmat \doteq \text{diag}(B_0, \dots, B_{T-1}, 0_{d\times m})$, $\Emat \doteq \text{diag}(I_{d\times d}, E_0, \dots, E_{T-1})$, $\mathbf{C} \doteq \text{diag}(C_0, \dots, C_{T-1}, 0_{p\times d})$, and $\mathbf{F} \doteq \text{diag}(F_0, \dots, F_{T-1}, 0_{p\times q})$.

Let $\bm{\delta}_x = \Emat\w$ and $\bm{\delta}_y = \mathbf{F}\w$. Using \eqref{eq:compact_system} together with the feedback law \eqref{eq:compact_feedback_law}, the closed-loop behavior of the system
is characterized by the following relationships
\begin{equation}
    \label{eq:system_response}  
    \begin{bmatrix}\x\\ \inp\end{bmatrix} = \begin{bmatrix}\mathbf{\Phi}_{xx} & \mathbf{\Phi}_{xy}\\\mathbf{\Phi}_{ux} & \mathbf{\Phi}_{uy}\end{bmatrix}
    \begin{bmatrix}\bm{\delta}_x\\ \bm{\delta}_y\end{bmatrix} + \begin{bmatrix}\bm{\phi}_x\\ \bm{\phi}_u\end{bmatrix}\, ,
\end{equation}
with 
\begin{align}
\label{eq:maps}
\mathbf{\Phi}_{xx} &= (I - Z(\A +\Bmat\mathbf{KC}))^{-1}, 
&\!\!\!
\mathbf{\Phi}_{ux} &= \mathbf{KC} \mathbf{\Phi}_{xx},\nonumber
\\
\mathbf{\Phi}_{xy} &= \mathbf{\Phi}_{xx}Z\Bmat\mathbf{K},
&\!\!\!
\mathbf{\Phi}_{uy} &= \mathbf{K} + \mathbf{KC}\mathbf{\Phi}_{xx}Z\Bmat\mathbf{K},\nonumber
\\
\bm{\phi}_x &= \mathbf{\Phi}_{xx}Z\Bmat\mathbf{v}, &\!\!\!
\bm{\phi}_u &= \mathbf{KC}\bm{\phi}_x + \mathbf{v}\, .
\end{align}

With this reformulation the cost $J(\bm\pi, \w)$ with feedback law \eqref{eq:compact_feedback_law} becomes
\begin{equation}
\label{eq:costSLS}
    J(\mathbf{\Phi}, \bm\phi, \Tilde{\w}) = \bm\phi^\top D\bm\phi + \tilde{\w}^\top \mathbf{\Phi}^\top D\mathbf{\Phi}\tilde{\w}\,,
\end{equation}
with $\mathbf{\Phi} = \begin{bmatrix}\mathbf{\Phi}_{xx}\mathbf{E} & \mathbf{\Phi}_{xy}\mathbf{F}\\\mathbf{\Phi}_{ux}\mathbf{E} & \mathbf{\Phi}_{uy}\mathbf{F}\end{bmatrix}$, $\bm\phi = \begin{bmatrix}\bm{\phi}_x\\ \bm{\phi}_u\end{bmatrix}$, and $\tilde{\w} = \begin{bmatrix}
    \w\\\w
\end{bmatrix}$.

The goal is to directly optimize over the maps $\{\mathbf{\Phi}_{xx}, \mathbf{\Phi}_{xy}, \mathbf{\Phi}_{ux}, \mathbf{\Phi}_{uy}, \bm{\phi}_x, \bm{\phi}_u\}$ instead of $\bf K$ and $\bf v$. The following theorem allows us to do so.
\begin{theorem}
Given the system dynamics \eqref{eq:system} with affine output feedback law \eqref{eq:feedback law} over the time
horizon $t \in [T]$, the following are true:
\begin{enumerate}
    \item The affine subspace defined by 
    \begin{subequations}
    \label{eq:affine_subspace}
    \begin{align}
    &\begin{bmatrix}
        I - Z\A & -Z\Bmat 
    \end{bmatrix}
    \begin{bmatrix}
        \mathbf{\Phi}_{xx} & \mathbf{\Phi}_{xy}\\
        \mathbf{\Phi}_{ux} & \mathbf{\Phi}_{uy}
    \end{bmatrix} = \begin{bmatrix}
        I & 0\end{bmatrix},\label{eq:achievability1}\\
    &\begin{bmatrix}
        \mathbf{\Phi}_{xx} & \mathbf{\Phi}_{xy}\\
        \mathbf{\Phi}_{ux} & \mathbf{\Phi}_{uy}
    \end{bmatrix}
    \begin{bmatrix}
        I-Z\A\\
        -\mathbf{C}
    \end{bmatrix} = \begin{bmatrix}
        I\\
        0 \end{bmatrix}\label{eq:achievability2},
    \\
    &\begin{bmatrix}
         I - Z\A & -Z\Bmat
    \end{bmatrix}
    \begin{bmatrix}
        \bm{\phi}_{x}\\
        \bm{\phi}_{u}
    \end{bmatrix} = 0\label{eq:achievability3}\,,
\end{align}
\end{subequations}
parameterizes all possible system responses \eqref{eq:system_response}.
\item For any vectors $\{\bm{\phi}_x, \bm{\phi}_u\}$ and block lower-triangular matrices $\{\mathbf{\Phi}_{xx}, \mathbf{\Phi}_{xy}, \mathbf{\Phi}_{ux}, \mathbf{\Phi}_{uy}\}$ satisfying \eqref{eq:affine_subspace}, the affine controller in \eqref{eq:compact_feedback_law} with $\mathbf{K} = \mathbf{\Phi}_{uy} - \mathbf{\Phi}_{ux}\mathbf{\Phi}_{xx}^{-1}\mathbf{\Phi}_{xy}$ and $\mathbf{v} = \bm{\phi}_u - \mathbf{\Phi}_{uy}\mathbf{C}\bm{\phi}_x + \mathbf{\Phi}_{ux}\mathbf{\Phi}_{xx}^{-1}\mathbf{\Phi}_{xy}\mathbf{C}\bm\phi_x$ achieves the desired response.
\end{enumerate}
\end{theorem}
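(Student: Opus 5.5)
This is the standard SLS parametrization result of \cite{wang2019system}, specialized to the finite-horizon, output-feedback, affine-policy setting. I would prove it directly from the closed-loop expressions \eqref{eq:maps}, treating the linear part $(\mathbf{\Phi}_{xx},\mathbf{\Phi}_{xy},\mathbf{\Phi}_{ux},\mathbf{\Phi}_{uy})$ and the affine part $(\bm\phi_x,\bm\phi_u)$ separately. The argument has two directions: necessity (every achievable response satisfies \eqref{eq:affine_subspace}) and sufficiency (every block lower-triangular solution of \eqref{eq:affine_subspace} is realized by the stated $(\mathbf{K},\mathbf{v})$).

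\emph{Necessity.} Fix a causal $\mathbf{K}$ and $\mathbf{v}$. Since $Z$ is strictly block lower-triangular, $I-Z(\A+\Bmat\mathbf{KC})$ is invertible, so $\mathbf{\Phi}_{xx}$ in \eqref{eq:maps} is well defined. Substituting \eqref{eq:maps} into the left side of \eqref{eq:achievability1} gives, in the first block column,
\[
(I-Z\A)\mathbf{\Phi}_{xx}-Z\Bmat\mathbf{KC}\mathbf{\Phi}_{xx}=(I-Z\A-Z\Bmat\mathbf{KC})\mathbf{\Phi}_{xx}=I.
\]
The second block column becomes $(I-Z\A-Z\Bmat\mathbf{KC})\mathbf{\Phi}_{xx}Z\Bmat\mathbf{K}-Z\Bmat\mathbf{K}=0$. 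For \eqref{eq:achievability2} I would use the right-inverse identity $\mathbf{\Phi}_{xx}(I-Z\A-Z\Bmat\mathbf{KC})=I$, which holds because a two-sided inverse exists, and run the same algebra. Condition \eqref{eq:achievability3} follows similarly: $(I-Z\A)\bm\phi_x-Z\Bmat(\mathbf{KC}\bm\phi_x+\mathbf{v})=(I-Z\A-Z\Bmat\mathbf{KC})\bm\phi_x-Z\Bmat\mathbf{v}=0$. Block lower-triangularity of all the maps is inherited from causality of $\mathbf{K}$ and from $Z$ being a shift.

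\emph{Sufficiency.} Let maps satisfying \eqref{eq:affine_subspace} be given. From the first block of \eqref{eq:achievability2}, $\mathbf{\Phi}_{xx}(I-Z\A)-\mathbf{\Phi}_{xy}\mathbf{C}=I$, so $\mathbf{\Phi}_{xx}$ is block lower-triangular with identity diagonal blocks and therefore invertible. Hence $\mathbf{K}=\mathbf{\Phi}_{uy}-\mathbf{\Phi}_{ux}\mathbf{\Phi}_{xx}^{-1}\mathbf{\Phi}_{xy}$ is well defined and causal. I would then plug this $\mathbf{K}$ into \eqref{eq:maps} and show the resulting maps coincide with the given ones. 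The quickest route is to show that $I-Z(\A+\Bmat\mathbf{KC})$ equals $\mathbf{\Phi}_{xx}^{-1}$, using \eqref{eq:achievability1} to write $Z\Bmat\mathbf{\Phi}_{ux}=(I-Z\A)\mathbf{\Phi}_{xx}-I$ and $Z\Bmat\mathbf{\Phi}_{uy}=(I-Z\A)\mathbf{\Phi}_{xy}$, and \eqref{eq:achievability2} to eliminate $\mathbf{\Phi}_{xy}\mathbf{C}$. Once $\mathbf{\Phi}_{xx}$ is matched, the other three linear maps follow by substitution into \eqref{eq:maps}, again using \eqref{eq:achievability1}--\eqref{eq:achievability2}.

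\emph{Affine part.} With the linear maps matched, the response's affine term solves $(I-Z(\A+\Bmat\mathbf{KC}))\bm\phi_x'=Z\Bmat\mathbf{v}$. I would substitute the stated $\mathbf{v}$ and use \eqref{eq:achievability3} to verify that $\bm\phi_x'=\bm\phi_x$, and then that $\mathbf{KC}\bm\phi_x+\mathbf{v}=\bm\phi_u$. The latter reduces to an algebraic identity once $\mathbf{K}$ is expanded.

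The main obstacle is the block-matrix bookkeeping in the sufficiency step, in particular establishing $\mathbf{\Phi}_{xx}^{-1}=I-Z(\A+\Bmat\mathbf{KC})$ and then the affine consistency with the somewhat unusual expression for $\mathbf{v}$. If the direct substitution becomes unwieldy, I would fall back on the standard argument of \cite{wang2019system}: rewrite the closed loop as the linear system $\begin{bmatrix} I-Z\A & -Z\Bmat\\ -\mathbf{KC} & I\end{bmatrix}\begin{bmatrix}\x\\ \inp\end{bmatrix}=\begin{bmatrix}\bm\delta_x\\ \mathbf{K}\bm\delta_y+\mathbf{v}\end{bmatrix}$ and verify uniqueness of its solution.
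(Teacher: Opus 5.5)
Your proposal is correct and follows essentially the same route as the paper. Part 1 substitutes \eqref{eq:maps} into \eqref{eq:affine_subspace}, and part 2 shows $I-Z(\A+\Bmat\mathbf{K}\mathbf{C})=\mathbf{\Phi}_{xx}^{-1}$ before matching the remaining maps and the affine terms; the paper uses $\mathbf{\Phi}_{uy}\mathbf{C}=\mathbf{\Phi}_{ux}(I-Z\A)$ where you use $Z\Bmat\mathbf{\Phi}_{uy}=(I-Z\A)\mathbf{\Phi}_{xy}$, an immaterial difference.

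One small fix concerns the identity diagonal blocks of $\mathbf{\Phi}_{xx}$. They follow from the first block of \eqref{eq:achievability1}, because $Z\A\mathbf{\Phi}_{xx}$ and $Z\Bmat\mathbf{\Phi}_{ux}$ are strictly block lower-triangular. They do not follow from \eqref{eq:achievability2} alone, whose diagonal only gives $(\mathbf{\Phi}_{xx})_{tt}-(\mathbf{\Phi}_{xy})_{tt}C_t=I$.
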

\begin{proof}
Proof of part 1. Let $\mathbf{K}$ be any block lower-triangular operator and $\bf v$ any real vector. We verify \eqref{eq:achievability1} by using the definitions of $\mathbf{\Phi}_{xx}$ and $\mathbf{\Phi}_{ux}$ from \eqref{eq:maps}
\begin{equation*}
\begin{aligned}
&(I\!\! -\!\! Z\A)(I\! -\! Z(\A\! +\! \mathbf{BKC}))^{-1}\!\! -\!\! Z\Bmat\mathbf{KC}(I\!\! -\!\! Z(\A\! +\! \mathbf{BKC}))^{-1}
\\
&= (I - Z\A - Z\mathbf{BKC})(I - Z(\A + \mathbf{BKC}))^{-1} = I\,,
\end{aligned}
\end{equation*}
and by using the definitions of $\mathbf{\Phi}_{xy}$ and $\mathbf{\Phi}_{uy}$
\begin{equation*}
\begin{aligned}
&(I - Z\A)(I - Z(\A + \mathbf{BKC}))^{-1}Z\Bmat\mathbf{K} - Z\Bmat\mathbf{K} -
\\
&Z\Bmat \mathbf{KC}(I - Z(\A + \mathbf{BKC}))^{-1}Z\mathbf{BK} = Z\Bmat\mathbf{K} - Z\Bmat\mathbf{K} = 0.
\end{aligned}
\end{equation*}
In a similar way we verify \eqref{eq:achievability2}. Using the definitions of $\mathbf{\Phi}_{xx}$ and $\mathbf{\Phi}_{xy}$,  one obtains
\begin{equation*}
\begin{aligned}
&(I\!\! - \!\!Z(\A\!\! +\!\! \mathbf{BKC}))^{-1}(I\!\! - \!\!Z\A)\! -\! 
(I\!\! -\!\! Z(\A\!\! +\!\! \mathbf{BKC}))^{-1}Z\Bmat\bf K\mathbf{C}
\\
&= (I - Z(\A+\mathbf{BKC}))^{-1} (I - Z(\A + \mathbf{BKC}))= I\, ,
\end{aligned}
\end{equation*}
and with the definitions of $\mathbf{\Phi}_{ux}$ and $\mathbf{\Phi}_{uy}$
\begin{equation*}
\begin{aligned}
&\mathbf{KC}(I - Z(\A + \mathbf{BKC}))^{-1}(I - Z\A) - \mathbf{KC} +
\\
&-\mathbf{KC}(I - Z(\A + \mathbf{BKC}))^{-1}Z\Bmat\mathbf{KC} = \mathbf{ KC - KC} = 0\, .
\end{aligned}
\end{equation*}
Finally, we show the validity of \eqref{eq:achievability3}. Specifically, given the definitions of $\bm{\phi}_x$ and $\bm{\phi}_u$ in \eqref{eq:maps} we can write
\begin{equation*}
\begin{aligned}
&(I - Z\A)(I - Z(\A + \mathbf{BKC}))^{-1}Z\mathbf{Bv} - Z\mathbf{Bv} -
\\
&Z\Bmat\mathbf{KC}(I - Z(\A + \mathbf{BKC}))^{-1}Z\mathbf{Bv} = Z\mathbf{Bv} - Z\mathbf{Bv} = 0\,.
\end{aligned}
\end{equation*}
Proof of part 2. We first note that $\mathbf{\Phi}_{xx}$ is a lower block-triangular matrix with identities along its diagonal, and thus
invertible. We want to show that the affine control policy $\bf u = Ky + v$ with $\mathbf{K} = \mathbf{\Phi}_{uy} - \mathbf{\Phi}_{ux}\mathbf{\Phi}_{xx}^{-1}\mathbf{\Phi}_{xy}$ and $\mathbf{v} = \bm{\phi}_u - \mathbf{\Phi}_{uy}\mathbf{C}\bm{\phi}_x + \mathbf{\Phi}_{ux}\mathbf{\Phi}_{xx}^{-1}\mathbf{\Phi}_{xy}\mathbf{C}\bm\phi_x$ achieves the desired closed-loop responses. Specifically, substituting the expression of $\bf K$ in $\mathbf{\Phi}_{xx}$ and $\mathbf{\Phi}_{ux}$ in \eqref{eq:maps} we get
\begin{align*}
&(I - Z\A -Z\Bmat\mathbf{\Phi}_{uy}\mathbf{C} 
    + Z\Bmat\mathbf{\Phi}_{ux}\mathbf{\Phi}_{xx}^{-1}\mathbf{\Phi}_{xy}\mathbf{C})^{-1} \labelrel{=}{myeq:first_equality}
\\
& (I - Z\A -Z\Bmat\mathbf{\Phi}_{ux}(I-Z\A - \mathbf{\Phi}_{xx}^{-1}\mathbf{\Phi}_{xy}\mathbf{C}))^{-1}\labelrel{=}{myeq:second_equality}  
\\
&(I - Z\A -Z\Bmat\mathbf{\Phi}_{ux} \mathbf{\Phi}_{xx}^{-1})^{-1}\labelrel{=}{myeq:third_equality} 
\\
&(((I - Z\A)\mathbf{\Phi}_{xx} -Z\Bmat\mathbf{\Phi}_{ux}) \mathbf{\Phi}_{xx}^{-1})^{-1} 
\labelrel{=}{myeq:fourth_equality} \mathbf{\Phi}_{xx}\,, \\[1em]
&\mathbf{\Phi}_{uy}\mathbf{C}\mathbf{\Phi}_{xx} 
   - \mathbf{\Phi}_{ux}\mathbf{\Phi}_{xx}^{-1}\mathbf{\Phi}_{xy}\mathbf{C}\mathbf{\Phi}_{xx}\labelrel{=}{myeq:first_equality_new}
\\
&\mathbf{\Phi}_{ux}(I-Z\A)\mathbf{\Phi}_{xx} 
- \mathbf{\Phi}_{ux}\mathbf{\Phi}_{xx}^{-1}\mathbf{\Phi}_{xy}\mathbf{C}\mathbf{\Phi}_{xx}\labelrel{=}{myeq:fifth_equality}
\\
&\mathbf{\Phi}_{ux}(I-Z\A - \mathbf{\Phi}_{xx}^{-1}\mathbf{\Phi}_{xy}\mathbf{C})\mathbf{\Phi}_{xx} \labelrel{=}{myeq:second_equality_new} \mathbf{\Phi}_{ux}\,,
\end{align*}
where in \eqref{myeq:first_equality} and \eqref{myeq:first_equality_new} we used the relation $\mathbf{\Phi}_{uy}\mathbf{C} = \mathbf{\Phi}_{ux}(I-Z\A)$ from \eqref{eq:achievability2}; in \eqref{myeq:second_equality} and \eqref{myeq:second_equality_new} we used the equation $I-Z\A - \mathbf{\Phi}_{xx}^{-1}\mathbf{\Phi}_{xy}\mathbf{C} = \mathbf{\Phi}_{xx}^{-1}$ derived from \eqref{eq:achievability2}; in \eqref{myeq:third_equality} we factored out $\mathbf{\Phi}_{xx}^{-1}$ while in \eqref{myeq:fifth_equality} $\mathbf{\Phi}_{ux}$ on the left and $\mathbf{\Phi}_{xx}$ on the right; finally, \eqref{myeq:fourth_equality} follows using $(I-Z\A)\mathbf{\Phi}_{xx} - Z\Bmat \mathbf{\Phi}_{ux} = I$ from \eqref{eq:achievability1}.

Similarly, for $\mathbf{\Phi}_{xy}$ we get:
\begin{align*}
    &\mathbf{\Phi}_{xx}Z\Bmat\mathbf{\Phi}_{uy} - \mathbf{\Phi}_{xx}Z\Bmat\mathbf{\Phi}_{ux}\mathbf{\Phi}_{xx}^{-1}\mathbf{\Phi}_{xy}\labelrel={eq4:first}
    \\
    &\mathbf{\Phi}_{xx}(I-Z\A - Z\Bmat\mathbf{\Phi}_{ux}\mathbf{\Phi}_{xx}^{-1})\mathbf{\Phi}_{xy} \labelrel={eq4:second} \mathbf{\Phi}_{xy}\,,
\end{align*}
where in \eqref{eq4:first} we used the relation $Z\Bmat \mathbf{\Phi}_{uy} = (I-Z\A)\mathbf{\Phi}_{xy}$ from \eqref{eq:achievability1} and we factored out $\mathbf{\Phi}_{xx}$ on the left and $\mathbf{\Phi}_{xy}$ on the right; \eqref{eq4:second} instead uses $I-Z\A - Z\Bmat\mathbf{\Phi}_{ux}\mathbf{\Phi}_{xx}^{-1} =  \mathbf{\Phi}_{xx}^{-1}$ derived from \eqref{eq:achievability1}. 

We conclude the linear part by verifying $\mathbf{\Phi}_{uy}$. Specifically:
\begin{align*}
\label{eq:Phi_uy}
&\mathbf{\Phi}_{uy} -\mathbf{\Phi}_{ux}\mathbf{\Phi}_{xx}^{-1}\mathbf{\Phi}_{xy} + 
\\
& (\mathbf{\Phi}_{uy}\mathbf{C} - \mathbf{\Phi}_{ux}\mathbf{\Phi}_{xx}^{-1}\mathbf{\Phi}_{xy}\mathbf{C})\mathbf{\Phi}_{xx} Z\Bmat\mathbf{\Phi}_{uy} -
\\
&(\mathbf{\Phi}_{uy}\mathbf{C} - \mathbf{\Phi}_{ux}\mathbf{\Phi}_{xx}^{-1}\mathbf{\Phi}_{xy}\mathbf{C})\mathbf{\Phi}_{xx}Z\Bmat
\mathbf{\Phi}_{ux}\mathbf{\Phi}_{xx}^{-1}\mathbf{\Phi}_{xy} \labelrel{=}{eq5:first} 
\\
& \mathbf{\Phi}_{uy} -\mathbf{\Phi}_{ux}\mathbf{\Phi}_{xx}^{-1}\mathbf{\Phi}_{xy} +
\\
&\mathbf{\Phi}_{ux}(I\!\!-\!\!Z\A\!\! -\! \mathbf{\Phi}_{xx}^{-1}\mathbf{\Phi}_{xy}\mathbf{C})\mathbf{\Phi}_{xx}(I\!\!-\!\!Z\A\!\! -\! Z\Bmat \mathbf{\Phi}_{ux}\mathbf{\Phi}_{xx}^{-1})\mathbf{\Phi}_{xy}\labelrel{=}{eq5:second} 
\\
&\mathbf{\Phi}_{uy} -\mathbf{\Phi}_{ux}\mathbf{\Phi}_{xx}^{-1}\mathbf{\Phi}_{xy} +\mathbf{\Phi}_{ux}\mathbf{\Phi}_{xx}^{-1}\mathbf{\Phi}_{xy} = \mathbf{\Phi}_{uy}\,,
\end{align*}
where in \eqref{eq5:first} we used $\mathbf{\Phi}_{uy}\mathbf{C} = \mathbf{\Phi}_{ux}(I-Z\A)$ from \eqref{eq:achievability2} and $Z\Bmat \mathbf{\Phi}_{uy} = (I-Z\A)\mathbf{\Phi}_{xy}$ from \eqref{eq:achievability1}; while in \eqref{eq5:second} we used the relations $I-Z\A - Z\Bmat\mathbf{\Phi}_{ux}\mathbf{\Phi}_{xx}^{-1} =  \mathbf{\Phi}_{xx}^{-1}$ derived from \eqref{eq:achievability1} and $I-Z\A - \mathbf{\Phi}_{xx}^{-1}\mathbf{\Phi}_{xy}\mathbf{C} = \mathbf{\Phi}_{xx}^{-1}$ derived from \eqref{eq:achievability2}.
We finish the proof by substituting $\mathbf{v} = \bm{\phi}_u - \mathbf{\Phi}_{uy}\mathbf{C}\bm{\phi}_x + \mathbf{\Phi}_{ux}\mathbf{\Phi}_{xx}^{-1}\mathbf{\Phi}_{xy}\mathbf{C}\bm\phi_x$ in \eqref{eq:maps}:
\begin{align*}
    &\mathbf{\Phi}_{xx}Z\Bmat\bm\phi_x - \mathbf{\Phi}_{xx}Z\Bmat(\mathbf{\Phi}_{uy}\mathbf{C}\bm\phi_x + \mathbf{\Phi}_{ux}\mathbf{\Phi}_{xx}^{-1}\mathbf{\Phi}_{xy}\mathbf{C})\bm\phi_x\labelrel{=}{affine:first} 
    \\
    &\mathbf{\Phi}_{xx}(I\!-\!Z\A)\bm\phi_x\! -\! \mathbf{\Phi}_{xx}Z\Bmat\mathbf{\Phi}_{ux}(I\!-\!Z\A\!-\!\mathbf{\Phi}_{xx}^{-1}\mathbf{\Phi}_{xy}\mathbf{C})\bm\phi_x\labelrel{=}{affine:second}
    \\
    &\mathbf{\Phi}_{xx}(I - Z\A - Z\Bmat\mathbf{\Phi}_{ux}\mathbf{\Phi}_{xx}^{-1})\bm\phi_x \labelrel{=}{affine:third} \bm\phi_x\,,
    \\[1em] 
    &\begin{multlined}[t]
        \mathbf{\Phi}_{uy}\mathbf{C}\bm{\phi}_x - \mathbf{\Phi}_{ux}\mathbf{\Phi}_{xx}^{-1}\mathbf{\Phi}_{xy}\mathbf{C}\bm\phi_x + \bm{\phi}_u +
        \\
        - \mathbf{\Phi}_{uy}\mathbf{C}\bm{\phi}_x + \mathbf{\Phi}_{ux}\mathbf{\Phi}_{xx}^{-1}\mathbf{\Phi}_{xy}\mathbf{C}\bm\phi_x = \bm{\phi}_u\,,
    \end{multlined}
\end{align*}
where in \eqref{affine:first} we used $(I-Z\A)\bm\phi_x = Z\Bmat\bm\phi_u$ from \eqref{eq:achievability3} and $\mathbf{\Phi}_{uy}\mathbf{C} = \mathbf{\Phi}_{ux}(I-Z\A)$ from \eqref{eq:achievability2}; in \eqref{affine:second} we factored out $\mathbf{\Phi}_{xx}$ on the left and $\bm\phi_x$ on the right and used $I-Z\A - \mathbf{\Phi}_{xx}^{-1}\mathbf{\Phi}_{xy}\mathbf{C} = \mathbf{\Phi}_{xx}^{-1}$ derived from \eqref{eq:achievability2}; finally, \eqref{affine:third} uses $I-Z\A - Z\Bmat\mathbf{\Phi}_{ux}\mathbf{\Phi}_{xx}^{-1} =  \mathbf{\Phi}_{xx}^{-1}$ derived from \eqref{eq:achievability1}.
\end{proof}

\section{Proofs for \texorpdfstring{\cref{sec:optimal_control}}{Section IV}}
\subsection{Proof of Proposition \ref{proposition: DR_Sinkhorn_CVaR}}
\label{appendix: DR_Sinkhorn_CVaR}
To prove the Proposition we need a technical result that is given in the following lemma.
\begin{lemma}
\label{lemma: technical}
Consider $J\in\N$ functions $f_1, \ldots, f_J:\R^d\rightarrow\R$ and a scalar function $g:\R\rightarrow\R$. Then, $\forall x\in\R^d$,
\begin{equation*}
    g(f_j(x)) \leq 0,\ \forall j\in[J] \implies g\left(\max_{i\in[J]} f_i(x)\right) \leq 0\,.
\end{equation*}
Moreover, if the function $g$ is monotonically non-decreasing, $\forall x\in\R^d$ we have
\begin{equation*}
    g\left(\max_{i\in[J]} f_i(x)\right) \leq 0 \implies g(f_j(x)) \leq 0,\ \forall j\in[J]\,.
\end{equation*}
\end{lemma}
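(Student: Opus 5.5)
Both implications follow directly from the fact that the maximum of finitely many reals is attained by one of them. Fix $x\in\R^d$ and write $F(x) = \max_{i\in[J]} f_i(x)$. Since $[J]$ is finite and nonempty, there is an index $i^\star\in[J]$, depending on $x$, such that $F(x) = f_{i^\star}(x)$. Both parts of the lemma reduce to this observation.

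For the first implication, assume $g(f_j(x))\le 0$ for every $j\in[J]$. In particular this holds for $j=i^\star$. Then
\[
g(F(x)) = g\bigl(f_{i^\star}(x)\bigr) \le 0 .
\]
No property of $g$ is used here, only that $g$ is evaluated at a value that coincides with one of the $f_j(x)$.

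For the second implication, assume $g$ is monotonically non-decreasing and $g(F(x))\le 0$. For any $j\in[J]$ we have $f_j(x)\le F(x)$ by definition of the maximum. Monotonicity then gives
\[
g\bigl(f_j(x)\bigr) \le g\bigl(F(x)\bigr) \le 0 .
\]
This holds for every $j\in[J]$, which is the claim.

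There is no real obstacle. The only points to state explicitly are the finiteness of $[J]$, which guarantees the maximum is attained in the first part, and that monotonicity of $g$ is needed only in the second part.
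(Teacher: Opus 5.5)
Your proof is correct and follows the paper's argument. For the second implication you use exactly the same monotonicity chain, $g(f_j(x)) \le g(\max_{i} f_i(x)) \le 0$. For the first implication, which the paper dismisses as trivial, you supply the natural justification that the maximum over the finite index set is attained by some $f_{i^\star}(x)$.
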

\begin{proof}
The first implication is trivial. We now prove the second one. For any $x\in\R^d$, $\max_{i\in[J]} f_i(x) \geq f_j(x),\ \forall j\in[J]$. Therefore, if
we consider the non-decreasing function $g$ we have $g\left(\max_{i\in[J]} f_i(x)\right) \geq g(f_j(x)),\ \forall j\in[J]$ because $g$ preserves the ordering.
Then, $g\left(\max_{i\in[J]} f_i(x)\right) \leq 0$ implies $g(f_j(x)) \leq 0,\ \forall j\in[J]$. This concludes the proof.
\end{proof}
Using the CVaR definition (\ref{eq:CVaR}) and defining $\alpha_j := a_j/\gamma$ and $\beta_j(\tau) := (b_j + \gamma\tau -\tau)/\gamma$ for $j\in[J+1]$ with 
$a_{J+1} = 0$ and $b_{J+1} = \tau$, the left hand side of the constraint in (\ref{eq:DR_Sinkhorn_CVaR}) can be written as
\begin{equation*}
     \sup_{\Q\in\B_{\rho,\epsilon}(\Prob)}\ \inf_{\tau\in\R}\ \E_\Q\left[\max_{j\in[J+1]} \alpha_j^\top x + \beta_j(\tau)\right]\,.
\end{equation*}
From Theorem \ref{thm: Sinkhorn_compactness} and Proposition \ref{proposition:convexity} we know that the $\mathcal{S}$-set is weakly compact and 
convex. The argument of the expectation is an affine function of $\tau$ while the expectation is linear in the probability $\Q$ hence the 
optimization problem satisfies the assumptions of Sion's minmax theorem \cite{Sion1958}. Therefore, we can interchange the order of 
maximization and minimization without affecting the optimality of the solution obtaining
\begin{equation*}
   \inf_{\tau\in\R}\ \sup_{\Q\in\B_{\rho,\epsilon}(\Prob)}\ \E_\Q\left[\max_{j\in[J+1]} \alpha_j^\top x + \beta_j(\tau)\right]\,.
\end{equation*}
We now focus on the inner supremum and derive a dual problem (for simplicity we drop the dependency on $\tau$). We consider the strong dual reformulation (\ref{eq:dual}) with 
quadratic transport cost and piecewise affine loss function. Condition \eqref{eq:feasibility_CVaR} can be derived analogously as in \cite[Lemma 1]{cescon2025data}. In this setup, the dual program becomes
\begin{align*} 
    \inf\ &\sigma \rho + \frac{1}{n} \sum_{i=1}^n \tilde{s}_i\\
    &\text{s.t.}\quad \forall i\in [n]:\ \tilde{s}_i\in\R, \sigma\in\R_+\\
    &\sigma\epsilon \log\E_{z \sim \nu}\left[e^{\max_{j\in[J+1]}\ (\alpha_j^\top z + \beta_j - \sigma\|\hat{\xi}_i-z\|^2)/(\sigma\epsilon)}\right]\! \leq\! \tilde{s}_i\,,
\end{align*}
where the term $\|\hat{\xi}_i-z\|^2$ does not depend on $j$ and can be taken inside the maximization.
Using \cref{lemma: technical} we can argue that the maximization over the $J+1$ pieces can be dropped if the inequality constraint is satisfied 
for all $j\in[J+1]$. This is true since it can be readily verified that the function $g(\cdot) = \log\E_{\nu}[e^{(\cdot)}]$ is monotonically non-decreasing.
Therefore, we can solve the expectation over a single affine piece which rewrites as
\begin{equation*}
\begin{aligned}
&\E_{z \sim \nu}\left[e^{(\alpha_j^\top z + \beta_j - \sigma\|\hat{\xi}_i - z\|^2)/(\sigma\epsilon)}\right]=
\\
&
\begin{multlined}[t]
    C_d^{-1}\int_{\R^d}\exp\biggl\{\left(\alpha_j^\top z + \beta_j - \sigma(z - \hat{\xi}_i)^{\top}(z - \hat{\xi}_i)\right)/(\sigma\epsilon)
    \\
    -\frac{1}{2}(z-m)^\top \Sigma^{-1}(z-m)\biggr\}\dd\lambda^d(z)=
\end{multlined}
\\
&
\begin{aligned}
   \kappa\exp\biggl\{\! \biggl[\star^{\top}\! &\left(4\sigma I + 2\sigma\epsilon\Sigma^{-1}\right)^{-1}\!
    \left(\alpha_j + 2\sigma\hat{\xi}_i + \sigma\epsilon\Sigma^{-1}m\right)
    \\
    & + \beta_j- \sigma \|\hat{\xi}_i\|^2 - \frac{\sigma\epsilon}{2}\|m\|^2_{\Sigma^{-1}}\biggr]/(\sigma\epsilon)\biggr\}\,,
    \end{aligned}
\end{aligned}
\end{equation*}
where $\kappa = \frac{\epsilon^{d/2}}{\sqrt{|2\Sigma + \epsilon I|}}$. 

Considering the logarithm of the previous expression scaled by $\sigma\epsilon$ we get
\begin{equation*}
    \begin{aligned}
    \frac{\sigma \epsilon d}{2}\log\epsilon - \frac{\sigma\epsilon}{2}\log|2\Sigma + \epsilon I| + \beta_j- \sigma \|\hat{\xi}_i\|^2 
    - \frac{\sigma\epsilon}{2}\|m\|^2_{\Sigma^{-1}}
    \\
    + \star^{\top}\! \left(4\sigma I + 2\sigma\epsilon\Sigma^{-1}\right)^{-1}\! \left(\alpha_j + 2\sigma\hat{\xi}_i + \sigma\epsilon\Sigma^{-1}m\right)\,.
    \end{aligned}
\end{equation*}

Adopting the change of variables $s_i = \tilde{s}_i - \frac{\gamma-1}{\gamma}\tau$ and using the Schur's complement, the optimization program rewrites
$\forall j\in[J+1]$ as 
\begin{align*} 
    &\inf\ \sigma \rho + \frac{\gamma-1}{\gamma}\tau + \frac{1}{n} \sum_{i=1}^n s_i
    \\
    &\ \text{s.t.}\ \forall i\in [n]:\ s_i\in\R,\sigma\in\R_+
    \\
    &\begin{bmatrix}
        4\sigma I + \sigma\epsilon\Sigma^{-1} &\frac{a_j}{\gamma} + 2\sigma \hat{\xi}_i + \sigma\epsilon\Sigma^{-1}m
        \\
        \star & s_i - \zeta - \frac{b_j}{\gamma} + \sigma\|\hat{\xi}_i\|^2 + \frac{\sigma\epsilon}{2}\|m\|^2_{\Sigma^{-1}}
    \end{bmatrix}\! \succeq\! 0\,,
\end{align*}
with $\zeta = \frac{\sigma\epsilon d}{2}\log\epsilon - \frac{\sigma\epsilon}{2}\log|2\Sigma + \epsilon I|$.
Finally, the result of Proposition \ref{proposition: DR_Sinkhorn_CVaR} follows noticing that $\inf \cdot \leq 0$ constraints are equivalent to existance constraints.
\subsection{Detailed Calculations for \texorpdfstring{\cref{remark: Wasserstein_CVaR}}{Remark 1}}
\label{appendix:remark_proof}
From \cite[Theorem 8]{kuhn2019wasserstein}, we can derive the expression for the DR CVaR constraint (\ref{eq:DR_Sinkhorn_CVaR}) 
when the ambiguity set is defined as a Wasserstein ball, i.e. when $\epsilon\rightarrow 0$. Indeed, the constraint is equivalent to the following convex set:
\begin{equation}
    \label{eq:Wasserstein_CVaR}
    \begin{aligned}
    &\forall i \in [n], \forall j\in[J+1]:\\
    &\left\{
        \begin{aligned}
        & \tau\in\R, \sigma\in\R_+, s_i\in\R\\ 
        &\sigma \rho + \frac{\gamma - 1}{\gamma}\tau+\frac{1}{n} \sum_{i=1}^n s_i \leq 0\\
        &\frac{1}{\gamma}(b_j + a_j^\top \hat{\xi}_i) + \frac{\|a_j\|^2}{4\sigma\gamma^2} \leq s_i\,,
        \end{aligned}
    \right.
    \end{aligned}
\end{equation}
with $a_{J+1} = 0$ and $b_{J+1} = \tau$.

If we focus on the convex set of Proposition \ref{proposition: DR_Sinkhorn_CVaR} when $\epsilon\rightarrow 0$, the first inequality matches the one in \eqref{eq:Wasserstein_CVaR} while the LMI constraint becomes
\begin{equation*}
    \begin{bmatrix}
        4\sigma I  &\frac{a_j}{\gamma} + 2\sigma \hat{\xi}_i\\
        \star & s_i - \frac{b_j}{\gamma} + \sigma\|\hat{\xi}_i\|^2
    \end{bmatrix} \succeq 0\,.
\end{equation*}
With Schur's complement we can rewrite the LMI as
\begin{equation*}
    s_i - \frac{b_j}{\gamma} + \sigma\|\hat{\xi}_i\|^2 - \frac{1}{4\sigma}\left(\frac{a_j}{\gamma} + 2\sigma\hat{\xi}_i\right)^\top
    \left(\frac{a_j}{\gamma} + 2\sigma\hat{\xi}_i\right) \geq 0\,,
\end{equation*}
and, after some simple algebraic manipulations, we obtain
\begin{equation*}
    s_i - \frac{b_j}{\gamma} - \frac{a_j^\top}{\gamma}\hat{\xi}_i - \frac{\|a_j\|^2}{4\sigma\gamma^2} \geq 0\,,
\end{equation*}
which is equivalent to the second inequality in \eqref{eq:Wasserstein_CVaR}.
\subsection{Dual problem for quadratic cost and loss function}
\label{appendix:lemma}
The following Lemma was derived in \cite{cescon2025data} and it is used to prove \cref{thm:main}
\begin{lemma}
\label{thm: duality}
Under the same assumptions of Theorem \ref{thm:main}, if $\ell(z) = z^\top Qz + 2q^\top z$ with $Q\in\Symm^d$, $q\in\R^d$ then problem (\ref{eq:worst-case risk}) is feasible if and only if condition \eqref{eq:feasibility_condition_control} holds
and the optimal value of (\ref{eq:worst-case risk}) coincides with the optimal value of the following convex optimization problem
\begin{align}
\label{eq:convex program}
\inf\ &\lambda\rho + \frac{1}{n}\sum_{i=1}^n s_i
\\
\textrm{s.t.}\ & s_i\in\R, \lambda\in\R_+, \lambda\left(I+\frac{\epsilon}{2}\Sigma^{-1}\right) \succ Q, \quad \forall i\in[n]
\nonumber\\
& \frac{\lambda\epsilon d}{2}\log\left(\!\frac{\lambda\epsilon}{2}\!\right)\! -\! \frac{\lambda\epsilon}{2}\log|\Sigma|\!-\! \frac{\lambda\epsilon}{2}\log\left|\lambda \left(I \!+\! \frac{\epsilon}{2}\Sigma^{-1}\!\right)\! -\! Q\right|+
\nonumber
\\
&+ \star^{\top}\left(\lambda\left(I\! +\!\frac{\epsilon}{2}\Sigma^{-1}\right)\!-\!Q\right)^{-1}\!\left(q+\lambda\left(\hat{\xi}_i+\frac{\epsilon}{2}\Sigma^{-1}m\right)\right)+
\nonumber
\\
&- \lambda \|\hat{\xi}_i\|^2 - \frac{\lambda\epsilon}{2}\|m\|^2_{\Sigma^{-1}}\leq s_i\,.\nonumber
\end{align} 
\end{lemma}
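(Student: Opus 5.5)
The plan is to start from the strong dual \eqref{eq:dual} established in \cite{sinkhorn} under Assumption~\ref{assumption}. All its items hold here: $\nu$ is a Gaussian probability measure, $c(x,y)=\|x-y\|^2$ is finite, $\ell$ is continuous, and $\hat{\Prob}$ is empirical. We then evaluate the inner log-moment-generating function in closed form for the quadratic loss $\ell(z)=z^\top Qz+2q^\top z$. Since $\hat{\Prob}=\frac1n\sum_i\delta_{\hat{\xi}_i}$, the outer expectation becomes an average over $i\in[n]$. For fixed $\lambda>0$ and each $i$ we must compute
\begin{equation*}
\E_{z\sim\nu}\Bigl[e^{(z^\top Qz+2q^\top z-\lambda\|z-\hat{\xi}_i\|^2)/(\lambda\epsilon)}\Bigr].
\end{equation*}
Multiplying by the Gaussian density and collecting terms, the exponent equals $-\tfrac{1}{\lambda\epsilon}z^\top M z+\tfrac{2}{\lambda\epsilon}v_i^\top z+\mathrm{const}_i$, where
\begin{equation*}
M=\lambda\Bigl(I+\tfrac{\epsilon}{2}\Sigma^{-1}\Bigr)-Q,\qquad v_i=q+\lambda\Bigl(\hat{\xi}_i+\tfrac{\epsilon}{2}\Sigma^{-1}m\Bigr),
\end{equation*}
and $\mathrm{const}_i=-\|\hat{\xi}_i\|^2/\epsilon-\tfrac12\|m\|^2_{\Sigma^{-1}}$.

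The integral is finite if and only if $M\succ0$. Otherwise the dual objective is $+\infty$, which is how the constraint $\lambda(I+\frac{\epsilon}{2}\Sigma^{-1})\succ Q$ arises. When $M\succ0$ we apply the standard formula
\begin{equation*}
\int e^{-\frac12 z^\top Az+b^\top z}\,\dd z=(2\pi)^{d/2}|A|^{-1/2}e^{\frac12 b^\top A^{-1}b},
\end{equation*}
with $A=2M/(\lambda\epsilon)$ and $b=2v_i/(\lambda\epsilon)$, and divide by $C_d=\sqrt{(2\pi)^d|\Sigma|}$. Taking the logarithm and multiplying by $\lambda\epsilon$ gives
\begin{equation*}
\tfrac{\lambda\epsilon d}{2}\log\bigl(\tfrac{\lambda\epsilon}{2}\bigr)-\tfrac{\lambda\epsilon}{2}\log|\Sigma|-\tfrac{\lambda\epsilon}{2}\log|M|+v_i^\top M^{-1}v_i-\lambda\|\hat{\xi}_i\|^2-\tfrac{\lambda\epsilon}{2}\|m\|^2_{\Sigma^{-1}}.
\end{equation*}
This is exactly the left-hand side of the constraint in \eqref{eq:convex program}. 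Introducing epigraph variables $s_i$ for each of these $n$ terms turns $\inf_\lambda\{\lambda\rho+\frac1n\sum_i(\cdot)\}$ into \eqref{eq:convex program}.

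For convexity in $(\lambda,s)$:
\begin{itemize}
\item $\lambda\log\lambda$ is convex.
\item $-\frac{\lambda\epsilon}{2}\log|M|$ is, up to the linear term $\frac{\lambda\epsilon d}{2}\log\lambda$, the perspective of the convex map $-\log|I+\frac{\epsilon}{2}\Sigma^{-1}-Q/\lambda|$. Combined with the $\lambda\log\lambda$ term this gives a convex function.
\item $v_i^\top M^{-1}v_i$ is a matrix-fractional function of an affine pair $(v_i,M)$, hence jointly convex.
\end{itemize}

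For the feasibility statement we use that the primal problem \eqref{eq:worst-case risk} is feasible exactly when $\inf_{\Q}W_c^\epsilon(\hat{\Prob},\Q)\le\rho$. Dropping the second-marginal constraint, the minimizing coupling is the Gibbs kernel $\dd\gamma_x(z)\propto e^{-c(x,z)/\epsilon}\dd\nu(z)$. Its value is
\begin{equation*}
-\epsilon\,\E_{x\sim\hat{\Prob}}\log\E_{z\sim\nu}\bigl[e^{-\|x-z\|^2/\epsilon}\bigr],
\end{equation*}
which is the same Gaussian integral with $Q=0$, $q=0$, $\lambda=1$. Evaluating it yields exactly the right-hand side of \eqref{eq:feasibility_condition_control}. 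Equivalently, this is the threshold at which the dual objective stays bounded below as $\lambda\to\infty$.

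The step I expect to be the main obstacle is rigorously handling the boundary cases of the duality. These are $\lambda=0$, the loss being unbounded above (for $Q\not\preceq0$) so that the worst case is $+\infty$ unless some $\lambda$ makes $M\succ0$, and the case where $\rho$ equals the feasibility threshold exactly. In each case one must check that the dual value coincides with the primal value, including when both are $+\infty$, rather than only for interior $\lambda$. I would handle $\lambda=0$ by interpreting the term $\lambda\epsilon\log\E[e^{\ell/(\lambda\epsilon)}]$ as its limit, the $\nu$-essential supremum of $\ell$, which is $+\infty$ unless $Q\preceq0$ and $q=0$. I would then invoke the strong duality theorem of \cite{sinkhorn} as stated for the remaining cases.
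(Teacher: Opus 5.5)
Your proposal is correct and follows essentially the route the paper relies on (it defers to \cite{cescon2025data}, and the same computation appears in the proof of Proposition~\ref{proposition: DR_Sinkhorn_CVaR}): strong duality \eqref{eq:dual}, the closed-form Gaussian integral with $M=\lambda(I+\frac{\epsilon}{2}\Sigma^{-1})-Q$, epigraph variables $s_i$, and the Gibbs-kernel value $-\epsilon\,\E_{x\sim\hat{\Prob}}\log\E_{z\sim\nu}[e^{-\|x-z\|^2/\epsilon}]$ as the feasibility threshold. There are two harmless slips. First, $\frac{\lambda\epsilon d}{2}\log\lambda$ is not linear; it cancels exactly against the $\lambda\log\lambda$ part of $\frac{\lambda\epsilon d}{2}\log(\frac{\lambda\epsilon}{2})$, leaving a linear term plus a perspective as in the proof of Proposition~\ref{prop:convexity_problem} (note that $-\frac{\lambda\epsilon}{2}\log|M|$ alone is concave for $Q=0$). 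Second, $\sup_z \ell(z)<\infty$ whenever $Q\preceq 0$ and $q$ lies in the range of $Q$, not only when $q=0$; for instance, for $Q\prec0$ it equals $-q^\top Q^{-1}q$, which is exactly what the program returns at $\lambda=0$.
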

\subsection{Proof of Proposition \ref{prop:convexity_problem}}
\label{app:proof_convexity}
The constraints \eqref{eq:M>0}, \eqref{eq:LMI2}, \eqref{eq:CVaR2} $\forall i\in[N]$, and \eqref{eq:LMI1 Schur} are linear matrix inequalities and the achievability constraints on $\mathbf{\Phi}, \bm\phi$ are affine; hence they are convex. Moreover, \eqref{eq:CVaR1} is affine in the optimization variables hence convex. We focus now on the non-linear constraint \eqref{eq:logdet inequality}. This constraint is linear in $s_i$ and $\zeta_i$. We proceed to show that the non-linear part $\frac{\lambda\epsilon s}{2}\log\left(\frac{\lambda\epsilon}{2}\right) - \frac{\lambda\epsilon}{2}\log\left|\lambda \left(I + \frac{\epsilon}{2}\Sigma^{-1}\right) - Q\right|$ of \eqref{eq:logdet inequality} is jointly convex in $(\lambda, Q)$ for every $\epsilon\geq 0$. To do so, we define the functions $h:\mathbb{S}^s\rightarrow\R$ and $T:\mathbb{S}^s\rightarrow\mathbb{S}^s$ as follows:
\begin{equation*}
h(Q) = \frac{\epsilon s}{2}\log\left(\frac{\epsilon}{2}\right)- \frac{\epsilon}{2}\log |Q|\,, \quad T(Q) = I + \frac{\epsilon}{2}\Sigma^{-1} - Q\,.
\end{equation*}
We note that $h$ is convex in $Q$ because the log-determinant of a matrix is a concave function \cite{boyd2004convex}. Similarly, $T$ is affine and therefore convex in $Q$. Hence, the function $g:\mathbb{S}^s\rightarrow\R$, 
\[g(Q) = h(T(Q)) = \frac{\epsilon}{2}\log\left(\frac{(\frac{\epsilon}{2})^s}{\left|I + \frac{\epsilon}{2}\Sigma^{-1} - Q\right|}\right)\,,\]
is convex because composition of an affine and a convex function. We then note that the non-linear part of \eqref{eq:logdet inequality} can be rewritten as
\begin{equation*}
    f(\lambda, Q) = \frac{\lambda\epsilon}{2}\log\left(\frac{\left(\frac{\lambda\epsilon}{2}\right)^s}{\left|\lambda \left(I + \frac{\epsilon}{2}\Sigma^{-1}\right) - Q\right|}\right)\,,
\end{equation*}
and that $f(\lambda, Q) = \lambda g(Q/\lambda)$ is the perspective of the function $g$, see \cite[Section 3.2.6]{boyd2004convex}. Since the perspective of a convex function is also convex, the proof is concluded.

\end{document}